\newcommand{\p}[1]{\left({#1}\right)}
\newcommand{\ps}[1]{\left[{#1}\right]}
\newcommand{\pb}[1]{\left\{{#1}\right\}}
\newcommand{\set}[1]{\left\{#1\right\}}
\newcommand{\beq}[1]{\begin{equation}\label{#1}}
\newcommand{\eeq}{\end{equation}}
\newcommand{\bledit}[1]{#1}
\newcommand{\SCedit}[1]{#1}
\newcommand{\opt}{\ensuremath{\textrm{OPT}}}
\newcommand{\under}{\ensuremath{\textsc{Under}}}
\newcommand{\single}{\ensuremath{\textsc{Single}}}
\renewcommand{\over}{\ensuremath{\textsc{Over}}}
\newcommand{\surplus}{\ensuremath{\textsc{Surplus}}}
\newcommand{\nonfav}{\ensuremath{\textsc{NonFavorite}}}
\newcommand{\vw}{\ensuremath{\textsc{VW}}}
\newcommand{\core}{\ensuremath{\textsc{Core}}}
\newcommand{\tail}{\ensuremath{\textsc{Tail}}}
\newcommand{\efrev}{\ensuremath{\mathsf{EF}\text{-}\mathsf{Rev}}}
\newcommand{\bb}{\textbf{b}}
\newcommand{\EA}{\ensuremath{\textsc{EA}}}
\newcommand{\REA}{\mathsf{rand}-\EA}
\newcommand{\GEA}{\mathsf{ghost}-\EA}
\newcommand{\SP}{\ensuremath{\textsc{SP}}}
\newcommand{\ESP}{\ensuremath{\textsc{ESP}}}
\newcommand{\EFP}{\ensuremath{\textsc{EFP}}}
\newcommand{\RFP}{\mathsf{rand}-\EFP}
\newcommand{\RAP}{\mathsf{rand}-\EAP}
\newcommand{\GFP}{\mathsf{ghost}-\EFP}
\newcommand{\EAP}{\ensuremath{\textsc{EAP}}}
\newcommand{\GAP}{\mathsf{ghost}-\EAP}
\newcommand{\PP}{\ensuremath{\textsc{PP}}}
\newcommand{\spr}{\ensuremath{\textsc{SSP}}}
\newcommand{\FP}{\ensuremath{\textsc{FP}}}
\newcommand{\SFP}{\ensuremath{\textsc{SFP}}}
\newtheorem{theorem}{Theorem}
\newtheorem{mydef}[theorem]{Definition}
\newtheorem{cor}[theorem]{Corollary}
\newtheorem{lemma}[theorem]{Lemma}
\newtheorem*{blankthm}{\Cref{thm:main}}
\newtheorem*{blanklemma}{\Cref{lem:contCai}}
\newtheorem*{blanklemma2}{\Cref{lem:rand-ea-eq}}
\newtheorem*{blankdef}{\Cref{def:MPP}}
\newtheorem*{blankcor}{\Cref{cor:first-price}}
\numberwithin{theorem}{section}
\newcommand{\cB}{\mathcal{B}}
\newcommand{\cC}{\mathcal{C}}
\newcommand{\cF}{\mathcal{F}}
\newcommand{\cR}{\mathcal{R}}
\newcommand{\cU}{\mathcal{U}}
\newcommand{\cZ}{\mathcal{Z}}
\newcommand{\Rev}{\mathsf{Rev}}
\newcommand{\Wel}{\mathsf{Wel}}
\newcommand{\Util}{\mathsf{Util}}
\newcommand{\EE}{\mathbb{E}}
\newcommand{\RR}{\mathbb{R}}
\title[Multi-item Non-truthful Auctions Achieve Good Revenue]{Multi-item Non-truthful Auctions Achieve Good Revenue}
\author{Constantinos Daskalakis}
\affiliation{\institution{Massachusetts Institute of Technology}}
\email{costis@csail.mit.edu}
\author{Maxwell Fishelson}
\affiliation{\institution{Massachusetts Institute of Technology}}
\email{maxfish@mit.edu}
\author{Brendan Lucier}
\affiliation{Microsoft Research}
\email{brlucier@microsoft.com}
\author{Vasilis Syrgkanis}
\affiliation{Microsoft Research}
\author{Santhoshini Velusamy}
\affiliation{Harvard University}
\email{svelusamy@harvard.edu}
\begin{abstract}

 \hspace{8pt} We present a general framework for designing approximately revenue-optimal mechanisms for multi-item additive auctions, which applies to both truthful and non-truthful auctions. 
 Given a (not necessarily truthful) single-item auction format $A$ satisfying certain technical conditions, we run simultaneous item auctions augmented with a personalized entry fee for each bidder that must be paid before the auction can be accessed.
 These entry fees depend only on the prior distribution of bidder types, and in particular are independent of realized bids.
 We bound the revenue of the resulting two-part tariff mechanism using a novel geometric technique that enables revenue guarantees for many common non-truthful auctions that previously had none. Our approach adapts and extends the duality framework of Cai et al \cite{CDW16} beyond truthful auctions.
 
 Our framework can be used with many common auction formats, such as simultaneous first-price, simultaneous second-price, and simultaneous all-pay auctions.  
 Our results for first price and all-pay are the first revenue guarantees of non-truthful mechanisms in multi-dimensional environments, addressing an open question in the literature \cite{roughgarden2017price}.
 If all-pay auctions are used, we prove that the resulting mechanism is also credible in the sense that the auctioneer cannot benefit by deviating from the stated mechanism after observing agent bids.  This is the first static credible mechanism for multi-item additive auctions that achieves a constant factor of the optimal revenue.  If second-price auctions are used, we obtain a truthful $O(1)$-approximate mechanism with fixed entry fees that are amenable to tuning via online learning techniques. 

\end{abstract}
\begin{document}

\maketitle
\pagebreak
\tableofcontents

\pagebreak
\section{Introduction}
The first-price auction is the most common auction format in practice.  In the display ads market (the most significant usage of auctions in modern commerce), every major exchange utilizes first price mechanisms.  While the second-price auction guarantees optimal social welfare, the winner-pays-bid dynamic of first-price offers transparency in payment and, above all else, simplicity.  Until this paper, though, there was no guarantee that this auction achieved revenue comparable to the optimal when selling multiple items.

Multi-item auction revenue has been the subject of intense focus in the recent algorithmic mechanism design literature.  Much of this work explores an inherent tradeoff between mechanism optimality and simplicity.  Indeed, even when valuations are additive and independent across items, revenue-optimal mechanisms are known to be highly complex: they can require lotteries~\cite{Than04,Pavlov11,DDT13}, can exhibit non-monotone revenue~\cite{HP15}, and can be computationally difficult to compute~\cite{DDT14}.  This has motivated a relaxation of revenue-optimality, leading to a search for simple and robust auction formats that approximate the Bayesian optimal revenue; see e.g.~\cite{chawla2007algorithmic,chawla2010multi,Chawla2015,BILW14,yao2014n,chawla2016mechanism,CDW16,cai2017simple,cai2017learning,CaiZ19}.  This search has culminated in a line of work establishing that approximately optimal revenue can be obtained through the better of separate item auctions and two-part tariff auctions~\cite{yao2014n,chawla2016mechanism,cai2017simple}, wherein bidders are asked to pay an entry fee for the chance to bid on individual items.  Thus, by randomly selecting and running one of these two auctions, approximately optimal revenue can be achieved with a relatively simple mechanism.

In the way of simplicity though, this line of work leaves several major stones unturned. Firstly, none of these results have been able to establish revenue guarantees for winner-pays-bid auctions, or any simple non-truthful auctions for that matter.  \bledit{For example, the two-part tariff auctions described above assume that items are sold via second-price auctions after the entry fee(s) have been paid.  On the other hand, a modified version of the first price auction that adds a reserve price (or minimum bid) has been shown to achieve approximately optimal revenue in the single-item setting \cite{HHT14}.  This enables us to say that the better of separate reserve-first-price auctions and two-part tariff auctions (with second-price auctions post-entry-fee) achieves good revenue.  However, generalizing the two-part tariff side of these results to winner-pays-bid mechanisms
is not straightforward, in large part because of the
complexity of the bidding equilibria in 
winner-pays-bid
auctions.}  Additionally, 
in all prior works, these two-part tariff mechanisms are either dynamic, requiring multiple rounds of communication with the agents \cite{chawla2016mechanism,cai2017simple,cai2017learning},
\footnote{In particular, these works provide guarantees for {\em sequential posted price mechanisms with entry fees}, wherein bidders interact with the mechanism in sequence. Each bidder is shown an entry fee 
and individual item prices 
and is asked to pay the entry fee for the chance of purchasing any of the available items at its posted price.  Of course, any such mechanism has a static direct-revelation implementation (i.e., that solicits bids and simulates the sequential mechanism), but such an implementation sacrifices credibility.}
or involve entry fees that are not posted in advance. Rather, in these mechanisms, the fee presented to each agent is a function of the submitted bids of all other agents \cite{yao2014n,CDW16}. In the former case, the multi-round nature of the mechanisms can present implementation difficulties that static mechanisms bypass; as noted in \cite{AL18}, static mechanisms can be conducted rapidly and asynchronously, which yields several implementation benefits, supported by empirical evidence \cite{Athey2009}. In the latter case, since the entry fees are opaque functions of other player reports, the connection with the colloquial notion of ``entry fee'' is arguable.

We study a framework that addresses all of these concerns, establishing revenue guarantees for a large class of static mechanisms with \emph{fixed} entry fees, posted in advance to all agents.  Our framework takes the form of a general construction: given an arbitrary (not necessarily truthful) single-item auction format, our mechanism proceeds by first posting a fixed entry fee to each agent, independent of the realized bids and determined only based on the prior distributions from which values are to be drawn.  The mechanism then sells each item separately using the provided single-item auction format, where only agents who paid their respective entry fees are allowed to participate.  Any agents who did not pay the entry fee are instead simulated by the mechanism, and any items won by such simulated agents are discarded.  These ``ghost bidders'' ensure that the equilibria of the simultaneous single-item auctions are unaffected by the realization of which agents pay their respective entry fees.  We show that if the provided auction satisfies a certain ``type-loss tradeoff property'' --- which essentially states that agents with higher types are sufficiently more likely to have higher allocations --- then the better of this mechanism or separate Myerson auctions will obtain a constant approximation to the optimal revenue at equilibrium.

In this paper, we demonstrate that many common single-item auction formats satisfy the type-loss tradeoff property, enabling us to apply our framework on a wide variety of mechanisms, addressing many of the aforementioned unturned stones.  Notably, we prove that the first-price auction satisfies the type-loss tradeoff property.  This establishes that randomly selecting between an entry-fee first-price auction and separate reserve-first-price auctions achieves approximately optimal revenue.  This is the first multi-dimensional revenue approximation result in the literature that is based solely on winner-pays-bid mechanisms.

Moreover, other applications of our framework address many of the other problems that come with ex-post entry fee two-part tariff mechanisms.  One such problem comes from the perspective of online learning.  For ex-post tariff mechanisms, entry fees are functions of the opponent's bids and the bidder valuation distributions.  The space of possible mappings from player reports to entry fees is immense, and so when valuation distributions are not known a priori, the online learning of the optimal such mapping can be an arduous task.  In contrast, our entry-fee second price mechanism $\ESP$ exhibits a fixed entry fee. 
Assuming that the bidders play the truthful equilibrium, we show that the auctioneer can learn this single parameter in a computationally efficient manner, and achieve average revenue that is a $28$-approximation to the optimal revenue, less a vanishing regret that decays as $\tilde{O}(n\, m\, T^{-1/3})$ after $T$ rounds, with $n$ bidders and $m$ items.\footnote{We note that a slight refinement of our arguments easily establishes a $14$-approximation to the optimal revenue.  This improves upon the previously best-known approximation factor of $32$ achievable via PAC learning techniques~\cite{cai2017learning}, albeit the two results are not directly comparable. Their worse factor accommodates irregular distributions, but we show vanishing regret in the online learning setting.} 

Another problem that comes from ex-post entry fees is the issue of credibility. An auction is credible if the auctioneer cannot benefit by deviating from the prescribed auction rules in a way that is unilaterally undetectable through the auction's communication protocol~\cite{AL18}. When all bids are provided to the mechanism in advance of fees being declared, the auctioneer is able to deceive the bidders.  After observing the bids, she can compute the utilities of the bidders in the auction.  If, for example, some particular bidder had a utility of \$1400, she could charge that bidder a \$1399 entry fee.  From the bidder's perspective, this is suspiciously convenient for the auctioneer, and she might wonder if this entry fee truly was a function purely of the other bids.  However, there is little they can do to investigate this suspicion.  The benefit of posting fixed entry fees in advance is that there is no room for deception here.  And in fact we show, when we instantiate our framework using all-pay auctions (and a minor modification), the resulting entry-fee mechanism ($\GAP$) is \emph{credible} and establishes the first multi-dimensional revenue approximation result for a credible mechanism.

\subsection{Our Techniques: Duality and Type-loss Tradeoffs}  
The main technical contribution of this work lies in the proofs that first-price and other non-truthful auctions satisfy what is called the type-loss tradeoff property.  These arguments rely on a novel geometric technique and offer new insights on the efficiency loss in non-truthful auctions that could be of independent interest.  Roughly, the welfare lost in a single-item first-price or all-pay auction can be achieved as revenue of a posted price mechanism (see \Cref{lem:fp-type-loss} and \Cref{lem:ap-type-loss}).  This property enables us to apply our framework, transforming these single-item auctions into multi-item two-part tariff auctions with good revenue.

The proof of the framework (\Cref{thm:main}) draws heavily from techniques established in~\cite{CDW16}.  In that work, Cai et al.~extend the notion of virtual value to the multi-item setting.  They define it via a partition of the type space that is in terms of the ex post utilities bidders receive in a hypothetical second price auction.  
Another big contribution of this paper is to redefine this partition in terms of \emph{interim} utility rather than ex post.  The benefit of our technique is that it can be applied to more general auction formats.  This could be, for example, first price or all pay, as well as second price.  This generalization ends up being possible because, while ex post utility is increasing in type for second price auctions specifically, interim utility is increasing in type for all auctions. Moreover, the second key idea is that if we define these regions appropriately in terms of the interim utilities achieved from each item $j$, if that item was to be sold in isolation by some single-item auction $A$ and under some equilibrium $b^j$, then we can upper bound the resulting multi-dimensional virtual value, in terms of the revenue achieved by an entry fee auction, where bidders pay a fixed entry fee in order to participate in a simultaneous $A$ auction for each item. 

To prove this theorem, we perform a decomposition of the multidimensional virtual value in multiple terms, in a manner similar to prior work \cite{CDW16,BILW14}, and bound each of these terms by the revenue of a simple auction. Unlike prior work, when we consider a general auction class $A$ (e.g., first price or all-pay), it comes with the added difficulty of bounding a specific term in this decomposition that relates to the types of bidders that would lose in the hypothetical $A$ auction.  The extent to which we can bound this term for different $A$ is what is described by the ``type-loss tradeoff property'' of auction $A$.  While the type of a bidder who does not win a second price auction is easy to capture through the revenue of a simple auction, the same cannot be said for first price and all pay auctions.  In these auctions, we often have the highest type bidder not winning, and generally, we cannot capture the highest type through revenue.  Our main technical contribution is the proof of a duality whereby both of these worst cases cannot co-exist.  First price and all pay auctions can only misallocate with a high frequency when the expected highest type is attainable through some revenue, and the expected highest type is unattainable when first price achieves almost optimal welfare.

\subsection{Related Work}


There has been a recent flurry of results on approximately optimal mechanisms for buyers with multi-dimensional types.  As discussed above, simple constant approximations are known for additive buyers with independent valuations~\cite{BILW14,yao2014n,CDW16}.  For unit-demand buyers, one can likewise obtain a constant approximation to the optimal mechanism with multiple buyers~\cite{chawla2007algorithmic,chawla2010multi,Chawla2015}.  The ideas behind these mechanisms have since been extended to more general valuation classes, including XOS and subadditive valuations~\cite{cai2017simple,rubinstein2018simple,chawla2016mechanism}.  A common theme in many of these mechanisms is the combination of entry fees (or bundle prices, for a single agent) and per-item auctions or prices.

The above line of work focuses on Bayesian incentive compatible mechanisms.  Less is known on approximating optimal revenue with non-truthful auctions.  Hartline, Hoy, and Taggart~\cite{HHT14} develop a framework for bounding the fraction of optimal revenue obtained at equilibrium in various single-item auction formats, such as first-price and all-pay auctions.  Our analysis of the type-loss trade-off for different auction formats shares  inspiration from their equilibrium analysis, as well as from the literature bounding the welfare of equilibria in first-price auctions~\cite{syrgkanis2013composable,hoy2019improved}. 

Our online learning results for entry-fee mechanisms with second-price auctions relate to a recent literature on the sample complexity of approximately optimal multi-item auctions.  Morgenstern and Roughgarden~\cite{morgenstern2016learning} presented a statistical learning theory approach to bounding the sample complexity of different classes of simple and approximately optimal auctions.  Their approach bounds the pseudo-dimension of different auction classes, but this does not directly imply a polynomial sampling complexity bound for independent additive valuations.  Goldner and Karlin~\cite{goldner2016prior} showed that for bidders with independent additive valuations drawn from regular distributions, one can learn an approximately optimal auction using only a single sample from each bidder's distribution.  Cai and Daskalakis~\cite{cai2017learning} extend this result to non-regular distributions and a broad class of non-additive valuations, by showing that a sequential posted pricing mechanism with entry fees yields approximately optimal revenue and has polynomial sample complexity. All of these works focus on learning from valuation samples in incentive compatible mechanisms. While we likewise restrict our attention to the sample complexity of an incentive compatible mechanism in our framework, ours is an online learning process.  Learning from samples under equilibrium play in non-IC mechanisms is a more subtle task; see Hartline and Taggart~\cite{Hartline2019Sample} for a recent treatment and development in the context of single-parameter types.

Recently and independently, Ferreira and Weinberg~\cite{ferreiraforce} considered the design of credible and incentive compatible single-item auctions.  They show how to design efficient and strategyproof auctions using cryptographic primitives.  We focus on multi-item auctions and show that by relaxing incentive compatibility, it is possible to design non-truthful credible mechanisms without the use of cryptographic primitives.

\section{Mechanism Design Preliminaries and Notation}\label{sec:prelims}

We consider multi-item sealed-bid auctions with $n$ additive bidders and $m$ indivisible items. Each bidder $i$'s valuation/type for each item $j$ is drawn independently from a continuous distribution $D_{ij}$, supported on type space $T_{ij} \subseteq [0, H]$, for some constant $H$ and which admits a continuous bounded density. We will refer to the latter type of distributions as continuous, throughout the paper. The type distributions are common knowledge to the bidders and the auctioneer. The value of a player for a bundle $S$ is the sum of the values for each item in $S$. We will be using the shorthand notation $T_i = \times_j T_{ij}$, $T_{-i} = \times_{i^* \ne i} T_{i^*}$, $T= \times_iT_i$; and analogously we can define $D_i$, $D_{-i}$, $D$.

Each bidder $i$ observes their type $t_i = (t_{i1},\cdots,t_{im})$ and chooses an action $a_i$ (e.g. a bid to submit or a total contingency plan over a multi-round auction).  The auction maps the action profile $a = (a_1,\cdots,a_n)$ to ex-post feasible allocations $x(a) = (x_1(a),\cdots,x_n(a))$ and payments $p^*(a) = (p^*_1(a),\cdots,p^*_n(a))$; where $x_i(a) = (x_{i1}(a),\cdots,x_{im}(a))$ is a vector whose $j$-th entry $x_{ij}(a)\in [0, 1]$ represents the probability of bidder $i$ being allocated item $j$. An allocation is feasible if $\sum_i x_{ij}(a) \leq 1$.
Bidder's have quasi-linear utility: the utility obtained by a bidder is the value they get from the items they receive minus the payment they must make. Thus, the ex-post utility of bidder $i$ is:
$$u^*_{i}(a) = \sum_{j} x_{ij}(a) \cdot t_{ij} - p^*_i(a) = \langle x_i(a), t_i\rangle - p_i^*(a)$$

Since we are also considering non-truthful auctions, we need to define the notion of a Bayes-Nash equilibrium. A \emph{bid strategy} $b = (b_1,\cdots,b_n)$ is a collection of mappings $b_i$ from types $t_i$ to actions $a_i$.\footnote{We restrict attention to pure bid strategies for simplicity. All our results extend to mixed equilibria.}  A bid strategy forms a \emph{pure Bayesian Nash Equilibrium} (BNE), if each bidder has no incentive to deviate conditional on her observed type $t_i$:
$$\EE_{t_{-i} \sim D_{-i}} [u^*_i(b_i(t_i),b_{-i}(t_{-i}))] \geq \EE_{t_{-i} \sim D_{-i}} [u^*_i(a_i',b_{-i}(t_{-i}))] \quad \forall t_i, a_i'$$
Given a BNE $b$, we define the \emph{interim} utilities $u$, allocations $\pi$, and payments $p$ as:
\begin{align*}
    u^b_i(t_i) &= \EE_{t_{-i} \sim D_{-i}} [u^*_i(b_i(t_i),b_{-i}(t_{-i}))] &
    \pi^b_i(t_i) &= \EE_{t_{-i} \sim D_{-i}} [x_i(b_i(t_i),b_{-i}(t_{-i}))]\\
    p^b_i(t_i) &= \EE_{t_{-i} \sim D_{-i}} [p^*_i(b_i(t_i),b_{-i}(t_{-i}))]
\end{align*}
For an auction $A=(x,p^*)$ with bid equilibrium $b$, we define the total expected equilibrium utility, welfare and revenue as:
\begin{align*}
    \Util^b(A) &= \sum_i \EE_{t_i \sim D_i} \ps{u^b_i(t_i)} & 
    \Wel^b(A) &= \sum_i \EE_{t_i \sim D_i} \ps{\langle\pi^b_i(t_i),t_i\rangle} &
    \Rev^b(A) &= \sum_i \EE_{t_i \sim D_i} \ps{p^b_i(t_i)}
\end{align*}
When describing the utility/welfare/revenue of a truthful auction, we will omit the superscript $b$ and assume we are discussing the truthful equilibrium.

\section{Revenue Approximation via Entry-Fee Simultaneous Auctions}

Our goal is to bound the revenue achievable via simultaneous (potentially non-truthful) item-auctions with an entry fee. In this section we will consider a general class of item auctions and define the condition that leads to constant factor revenue guarantees. In the subsequent sections, we will instantiate our analysis to particular item auctions. 

\subsection{Entry-Fee Separate Item Auction Mechanisms}

Let $A$ be an arbitrary single item auction, with allocation and payment rules $(x_A, p_A^*)$. Throughout we will assume that players always have an action in $A$ that gives them zero utility, so that all equilibria $b_A$ of $A$ are \emph{interim individually rational}, i.e. $u_i^{b_A}(t_i)\geq 0$ for all types $t_i$.  The base of our multi-item auction mechanisms will consist of selling the items separately via $A$ auctions.  However, to achieve any reasonable approximation to revenue in multi-dimensional settings, we need to allow for our mechanisms to impose bundle prices. We achieve this by augmenting our simultaneous item auctions with an entry fee. That is, we sell items separately via $A$ auctions, but charge an initial fee $e_i$ to each bidder $i$, that grants them access to the auctions.\\

\begin{algorithm}[htpb]
	\SetAlgoNoLine
	\KwIn{A single-item auction $A=(x_A, p_A^*)$.}
	\KwIn{For each bidder $i$: an entry fee $e_i$.}
	Each bidder $i$ submits a pair $(z_i, a_i)$ of the decision $z_i\in \{0, 1\}$ to enter the auction or not, and the bid vector $a_i$ to submit to the auctions, if they choose to participate\;
	Let $S=\{i: z_i=1 \text{ or } e_i=0\}$ denote the bidders that enter\;
	For each item $j$, run auction $A$ with bids $a_j=\set{a_{ij} | i \in S}$ to decide allocation $x_A(a_j)$ and payments $p_A^*(a_j)$\;
	For each bidder $i\in S$ return: $x_i(a) = (x_{A, i}(a_1), \ldots, x_{A,i}(a_m))$ and payment $p_i^*(a)=e_i+\sum_{j\in m} p_{A, i}^*(a_j)$\;
	For each bidder $i\notin S$ return zero allocation and payment\;
	\caption{Simultaneous $A$-Item-Auction with Entry Fee ($\EA(e)$)}
\end{algorithm}

For the sake of introducing the concept of entry fees, let's say that each of these separate $A$ auctions admits a unique equilibrium $b=b^1,\cdots,b^m$ induced by valuation distributions $D=\times_j D_j$.  Bidder $i$ will choose to accept the entry fee and participate in the auctions iff her total expected utility exceeds her entry fee; i.e.
\begin{equation}
    \sum_j u^b_{ij}(t_{ij}) \geq e_i
\end{equation}
So, naively, the expected ``entry fee revenue'' (revenue obtained exclusively from the entry fees) will be
\begin{mydef}\label{def:efrefdef}[entry fee revenue]
\begin{equation}
    \efrev^b(A,D,e) = \sum_i e_i\Pr_{t_i \sim D_{i}}\ps{\sum_j u^b_{ij}(t_{ij}) \geq e_i}
\end{equation}
\end{mydef}

However, the following caveat makes it not immediately clear that $\EA$ does achieve this entry fee revenue, for every auction $A$.  Bidder $i$'s choice to pay the entry fee and participate will affect the prior the other bidders have on $t_i$.  Defining
\begin{equation}
    T_i^{b,+}(e_i) = \set{t_i \in T_i \middle | \sum_j u^b_{ij}(t_{ij}) \geq e_i}
\end{equation}
and
\begin{equation}
    D_i^{b,+}(e_i) \sim D_i \cdot \mathbbm{1}[t_i \in T_i^{b,+}(e_i)]
\end{equation}
the prior on bidder $i$'s type will now be $D_i^{b,+}(e_i)$.  In the case of the second price auction, this altered prior will not affect the bid equilibrium, as bidding truthfully will still be weakly dominant.  However, when it comes to first price and all-pay, the shifting of the notoriously complex equilibria makes it unclear how entry fee revenue will be affected.  It may seem intuitive that entry fee revenue can only increase when fewer bidders participate, as each bidder's interim utility increases with less competition.  However, the fact that $D_i^{b,+}(e_i)$ does not have item independence induces equilibria in which a bid on one item can be in terms of a type of another, and it is hard to prove anything about the equilibria or interim utilities in this new multi-item mechanism.

This $\efrev$ term appears in our revenue analysis though, and so it is important to devise a mechanism that achieves at least this much revenue.  To resolve this issue, we introduce two modified versions of the $EA$ auction: $\REA$ and $\GEA$. The goal of both are to charge an initial entry fee all the while keeping the equilibria of the subsequent item auctions unchanged.  $\REA$ is a simpler solution to this problem, but $\GEA$ has the unique property that it can be a credible mechanism for certain single-item auctions $A$ without the use of public randomness.  Both are based on the following idea: what if the auctioneer were to submit the bids of all bidders into the item auctions, even those who did not pay the entry fee?  If a bidder who did not enter wins an item, that item is discarded and nobody is charged for it.  At least in this way, the competition faced by each bidder is unaffected by who chooses to enter, and interim utilities are unchanged.

Thus, our goal is to incentivize bidders who are unwilling to pay the entry fee to still report the bids they would have entered had they chosen to participate.  \SCedit{We could simply consider a ``focal equilibrium'' of the entry fee mechanism in which even unwilling bidders report their types truthfully.  The goal of $\REA$ and $\GEA$ though is not to rely on the honesty of non-participating bidders.  $\REA$ solves this problem strongly, admitting a unique equilibrium when the underlying auctions do.\footnote{When the underlying $A$ auctions induce randomized bidding equilibrium, the $\REA$ equilibria will all be equivalent up to marginalization and all achieve good revenue, as is demonstrated in \cref{lem:rand-ea-eq}.}  $\GEA$ may admit additional equilibrium, but none that are affected by the actions of bidders who do not participate.  Thus, a focal equilibrium assumption for $\GEA$ is more realistic in a practical sense.}

The solution of $\REA$ is to not charge the entry fee with a very small probability $\delta > 0$.  That way, even these unwilling bidders will submit equilibrium bids to maximize their utility in the unlikely event that entry fees are not enforced.  And since the entry fee is only voided extremely rarely, the revenue of the mechanism is barely affected.

\begin{algorithm}[htpb]
	\SetAlgoNoLine
	\KwIn{A single-item auction $A=(x_A, p_A^*)$.}
	\KwIn{For each bidder $i$: an entry fee $e_i$.}
	Each bidder $i$ submits a pair $(z_i, a_i)$ of the decision $z_i\in \{0, 1\}$ to enter the auction or not if an entry fee is imposed, and the bid vector $a_i$ to submit to the auctions, whenever they participate\;
	The auctioneer tosses a biased coin and with probability $\delta$ charges no entry fee (sets $e_i=0$) to the bidders and with probability $1-\delta$, charges entry fees $e=(e_1, \ldots, e_n)$ and filters the bidders based on their submitted $z_i$\;
	Let $S=\{i: z_i=1 \text{ or } e_i=0\}$ denote the bidders that enter\;
	For each item $j$ run auction $A$ with bids $a_j=(a_{1j},\ldots, a_{nj})$ to decide allocation $x_A(a_j)$ and payments $p_A^*(a_j)$\;
	For each bidder $i\in S$ return: $x_i(a) = (x_{A, i}(a_1), \ldots, x_{A,i}(a_m))$ and payment $p_i^*(a)=e_i+\sum_{j\in m} p_{A, i}^*(a_j)$\;
	For each bidder $i\notin S$ return zero allocation and payment\;
	\caption{Simultaneous $A$-Item-Auction with Random Entry Fee ($\mathsf{rand}-\EA(e)$)}
	\label{alg:rand-ea}
\end{algorithm}

As the entry decisions no longer affect the item auctions, we are able to establish the following lemma about equilibrium bidding in $\REA$, which we prove in \Cref{app:rand-ea-eq}:

\begin{lemma}\label{lem:rand-ea-eq}
 Let $\bb$, be any mixed BNE of the $\mathsf{rand}-EA(e)$ auction, with entry fees $e=\{e_i\}$, for type vector distribution $D=\times_{j\in[m]} D_j$, where $D_j=\times_i D_{ij}$. Let $\tilde{\bb}_i^j: T_{ij} \to \Delta(A_{ij})$, denote the marginal action distribution of player $i$ on item $j$ conditional only on her type $t_{ij}$ for item $j$, where $\Delta(A_{ij})$ denotes the set of distributions over bids submitted by player $i$ to the auction for item $j$. Then, $\tilde{\bb}^j = \{\tilde{\bb}_i^j\}$ is a mixed BNE of the $A$ auction for item $j$, when run in isolation. Moreover:
\begin{equation}
    \efrev^{\bb}(\REA(e)) \geq (1-\delta)\efrev^{\tilde{\bb}}(A,D,e)
\end{equation}
where $\efrev^{\bb}(\REA(e))$ represents the revenue of the $\REA$ auction solely coming from the entry fees, and $\efrev^{\tilde{\bb}}(A,D,e)$ is defined in \Cref{def:efrefdef}.
\end{lemma}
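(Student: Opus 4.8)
The plan is to exploit the defining feature of $\REA$: the item auctions are run on \emph{all} bidders' bid vectors regardless of who enters, and $A$ treats each item separately, so from any bidder $i$'s viewpoint the game decouples across items up to a single lump‑sum entry term. First I would set up this decomposition. Fix the mixed BNE $\bb$, fix a bidder $i$ and an item $j$, and let $\tilde{\bb}_{-i}^j$ be the opponents' marginal item‑$j$ strategies induced by $\bb$. Because types are independent across bidders and $D_{i'}=\times_j D_{i'j}$, the distribution of the opponents' item‑$j$ bids faced by $i$ under $\bb$ is exactly the one generated by $\tilde{\bb}_{-i}^j$ in the isolated item‑$j$ auction; write $V_{ij}(a_{ij},t_{ij})$ for bidder $i$'s interim utility in that isolated auction from action $a_{ij}$ at type $t_{ij}$ against $\tilde{\bb}_{-i}^j$. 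Then for every type $t_i$ and every submitted pair $(z_i,a_i)$ with $e_i>0$, bidder $i$'s expected utility in $\REA$ against $\bb_{-i}$ is
\begin{equation}
 U_i(z_i,a_i;t_i)\;=\;\bigl(\delta+(1-\delta)z_i\bigr)\sum_j V_{ij}(a_{ij},t_{ij})\;-\;(1-\delta)\,z_i\,e_i ,
\end{equation}
since with probability $\delta$ the fee is voided and $i$ participates regardless, and with probability $1-\delta$ she participates iff $z_i=1$ and then additionally pays $e_i$ (and if $e_i=0$ the decision is vacuous). Verifying this identity carefully --- in particular that the opponent item‑$j$ bid distribution really is the $\tilde{\bb}_{-i}^j$‑induced one, with the attendant measurability bookkeeping --- is the technical heart of the argument; everything else is reading off consequences.

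For the first claim, suppose $\tilde{\bb}^j$ were not a BNE of the isolated item‑$j$ auction, so some bidder $i$ has a strictly profitable deviation there: a measurable $\hat a_{ij}(\cdot)$ with $\EE_{t_{ij}}[V_{ij}(\hat a_{ij}(t_{ij}),t_{ij})]>\EE_{t_{ij},a_{ij}\sim\tilde{\bb}_i^j(t_{ij})}[V_{ij}(a_{ij},t_{ij})]$. I would lift this to $\REA$: bidder $i$ keeps $z_i$ and all item‑$k$ bids for $k\ne j$ exactly as under $\bb_i$, and replaces the item‑$j$ bid by $\hat a_{ij}(t_{ij})$. By the identity, the resulting change in $U_i$ is the multiplier $\delta+(1-\delta)z_i\ge\delta>0$ times the (in expectation positive) change in the item‑$j$ term, hence strictly positive in expectation --- contradicting that $\bb$ is a BNE. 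The positivity of $\delta$ is precisely what forces the item bids to be mutual best responses even for a bidder who will decline the fee.

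For the second claim, note first that at equilibrium the realized entry decision is optimal given the realized bid: for a.e.\ $t_i$ and every $(z_i,a_i)$ in the support of $\bb_i(t_i)$, the identity gives $U_i(1,a_i;t_i)-U_i(0,a_i;t_i)=(1-\delta)\bigl(\sum_j V_{ij}(a_{ij},t_{ij})-e_i\bigr)$, so $z_i=1$ whenever $\sum_j V_{ij}(a_{ij},t_{ij})>e_i$. By the first claim together with the standard characterization of a mixed BNE (supported on best responses for a.e.\ type), $V_{ij}(a_{ij},t_{ij})=u_{ij}^{\tilde{\bb}}(t_{ij})$ almost surely, hence $\sum_j V_{ij}(a_{ij},t_{ij})=\sum_j u_{ij}^{\tilde{\bb}}(t_{ij})$ a.s., and therefore $\Pr[z_i=1]\ge\Pr_{t_i\sim D_i}[\sum_j u_{ij}^{\tilde{\bb}}(t_{ij})>e_i]$. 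Each $u_{ij}^{\tilde{\bb}}$ is convex, nondecreasing and $1$‑Lipschitz in $t_{ij}$ (up to a null set it is a supremum of maps $t\mapsto\pi t-p$ with $\pi\in[0,1]$), so its only positive‑measure level set is at its minimum value, which equals $0$ by individual rationality; a Fubini argument then shows $\{t_i:\sum_j u_{ij}^{\tilde{\bb}}(t_{ij})=e_i\}$ is $D_i$‑null for $e_i>0$ (and for $e_i=0$ that bidder contributes $0$ to both sides), giving $\Pr[z_i=1]\ge\Pr_{t_i\sim D_i}[\sum_j u_{ij}^{\tilde{\bb}}(t_{ij})\ge e_i]$. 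Finally, a fee is collected from $i$ exactly when $z_i=1$ and the independent coin does not void the fees (the latter with probability $1-\delta$), so
\begin{equation}
 \efrev^{\bb}(\REA(e))=(1-\delta)\sum_i e_i\,\Pr[z_i=1]\;\ge\;(1-\delta)\sum_i e_i\,\Pr_{t_i\sim D_i}\Bigl[\textstyle\sum_j u_{ij}^{\tilde{\bb}}(t_{ij})\ge e_i\Bigr]=(1-\delta)\,\efrev^{\tilde{\bb}}(A,D,e).
\end{equation}
The only genuinely delicate points are the decomposition identity and the accompanying measure‑theoretic regularity (measurability of best responses as functions of type, the ``supported on best responses'' form of BNE); the boundary‑nullity step can alternatively be sidestepped via the tie‑breaking convention that an indifferent bidder enters.
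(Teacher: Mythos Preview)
Your proposal is correct and follows essentially the same line as the paper: decompose the interim utility in $\REA$ into a per-item sum against the opponents' marginal item strategies, use the factor $\delta>0$ to force item bids to be mutual best responses in isolation via a deviation argument, and then read off the entry decision to bound the collected fees. One place where you are actually more careful than the paper: the paper only argues $\Pr[z_i=1]\ge\Pr[\sum_j u_{ij}^{\tilde{\bb}}(t_{ij})>e_i]$ and then silently equates this with $\efrev^{\tilde{\bb}}(A,D,e)$, which is defined with a weak inequality; your Lipschitz/level-set argument (or the tie-breaking convention you mention) fills that gap.
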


The $\GEA$ auction attempts to do something similar without the use of a randomized entry fee.  A mechanism that uses a randomized entry fee in the way of $\REA$ can never be credible 
as the auctioneer would be incentivized to rig the coin flip to whichever outcome generates more revenue.  So, rather than incentivizing low-type bidders to bid logically with a randomized entry fee, $\GEA$ replaces these bidders with "ghost bidders" that submit logical bids in their place.\footnote{
An alternative approach to circumventing this issue is to use a cryptographically-secure source of public randomness to simulate a publicly-observed coin flip.  In this way the auctioneer can charge a randomized entry fee in a trustworthy manner, making $\RFP$ and $\RAP$ credible subject to any computational assumptions that underlie the cryptographic protocol.  We do not take this approach, and in particular $\GEA$ is provably credible without the use of cryptographic primitives.  We thank Ariel Schvartzman, Zachary Langley, Vikrant Ashvinkumar, and Edgar Granados for jointly noting this connection.
}
Defining

\begin{equation}
    T_i^{b,-}(e_i) = T_i \setminus T_i^{b,+}(e_i) = \set{t_i \in T_i \middle | \sum_j u^b_{ij}(t_{ij}) < e_i}
\end{equation}
and
\begin{equation}
    D_i^{b,-}(e_i) \sim D_i | t_i \in T_i^{b,-}(e_i)
\end{equation}
$\GEA$ will generate a ghost bidder with type $t_i'$ sampled from distribution $D_i^{b,-}(e_i)$ for every bidder $i$ who does not pay the entry fee.  Then, ghost bids $b(t_i')$ will be injected into the item auctions that, just like in $\REA$, cannot actually lead to an allocation and simply pose as an obstacle for the bidders who do participate.

\begin{algorithm}[htpb]
	\SetAlgoNoLine
	\KwIn{A single-item auction $A=(x_A, p_A^*)$.}
	\KwIn{For each bidder $i$: an entry fee $e_i$.}
	\KwIn{For each item $j$: an equilibrium bid strategy $b_j$ for an $A$ auction on item $j$}
	Each bidder $i$ submits a pair $(z_i, a_i)$ of the decision $z_i\in \{0, 1\}$ to enter or not the auction and if they decide to enter they also submit a bid vector $a_i$\;
	Let $S=\{i: z_i=1\}$ denote the bidders that decided to enter and $a^S$ the corresponding vector of actions\;
	For each $i\notin S$, draw a type vector $t_i' \sim D_i^{b,-}(e_i)$\;
	For each item $j$ run auction $A$, with bids $\tilde{a}_{j}^S$, such that $\tilde{a}_{ij}^S=a_{ij}^S$ for $i\in S$ and $\tilde{a}_{ij}^S = b_{ij}(t_{ij}')$ for $i\notin S$ to decide allocation $x_A(\tilde{a}_j^S)$ and payments $p_A^*(\tilde{a}_j^S)$\;
	For each bidder $i\in S$ return: $x_i(a^S) = (x_{A, i}(\tilde{a}_1^S), \ldots, x_{A,i}(\tilde{a}_m^S))$ and payment $p_i^*(a^S)=e_i+\sum_{j\in m} p_{A, i}^*(\tilde{a}_j^S)$\;
	For each bidder $i\notin S$ return zero allocation and payment\;
	\caption{Simultaneous $A$-Item-Auction with Entry Fee and Ghost Bidders ($\GEA(e,b)$)}
	\label{alg:one}
\end{algorithm}

If we think of bidder $i$ and the potential ghost that replaces her as a single bidding entity, we note that this entity has type sampled via a distribution that is identical to $D_i$.  We can think of the sampling process as follows: sample a $t_i \sim D_i$, and then if $t_i \in T_i^{b,-}(e_i)$, simply re-roll $t_i$ from that subset.  This means that, if a bid strategy $b$ is an equilibrium in auction $A$, then bidding according to $b$ will constitute a Nash equilibrium in $\GEA$.  However, unlike $\REA$, $\GEA$ may admit additional equilibrium, even when $b^1,\cdots,b^m$ are the unique equilibria in the item auctions.  There may be a circumstance under which a bidding entity has incentive to deviate, but that entity is a ghost.  This ghost is not actually a participant in the auction, merely a fixture of the mechanism, programmed to always bid according to $b$.  As long as only the non-ghost entities have no profitable deviations, the bidding strategies constitute an equilibrium.  Though it is not clear if these ghost-exploiting alternate equilibria exist, we simplify our analysis by only discussing the revenue of $\GEA$ at the equilibrium corresponding to the equilibrium in the separate item auctions.  We refer to this as the ``focal equilibrium'', and at the focal equilibrium it is clear that

\begin{equation}\label{ghost-efrev}
    \efrev^b(\GEA(e,b)) \geq \efrev^b(A,D,e) 
\end{equation}
where $\efrev^{\bb}(\GEA(e))$ represents the revenue of the $\GEA$ auction solely coming from the entry fees

\subsection{Main Theorem}

We will require a crucial, \emph{but non-trivial}, property that the auction $A$ will need to satisfy. This property captures the intuition that high-value bidders at the equilibrium of auction $A$ bid high enough and that losers of the auction are with significant probability not the highest type player. Therefore, the type of the losers can be achieved as revenue by some mechanism. This is a rough intuition of the property, and the formal notion is given below. 

\begin{mydef}[$c$-type-loss trade-off]
Let $A$ be a single-item auction. We say that auction $A$ satisfies the \emph{$c$-type-loss trade-off} property if for any collection of bidders participating in an auction $A$, with any vector of type distributions $D=\times_i D_i$, and for any equilibrium strategy $b$:
$$\EE_{t \sim D}\ps{\max_i t_i \p{1-\pi^b_i(t_i)}} \leq c \cdot \opt(D)$$
where $\opt(D)$ is the optimal revenue in a single-item auction setting with type distributions $D$.
\end{mydef}

We are now ready to state our main theorem.

\begin{theorem}\label{thm:main}
Let $A$ be any single-item auction, satisfying the $c$-type-loss trade-off and which admits an equilibrium strategy $b^j$ for type vector distribution $D_j=\times_i D_{ij}$ that is interim individually rational. Then there exists a set of player-specific entry-fees $e_i$, such that:
\begin{align}
\opt(D) \leq~& (c+5) \cdot \sum_{j=1}^m \opt(D_j) + 2\cdot \efrev^b(A,D,e) 
\end{align}
where $\opt(D)$ denotes the optimal revenue in the multi-dimensional multi-item auction setting with type distributions $D=\times_i D_i$ and $\opt(D_j)$ is optimal revenue in a single item auction setting with type vector distribution $D_j=\times_i D_{ij}$. 
\end{theorem}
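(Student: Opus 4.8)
The plan is to adapt the Lagrangian-duality / virtual-welfare framework of Cai et al.~\cite{CDW16}, the key modification being to define the dual flow in terms of the \emph{interim} utilities $u^b_{ij}$ of the single-item auction $A$ run in isolation, rather than in terms of ex-post utilities of a second-price auction. First I would write $\opt(D)$ as the value of the linear program over BIC, IR, feasible direct mechanisms, take a partial Lagrangian of the BIC constraints, and invoke weak duality: \emph{any} dual assignment $\lambda$ --- equivalently, a flow on each bidder's type graph --- yields $\opt(D)\le \EE_{t\sim D}\big[\sum_i \Phi^{\lambda}_i(t_i)\,x^{*}_i(t_i)\big]$, where $x^{*}$ is the revenue-optimal allocation and $\Phi^{\lambda}_i$ is the virtual value induced by $\lambda$ (the canonical flow is chosen so the IR-slack term vanishes). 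I would then use the canonical CDW flow for the following partition of each $T_i$: fix an entry fee $e_i$ (value chosen below), split $T_i$ into the entering set $T_i^{b,+}(e_i)$ and the non-entering set $T_i^{b,-}(e_i)$, and within $T_i^{b,+}(e_i)$ assign to each type a favorite item, producing regions $R^{(i)}_1,\dots,R^{(i)}_m$; the flow proceeds in the direction of increasing $t_{ij}$ on $R^{(i)}_j$ and toward the sink on the non-entering set. The only structural fact this requires --- that $u^b_{ij}(t_{ij})$ is nondecreasing in $t_{ij}$ --- holds for \emph{every} auction by the payment/envelope identity, which is precisely why interim utilities are usable here whereas ex-post utilities would not be.

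As in~\cite{CDW16,BILW14}, the resulting virtual welfare decomposes into a $\single$ term (contribution of types for which some item's value exceeds a high, quantile-based threshold), a $\nonfav$ term (welfare contributed by non-favorite items), a $\tail$ / entry-fee term, and a $\core$ term (the favorite-item ironed virtual value, schematically $\sum_i \sum_j \EE[\tilde\varphi_{ij}(t_{ij})\, x^{*}_{ij}(t_{ij})]$). I would choose $e_i$ so that $\Pr_{t_i}[\sum_j u^b_{ij}(t_{ij}) \ge e_i]\ge \tfrac12$; then, as in the core--tail argument of~\cite{BILW14}, the entry-fee-related contributions (the $\tail$ term together with the bundle-price-like part of $\nonfav$) are collectively at most $2\,\efrev^b(A,D,e)$. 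The $\single$ term is bounded by a universal constant times $\sum_j \opt(D_j)$ by selling each item $j$ separately with a Myerson reserve, and the remaining part of the $\nonfav$ term is absorbed there as well. These are the routine parts of the argument.

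The crux is the $\core$ term for a general, non-truthful $A$. After replacing the ironed virtual value by the value where it is positive and passing to the isolated-auction interim allocation $\pi^b_{ij}$, one bounds, on each item $j$,
\[
\sum_i \EE_{t_{ij}}\big[\tilde\varphi_{ij}^{+}(t_{ij})\,x^{*}_{ij}(t_{ij})\big]
\;\le\;
\sum_i \EE_{t_{ij}}\big[\tilde\varphi_{ij}(t_{ij})\,\pi^{b}_{ij}(t_{ij})\big]
\;+\;
\EE_{t_j \sim D_j}\Big[\max_i t_{ij}\big(1-\pi^{b}_{ij}(t_{ij})\big)\Big]
\;+\;(\text{ironing/conditioning terms}).
\]
The first right-hand term is --- up to the ironing/conditioning bookkeeping, which itself contributes only $O(1)\sum_j\opt(D_j)$ --- the virtual welfare of $A$'s equilibrium allocation on item $j$, hence at most the equilibrium revenue of $A$ on item $j$, hence at most $\opt(D_j)$, since the revelation principle turns the interim-IR equilibrium $b^j$ into a BIC, IR single-item mechanism. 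The second right-hand term is exactly the quantity that the $c$-type-loss trade-off of $A$ bounds: it is at most $c\cdot\opt(D_j)$. Summing over $j$ and assembling with the $\single$ and $\nonfav$ bounds yields the coefficient $c+5$ on $\sum_j\opt(D_j)$, while the entry-fee bounds yield the coefficient $2$ on $\efrev^b(A,D,e)$.

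I expect the last step to be the main obstacle --- though conceptually it is discharged not here but in the auction-specific lemmas that establish the $c$-type-loss trade-off. For a second-price auction a losing bidder always has lower type than the winner and is therefore ``cheap'', but for a first-price or all-pay auction the highest type may lose, and its type is in general not recoverable as revenue --- unless such misallocation is rare. The $c$-type-loss trade-off is exactly the property that rules out the simultaneous occurrence of the two bad cases (frequent misallocation and expensive loser types), and here it is used as a black box. A secondary, purely clerical obstacle is tracking the $O(1)$ constants through the $\single$, $\nonfav$, ironing and conditioning steps so that they sum to exactly $5$.
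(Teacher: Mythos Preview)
Your central insight is exactly right and matches the paper: define the CDW monotone preference partition via the \emph{interim} utilities $u^b_{ij}(t_{ij})$ of the isolated $A$-auction, which are monotone for any auction by the envelope identity. However, your decomposition is garbled, and the type-loss trade-off is invoked in the wrong place. The CDW upper bound has only two pieces: a virtual-value piece $\tilde\varphi^*_{ij}(t_{ij})\cdot 1\{t_i\in R_{i,j}\}$ and a type piece $t_{ij}\cdot 1\{t_i\notin R_{i,j}\}$. The virtual-value piece (the paper's $\single$, your ``$\core$'') is bounded \emph{trivially} by $\sum_j\opt(D_j)$ via $\sum_i x^*_{ij}\tilde\varphi^*_{ij}\le\max_i\tilde\varphi^*_{ij}$ and Myerson; no reference to $\pi^b$ or to the type-loss property is needed there, and your displayed inequality for this term is both unnecessary and not obviously true (the ``ironing/conditioning terms'' would have to absorb a genuine sign problem when $\tilde\varphi_{ij}<0$ and $\pi^b_{ij}$ is large).

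The real work, and the place the type-loss trade-off enters, is the \emph{non-favorite} type piece $t_{ij}\cdot 1\{t_i\notin R_{i,j}\}$. The key step you are missing is to insert $1=\pi^b_{ij}(t_{ij})+(1-\pi^b_{ij}(t_{ij}))$ and use $1\{t_i\notin R_{i,j}\}\le 1\{j\text{ not strict favorite}\}\cdot\pi^b_{ij}(t_{ij})+(1-\pi^b_{ij}(t_{ij}))$. This produces three terms: (i) an $\under$ term $\sum_{i}x^*_{ij}\,t_{ij}(1-\pi^b_{ij}(t_{ij}))\le\max_i t_{ij}(1-\pi^b_{ij}(t_{ij}))$, which is precisely what the $c$-type-loss trade-off bounds by $c\cdot\opt(D_j)$; (ii) an $\over$ term coming from $t_{ij}\pi^b_{ij}=u^b_{ij}+p^b_{ij}$, with the payment part bounded by $\opt(D_j)$; and (iii) a $\surplus$ term $u^b_{ij}(t_{ij})\cdot 1\{j\text{ not strict favorite}\}$, which is then split into $\tail+\core$ and bounded by $3\sum_j\opt(D_j)+2\,\efrev^b(A,D,e)$ via the usual BILW variance/Chebyshev argument with $e_i$ chosen so that entry probability is at least $1/2$. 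Note that the entry fee is \emph{not} baked into the partition; it is chosen only at the $\surplus$ stage. Also, there is no separate ``high-quantile $\single$'' term in this framework --- that is a BILW single-bidder artifact that does not belong here. With the correct decomposition the constants are $1+c+1+3=c+5$ and $2$, as claimed.
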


Then, we can establish the following corollaries from \Cref{lem:rand-ea-eq} and \Cref{ghost-efrev} respectively.  First, in the setting where the item auctions admit a unique equilibrium:

\begin{corollary}[$\REA$ with Unique Equilibrium]\label{cor:rand-EA-u}
Let $A$ be any single-item auction, satisfying the $c$-type-loss trade-off and which admits a unique equilibrium strategy $b^j$ for type vector distribution $D_j=\times_i D_{ij}$ that is interim individually rational. Then there exists a set of player-specific entry-fees $e_i$, such that
\begin{align}
\opt(D) \leq~& (c+5) \cdot \sum_{j=1}^m \opt(D_j) + \frac{2}{1-\delta}\cdot \efrev(\REA(e))
\end{align}
\end{corollary}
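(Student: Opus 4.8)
The plan is to combine Theorem~\ref{thm:main} with Lemma~\ref{lem:rand-ea-eq}; the argument is essentially a substitution, with one point requiring care. First I would invoke Theorem~\ref{thm:main} on the auction $A$: since $A$ satisfies the $c$-type-loss trade-off and admits an interim individually rational equilibrium $b^j$ for each type-vector distribution $D_j=\times_i D_{ij}$, the theorem produces player-specific entry fees $e=(e_1,\dots,e_n)$ with
\[
\opt(D)\;\leq\;(c+5)\sum_{j=1}^m \opt(D_j)\;+\;2\,\efrev^b(A,D,e),
\]
where $b=(b^1,\dots,b^m)$. It then suffices to show $\efrev^b(A,D,e)\leq \tfrac{1}{1-\delta}\,\efrev(\REA(e))$ for the same fees $e$, after which the corollary follows by plugging this into the displayed bound.

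Next I would run the $\REA(e)$ auction with exactly these fees and let $\bb$ be any (mixed) BNE of it. By Lemma~\ref{lem:rand-ea-eq}, the induced marginal action distribution $\tilde{\bb}^j$ on each item $j$ is a mixed BNE of the isolated $A$-auction for item $j$ under the \emph{full} competitor distribution $D_j=\times_i D_{ij}$ --- this is the key structural feature of $\REA$, that every bidder's bid enters the item auctions regardless of the entry decision, so the competition (and hence the relevant equilibrium) is that of $A$ on $D_j$, not on any conditioned sub-distribution. The uniqueness hypothesis then forces $\tilde{\bb}^j=b^j$ for every $j$; in particular the interim utilities $u^{\tilde{\bb}}_{ij}(\cdot)$ and $u^b_{ij}(\cdot)$ coincide. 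Since the entry fee revenue of Definition~\ref{def:efrefdef} depends only on the fees $e$ and these interim utilities, we get $\efrev^{\tilde{\bb}}(A,D,e)=\efrev^b(A,D,e)$.

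Finally, the ``moreover'' clause of Lemma~\ref{lem:rand-ea-eq} gives $\efrev^{\bb}(\REA(e))\geq (1-\delta)\,\efrev^{\tilde{\bb}}(A,D,e)=(1-\delta)\,\efrev^b(A,D,e)$, i.e.\ $\efrev^b(A,D,e)\leq \tfrac{1}{1-\delta}\,\efrev(\REA(e))$. Substituting into the Theorem~\ref{thm:main} bound yields the corollary. I expect the only genuine subtlety --- and hence the step to state carefully --- to be the identification $\tilde{\bb}^j=b^j$: one must ensure that the equilibrium used to calibrate the entry fees in Theorem~\ref{thm:main} is the very object (after marginalizing out the cross-item coupling) that governs participation in $\REA$, and that ``unique equilibrium'' is read strongly enough (e.g.\ unique among mixed strategies, or at least unique up to interim utilities) for the equality of interim utilities, and thus of $\efrev$, to go through. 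Everything else is routine.
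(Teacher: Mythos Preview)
Your proposal is correct and follows exactly the route the paper intends: the corollary is stated in the paper as an immediate consequence of Theorem~\ref{thm:main} and Lemma~\ref{lem:rand-ea-eq}, and your write-up spells out precisely that combination, including the step where uniqueness forces $\tilde{\bb}^j=b^j$ so that the $\efrev$ quantities coincide. Your flagged subtlety about reading ``unique equilibrium'' strongly enough to pin down the mixed-BNE marginal $\tilde{\bb}^j$ is well taken and is the only point worth emphasizing.
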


\begin{corollary}[$\GEA$ at Focal Equilibrium]\label{cor:ghost-EA}
Let $A$ be any single-item auction, satisfying the $c$-type-loss trade-off and which admits a unique equilibrium strategy $b^j$ for type vector distribution $D_j=\times_i D_{ij}$ that is interim individually rational. Then there exists a set of player-specific entry-fees $e_i$, such that at the focal equilibrium of $\GEA$ corresponding to $b$,
\begin{align}
\opt(D) \leq~& (c+5) \cdot \sum_{j=1}^m \opt(D_j) + 2\cdot \efrev^b(\GEA(e,b)) 
\end{align}
\end{corollary}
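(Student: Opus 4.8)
The plan is to obtain Corollary~\ref{cor:ghost-EA} as an essentially immediate consequence of Theorem~\ref{thm:main} together with inequality~\eqref{ghost-efrev}, so the only substantive work is to make sure~\eqref{ghost-efrev} really does hold at the focal equilibrium for the entry fees handed to us by the theorem. First I would apply Theorem~\ref{thm:main} to the auction $A$ and to the per-item type-vector distributions $D_j = \times_i D_{ij}$, using the hypothesized interim individually rational equilibria $b^j$. This produces player-specific entry fees $e = (e_1,\dots,e_n)$ with
$$\opt(D) \le (c+5)\sum_{j=1}^m \opt(D_j) + 2\,\efrev^b(A,D,e),$$
so it remains only to show $\efrev^b(\GEA(e,b)) \ge \efrev^b(A,D,e)$ at the focal equilibrium of $\GEA(e,b)$.

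Next I would pin down the focal equilibrium and compute the interim utilities of the real bidders in $\GEA(e,b)$. The key structural observation — already sketched in the text preceding~\eqref{ghost-efrev} — is that for each $i$ the pair (real bidder $i$, ghost replacing $i$) behaves, from the vantage point of every other real bidder, like a single participant whose type is distributed as $D_i$ and who always plays the per-item strategies $b^j$: a real bidder enters iff $t_i \in T_i^{b,+}(e_i)$ and then bids $b^j_i(t_{ij})$ on item $j$, while otherwise a ghost with type drawn from $D_i^{b,-}(e_i)$ bids $b^j_i(t'_{ij})$ on item $j$; marginalizing over the entry event recovers $D_i$, and since each $b^j_i$ depends only on the $j$-th coordinate of the (real or ghost) type, the vector of competing bids faced by any fixed real bidder in item auction $j$ is distributed exactly as in the isolated item auction for $j$ under $b^j$. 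Hence, conditional on type $t_i$ and on entering, bidder $i$'s interim utility from playing $b$ is precisely $\sum_j u^b_{ij}(t_{ij})$, so the strategy ``bid $b^j_i$ on each item $j$ and enter iff $\sum_j u^b_{ij}(t_{ij}) \ge e_i$'' is a best response given that everyone else (and all ghosts) plays $b$; this is the focal equilibrium.

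With the focal equilibrium identified, the probability that bidder $i$ enters is $\Pr_{t_i\sim D_i}\big[\sum_j u^b_{ij}(t_{ij}) \ge e_i\big]$, so the revenue collected from entry fees alone equals $\sum_i e_i \Pr_{t_i\sim D_i}\big[\sum_j u^b_{ij}(t_{ij}) \ge e_i\big] = \efrev^b(A,D,e)$ by Definition~\ref{def:efrefdef}; since in $\GEA$ each payment is $e_i$ plus a nonnegative term from the item auctions, total revenue dominates entry-fee revenue, giving $\efrev^b(\GEA(e,b)) \ge \efrev^b(A,D,e)$, which is exactly~\eqref{ghost-efrev}. Plugging this into the displayed bound from Theorem~\ref{thm:main} yields $\opt(D) \le (c+5)\sum_{j=1}^m \opt(D_j) + 2\,\efrev^b(\GEA(e,b))$, as claimed.

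The main obstacle is conceptual rather than computational: it is precisely the verification that inserting ghost bidders leaves the item-auction equilibria and the real participants' interim utilities undisturbed, so that entry is governed by the clean threshold $\sum_j u^b_{ij}(t_{ij}) \ge e_i$ and the naive entry-fee revenue is genuinely realized. I would be careful to note that $\GEA$ may admit additional ``ghost-exploiting'' equilibria and that the statement deliberately restricts attention to the focal equilibrium, with the uniqueness hypothesis on the $b^j$ used only to make ``the focal equilibrium corresponding to $b$'' unambiguous; no claim is made about the revenue of any other equilibrium of $\GEA$.
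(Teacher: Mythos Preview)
Your proposal is correct and follows exactly the paper's approach: the corollary is derived immediately from Theorem~\ref{thm:main} together with inequality~\eqref{ghost-efrev}, and your justification of~\eqref{ghost-efrev} at the focal equilibrium is precisely the argument sketched in the text preceding that equation. One small clarification: $\efrev^b(\GEA(e,b))$ is by definition the revenue from entry fees alone, so your computation already shows it \emph{equals} $\efrev^b(A,D,e)$ at the focal equilibrium, and the subsequent remark about total revenue dominating entry-fee revenue is unnecessary (though harmless).
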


The $A$ auctions that we consider in the subsequent sections almost always admit unique equilibria.  When we are dealing with item auctions that admit multiple equilibria, however, we need to allow for some additional assumptions regarding equilibrium selection.  The entry fees $e$ that we establish in our proof are in terms of the item auction equilibrium $b$ that the bidders play.  It could be the case that entry fees $e^{(1)}$ achieve good revenue when the subsequent item auctions are played with equilibrium $b^{(1)}$ while entry fees $e^{(2)}$ perform well in the face of equilibrium $b^{(2)}$.  So, $\REA$ could fail to achieve good revenue in the following scenario: when entry fees $e^{(1)}$ are used, the bidders all bid according to equilibrium $b^{(2)}$ and when entry fees $e^{(2)}$ are used, the bidders all bid according to equilibrium $b^{(1)}$.  Such a scenario is utterly absurd, where bidders would actively try to play the equilibrium that lowers their utility to the point where the entry fees are prohibitive, but we must account for it in the statement of our theorem.  Namely, we assume that equilibrium selection is independent of entry fee.

\begin{corollary}[Entry-Fee Oblivious Equilibrium Selection]\label{cor:rand-EA-es}
Let $A$ be any single-item auction, satisfying the $c$-type-loss trade-off and such that all its mixed BNE are interim individually rational. Consider an equilibrium selection process that maps a set of entry fees $e$ to a mixed BNE $\bb^e$ of the $\mathsf{rand}-\EA(e)$ auction. Suppose that the equilibrium selection process satisfies that the marginal bid distribution $\tilde{\bb}_i^j: T_{ij}\to \Delta(A_{ij})$ of each player $i$ for each item $j$ conditional on her type for item $j$, is independent of the entry fee. Then for any such equilibrium selection process, we have:
\begin{align}
\opt(D) \leq~& (c+5) \cdot \sum_{j=1}^m \opt(D_j) + \frac{2}{1-\delta}\cdot \max_{e}\efrev^{\bb^e}(\mathsf{rand}-\EA(e))
\end{align}
\end{corollary}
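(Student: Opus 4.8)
The plan is to derive this from Theorem~\ref{thm:main} and Lemma~\ref{lem:rand-ea-eq}, using the obliviousness hypothesis precisely to break the circular dependence between the entry fees and the induced item-auction equilibria that was flagged before the corollary statement.

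First I would fix an arbitrary entry-fee vector $e$ and let $\bb^e$ be the mixed BNE of $\REA(e)$ picked out by the equilibrium selection process. By Lemma~\ref{lem:rand-ea-eq}, for each item $j$ the marginal action distribution $\tilde{\bb}_i^{e,j}$ of each player $i$ conditional on $t_{ij}$ forms a mixed BNE $\tilde{\bb}^{e,j}$ of the $A$ auction for item $j$ run in isolation on $D_j=\times_i D_{ij}$. By hypothesis this marginal profile does not depend on $e$, so I can write $b^j:=\tilde{\bb}^{e,j}$ and $b=(b^1,\dots,b^m)$. Each $b^j$ is a mixed BNE of the $A$ auction on $D_j$, hence interim individually rational by assumption, so $b$ satisfies the hypotheses of Theorem~\ref{thm:main} (in its mixed-equilibrium extension, per the remark in Section~\ref{sec:prelims}, reading $\pi^b_i$ as the interim allocation that also averages over the bidders' mixing). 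Theorem~\ref{thm:main} then yields a vector of player-specific entry fees $e^\star$ with
\[
\opt(D)\;\leq\;(c+5)\sum_{j=1}^m\opt(D_j)\;+\;2\,\efrev^{b}(A,D,e^\star).
\]

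Next I would transfer $\efrev^{b}(A,D,e^\star)$ to the $\REA$ mechanism. Applying Lemma~\ref{lem:rand-ea-eq} once more with the specific fees $e^\star$ and the selected equilibrium $\bb^{e^\star}$, and using $\tilde{\bb}^{e^\star}=b$ from obliviousness, gives $\efrev^{\bb^{e^\star}}(\REA(e^\star))\geq(1-\delta)\,\efrev^{b}(A,D,e^\star)$, i.e. $\efrev^{b}(A,D,e^\star)\leq\frac{1}{1-\delta}\efrev^{\bb^{e^\star}}(\REA(e^\star))\leq\frac{1}{1-\delta}\max_e\efrev^{\bb^e}(\REA(e))$. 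Substituting into the displayed inequality gives the claim.

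The argument is essentially bookkeeping on top of the two cited results; the only genuinely load-bearing point — and the reason the obliviousness hypothesis cannot be dropped — is that Theorem~\ref{thm:main} tailors $e^\star$ to a fixed item-auction equilibrium $b$, so one must know that posting $e^\star$ does not cause the bidders to coordinate on a different equilibrium against which $e^\star$ is a poor choice (the ``absurd'' scenario described above). Obliviousness guarantees that the marginal item play realized under $e^\star$ is exactly the $b$ used to construct $e^\star$, which is all the proof needs. A minor thing to double-check is that Lemma~\ref{lem:rand-ea-eq} only requires its stated marginal-BNE conclusion for $\REA$, so no uniqueness of the item-auction equilibrium is needed here, unlike in Corollaries~\ref{cor:rand-EA-u} and~\ref{cor:ghost-EA}.
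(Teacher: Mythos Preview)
Your proposal is correct and is exactly the argument the paper intends: the corollary is stated without its own proof, but it is the obvious adaptation of the derivation of Corollary~\ref{cor:rand-EA-u} from Theorem~\ref{thm:main} and Lemma~\ref{lem:rand-ea-eq}, with obliviousness of the selected marginals replacing uniqueness so that the item-auction equilibrium $b$ against which Theorem~\ref{thm:main} constructs $e^\star$ is guaranteed to coincide with the marginals $\tilde{\bb}^{e^\star}$ realized when $e^\star$ is actually posted. Your remark that uniqueness is not needed here, and your handling of the mixed-equilibrium extension, are both accurate.
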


\paragraph{Remark on $\opt(D_j)$.} The first part of the upper bound corresponds to the sum of the optimal revenues achievable in a set of single-dimensional auction settings. For each item $j$, this optimal quantity $\opt(D_j)$ is achieved by the celebrated Myerson auction \cite{Myerson81} that maps the type $t_{ij}$ of each bidder $i$ for the item to a virtual value $\tilde{\varphi}_{ij}(t_{ij})$ and then allocates the item to the highest virtual value bidder. Moreover, for each of these quantities we can also use existing results on revenue guarantees of truthful and non-truthful simple auctions \cite{HHT14} in single-dimensional settings, to show that this revenue is also achievable by simple, learnable and potentially also credible auctions. For instance, based on the results by \cite{HHT14}, if the type distributions $D_{ij}$ are \emph{regular} (as defined by \cite{Myerson81}), then the first part of the upper bound is approximated to within a constant factor by running a first price auction with a player specific reserve price for each of the items. Similarly, for regular distributions it is also approximated \cite{hartline2009simple} by running a second price auction with player specific reserves.
Thus, bounding the $\opt(D_j)$ term by the revenue of separate reserve auctions and bounding the $\efrev^b(A,D,e)$ by the revenue of the aforementioned entry-fee auctions, our theorem states that \emph{the best of running a separate entry fee auction for each item, or running a grand-bundle entry fee auction, where the entry fee is paid to access all the item auctions}, is a constant factor approximation to the optimal revenue.  For example, in \Cref{sec:FPA}, we prove the following corollary

\begin{cor}\label{cor:first-price}
    Consider a multi-item auction with additive bidders and independent types across bidders $i$ and items $j$, distributed according to $D_{ij}$ and supported in $[0, H]$. Suppose that type distributions $D_{ij}$ are regular and induce unique equilibria in separate first-price auctions for each of the items. Then, for appropriately chosen parameters $r$, $e$, the better of: i) running simultaneous first-price auctions with item and bidder specific reserve prices $\SFP(r)$, ii) running simultaneous first price auctions with bidder specific bundle entry fees $\RFP(e)$, achieves a $\frac{20e-2}{e-1}$-factor approximation to the optimal revenue.
\end{cor}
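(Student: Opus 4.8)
The plan is to obtain Corollary~\ref{cor:first-price} by instantiating Corollary~\ref{cor:rand-EA-u} with $A$ taken to be the single-item first-price auction, and then replacing the two terms in that bound by the revenues of the two candidate mechanisms. To invoke Corollary~\ref{cor:rand-EA-u} I need two facts about the first-price auction. First, that for each item distribution $D_j=\times_i D_{ij}$ it admits a unique, interim individually rational equilibrium $b^j$: uniqueness is precisely the hypothesis of the corollary, and interim individual rationality is automatic because bidding $0$ always yields zero utility (cf.\ the standing assumption of Section~\ref{sec:prelims}). Second, that first-price satisfies the $c$-type-loss trade-off for a concrete constant $c$; this is Lemma~\ref{lem:fp-type-loss} of this section, with $c=13$. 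Corollary~\ref{cor:rand-EA-u} then supplies entry fees $e=(e_i)$ such that, for every $\delta\in(0,1)$, $\opt(D)\le (c+5)\sum_{j=1}^m\opt(D_j)+\frac{2}{1-\delta}\,\efrev(\RFP(e))$, where $\RFP(e)=\mathsf{rand}\text{-}\EFP(e)$ is exactly $\REA$ with $A$ the first-price auction.

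Next I would bound each term by the revenue of one of the two mechanisms. The second is immediate: the entry-fee revenue is one component of the total revenue, so $\efrev(\RFP(e))\le\Rev(\RFP(e))$. For the first, fix item $j$. Since $D_j$ is a product of regular distributions, the single-item framework of Hartline, Hoy, and Taggart~\cite{HHT14} yields per-bidder reserve prices $r_j$ for which the first-price auction on item $j$ with reserves $r_j$ has revenue at least $\frac{e-1}{e}\,\opt(D_j)$ at every equilibrium. Because valuations are additive and types are independent across items, running these $m$ auctions simultaneously --- which is precisely $\SFP(r)$ with $r=(r_1,\dots,r_m)$ --- decomposes into $m$ independent single-item auctions, so $\Rev(\SFP(r))\ge\frac{e-1}{e}\sum_j\opt(D_j)$, i.e.\ $\sum_j\opt(D_j)\le\frac{e}{e-1}\,\Rev(\SFP(r))$. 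Combining the displays gives $\opt(D)\le (c+5)\frac{e}{e-1}\,\Rev(\SFP(r))+\frac{2}{1-\delta}\,\Rev(\RFP(e))\le\big((c+5)\frac{e}{e-1}+\frac{2}{1-\delta}\big)\max\{\Rev(\SFP(r)),\Rev(\RFP(e))\}$. With $c=13$ and $\delta\downarrow0$ the constant is $\frac{18e}{e-1}+2=\frac{20e-2}{e-1}$, so the better of $\SFP(r)$ and $\RFP(e)$ is a $\frac{20e-2}{e-1}$-approximation (for fixed $\delta>0$ one gets $\frac{18e}{e-1}+\frac{2}{1-\delta}$, which tends to the stated bound).

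I expect the only genuinely hard ingredient to be Lemma~\ref{lem:fp-type-loss} itself --- that the first-price auction satisfies a type-loss trade-off with a good constant --- which relies on the geometric duality argument outlined in the introduction; given that lemma and Theorem~\ref{thm:main}, the present corollary is essentially bookkeeping. Within that bookkeeping the step deserving the most care is the decomposition of the simultaneous first-price-with-reserves auction into independent single-item auctions for additive bidders with item-independent types, which is what lets me sum the single-item~\cite{HHT14} guarantee coordinatewise; I would also double-check that the constants emerging from Lemma~\ref{lem:fp-type-loss} and from~\cite{HHT14} combine to exactly $\frac{20e-2}{e-1}$ once the vanishing $\delta$-slack is absorbed.
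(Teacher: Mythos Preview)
Your approach is exactly the paper's: apply Corollary~\ref{cor:rand-EA-u} to first-price, then replace $\sum_j\opt(D_j)$ by $\Rev(\SFP(r))$ via~\cite{HHT14} and $\efrev$ by $\Rev(\RFP(e))$. The only discrepancy is in the two intermediate constants, which happen to cancel in your calculation: Lemma~\ref{lem:fp-type-loss} gives $c=4$ (not $13$), so $c+5=9$; and the \cite{HHT14} bound used in the paper is $\frac{e-1}{2e}\,\opt(D_j)$ (not $\frac{e-1}{e}$), so $\sum_j\opt(D_j)\le \frac{2e}{e-1}\Rev(\SFP(r))$, yielding $9\cdot\frac{2e}{e-1}+2=\frac{20e-2}{e-1}$.
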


This is the first multi-dimensional revenue approximation result in the literature that is based solely on winner-pays-bid mechanisms.

\subsection{Proof Outline}

We defer the full proof of the main theorem to \Cref{app:proof-main}, but we outline here the main parts of the proof and some key technical insights.

Our analysis starts with an upper bound on the optimal multi-item auction revenue, as presented in \cite{CDW16}. We adapt this upper bound from the discrete type-space setting to the continuous setting using a discretization argument presented in \Cref{app:proof-disc}.  The proof considers $\epsilon$-discretizations of the continuous type distribution, applies the discrete bound result and then verifies that we can take the limit as $\epsilon$ goes to zero, to get the desired theorem.  The reason why we choose to work with continuous type spaces is primarily because we are interested in analyzing non-truthful auctions, for which there is a plethora of existing equilibrium analysis results (e.g. existence of a monotone pure equilibrium and uniqueness of equilibria) primarily under continuity assumptions on the distribution of types. 

Our revenue upper bound involves a partition of the type space of each player into $m+1$ regions.  These regions are defined in terms of monotone preference functions: for each player $i$ and item $j$, there exists a monotone function of the player's type $t_{ij}$ which assigns a preference score to that item. Then the type vector $t_i$ of player $i$ belongs to partition $j$, roughly if item $j$ is assigned the highest score. More formally:

\begin{mydef}\label{def:MPP}[Monotone Preference Partition of Type Space]
For all $i$, we say that $R_{i, 0},R_{i, 1},\cdots,R_{i, m}$ is a \emph{preference partition} of the type space $T_i$ if it is defined as follows: for each item $j$, there exists \emph{non-decreasing preference functions} $\cU_{i,j}: T_{ij} \to \RR \geq 0$ such that, for all $j \ne 0$
\begin{align*}
    t_i \in R_{i,j} \Leftrightarrow \:  & \cU_{i, j}(t_{ij}) \geq \cU_{i, k}(t_{ik}), \forall k \ne j  ~~~\text{ and } &  \cU_{i, j}(t_{ij}) >& \cU_{i, k}(t_{ik}), \forall k < j ~~~\text{ and } 
    &\cU_{i, j}(t_{ij}) >& 0
\end{align*}
and
$$t_i \in R_{i, 0} \Leftrightarrow \cU_{i, j}(t_{ij}) =0 \quad \forall j$$
i.e. $t_i$ belongs to region $j$ if $\cU_{i, j}(t_{ij})$ is maximal and non-zero, breaking ties lexicographically.
\end{mydef}


In general, for any valid monotone preference partitions, we have the following upper bound on the optimal revenue attainable in a multi-item auction.

\begin{lemma}[Revenue Bound via Monotone Preference Partitions of Type Space]\label{lem:contCai}
Consider a multi-item auction setting with additive bidders and independent continuous type distributions $D_{ij}$ on a bounded support $[0, H]$. Let $\{R_{i, j}\}_{i\in [n], j\in [m]}$ be a monotone preference partition of the type space and let $\cF$ denote the space of all interim feasible allocations. Then:
\begin{equation}\label{eqn:rev-bound-cont}
    \opt(D) \leq \sup_{\pi\in \cF} \sum_i \EE_{t_i \sim D_i}\ps{\sum_j \pi_{ij}(t_i)\p{t_{ij}\cdot 1\pb{t_i \not \in R_{i, j}}
+\tilde{\varphi}_{ij}^*(t_{ij})\cdot 1\pb{ t_i \in R_{i, j}}}}
\end{equation}
where $\tilde{\varphi}_{ij}^*(t_{ij})=\max(\tilde{\varphi}_{ij}(t_{ij}),0)$ and $\tilde{\varphi}_{ij}(t_{ij})$ represents Myerson's ironed virtual value function \cite{Myerson81} for the distribution $D_{ij}$.
\end{lemma}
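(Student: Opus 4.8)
The plan is to first establish the \emph{discrete} analogue of the bound --- in which every $D_{ij}$ is replaced by a distribution on finitely many points --- by instantiating the Lagrangian duality framework of Cai et al.~\cite{CDW16} with a dual solution whose structure is dictated by the monotone preference partition $\{R_{i,j}\}$, and then to pass to the continuous statement via the discretization argument of Appendix~\ref{app:proof-disc}.

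\emph{Step 1 (discrete duality).} For discrete type spaces, write $\opt$ as the value of the standard linear program whose variables describe an interim (reduced-form) mechanism $(\pi,p)$: maximize $\sum_i \EE_{t_i}[p_i(t_i)]$ subject to (a) interim feasibility $\pi \in \cF$, (b) the Bayesian incentive-compatibility inequalities relating every ordered pair of types $t_i, t_i'$ of each bidder $i$, and (c) interim individual rationality. Relax the BIC constraints (b) with nonnegative multipliers $\lambda_i(t_i,t_i')$ and form the partial Lagrangian. As in \cite{CDW16}, for the dual to yield a finite, useful bound the multipliers must form a valid \emph{flow}: viewing $\lambda_i(\cdot,\cdot)$ as a flow on the digraph with one node per type (plus source/sink), complementary-slackness bookkeeping shows the Lagrangian equals $\sup_{\pi\in\cF}\sum_i \EE_{t_i}[\sum_j \pi_{ij}(t_i)\,\Phi_{ij}(t_i)]$ for a flow-induced ``virtual value'' $\Phi_{ij}$. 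We choose the flow region by region according to the partition: inside $R_{i,j}$ we route flow only along the $j$-th coordinate, in the direction of increasing $t_{ij}$, exactly as in Myerson's ironing for the marginal $D_{ij}$, so that $\Phi_{ij}=\tilde{\varphi}_{ij}^*(t_{ij})$ there; for $k\ne j$ on $R_{i,j}$, and on $R_{i,0}$, no flow crosses coordinate $k$, so $\Phi_{ik}=t_{ik}$. The key point is that non-decreasingness of the preference functions $\cU_{i,j}$ from Definition~\ref{def:MPP} guarantees this flow is well-defined and conserved: moving up in $t_{ij}$ keeps $\cU_{i,j}(t_{ij})$ maximal, so $R_{i,j}$ is upward-closed along coordinate $j$ and flow routed upward in $t_{ij}$ never leaves $R_{i,j}$ prematurely, letting the ironing construction be carried out consistently. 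Weak LP duality then gives the discrete version of \eqref{eqn:rev-bound-cont}.

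\emph{Step 2 (continuous limit).} For $\epsilon>0$, discretize each $D_{ij}$ to a grid of width $\epsilon$ on $[0,H]$ (rounding down), obtaining product distributions $D^\epsilon$, and apply the same functions $\cU_{i,j}$ restricted to the grid to get a monotone preference partition of the discretized type space; Step 1 bounds $\opt(D^\epsilon)$ by the corresponding discrete right-hand side $R^\epsilon$. It then suffices to let $\epsilon\to 0$: (i) $\opt(D^\epsilon)\to\opt(D)$, via a coupling of $D$ with $D^\epsilon$ (types differ by at most $\epsilon$ per coordinate) together with boundedness of the support, which makes optimal revenue continuous in the distribution in the relevant sense; (ii) $\tilde{\varphi}_{ij}^{\epsilon,*}\to\tilde{\varphi}_{ij}^*$ in $L^1$, since $D_{ij}$ is continuous with bounded density; (iii) the region indicators converge almost everywhere, since convergence can fail only on a boundary set $\{\cU_{i,j}(t_{ij})=\cU_{i,k}(t_{ik})\}$ or $\{\cU_{i,j}(t_{ij})=0\}$, which has $D_i$-measure zero by the continuity/bounded-density assumption; and (iv) the $\sup$ over $\cF$ is stable under these perturbations because $\cF$ is compact and the integrand is uniformly bounded (values in $[0,H]$, virtual values bounded). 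Combining (i)--(iv), $\opt(D)=\lim_{\epsilon\to 0}\opt(D^\epsilon)\le\lim_{\epsilon\to 0}R^\epsilon=$ RHS of~\eqref{eqn:rev-bound-cont}. The details appear in Appendix~\ref{app:proof-disc}.

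\emph{Main obstacle.} The crux is Step 1: unlike \cite{CDW16}, which fixes one specific partition (in terms of ex post utilities in a second-price auction), we must show the duality argument goes through for \emph{any} monotone preference partition. Concretely, one must verify that the partition-induced flow is nonnegative and conserved --- exactly where non-decreasingness of the $\cU_{i,j}$ is used --- and that, summing the per-coordinate contributions, the Lagrangian collapses to the stated mix of raw values $t_{ij}$ off $R_{i,j}$ and ironed virtual values $\tilde{\varphi}_{ij}^*(t_{ij})$ on $R_{i,j}$. Step 2 is technically routine given bounded support and continuous densities, but care is needed at the measure-zero boundaries of the regions so the discretized indicators converge and limits may be exchanged with the expectation and with the $\sup$ over $\cF$.
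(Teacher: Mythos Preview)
Your approach is correct in outline and matches the paper's: discretize, invoke the discrete \cite{CDW16} bound, and pass to the limit. However, you misidentify where the work lies.

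You flag Step~1 as the crux --- making the duality/flow go through for an arbitrary monotone preference partition. In fact \cite{CDW16} already states their discrete bound (Theorem~31 as quoted in the paper) for \emph{any} upward-closed partition. The paper simply observes that monotone preference partitions are upward-closed (a two-line check using non-decreasingness of $\cU_{i,j}$) and invokes the theorem as a black box. There is no need to re-derive the Lagrangian flow, and monotonicity of the $\cU_{i,j}$ is \emph{not} used in Step~1 beyond verifying upward-closedness. Also, the discrete bound yields $\tilde{\varphi}_{ij}$, not $\tilde{\varphi}_{ij}^*$; the relaxation to the non-negative part is made later, during the limit analysis, precisely so that the ironed virtual value term is monotone and its discretization error can be controlled.

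The actual technical content is in Step~2, and your item~(iii) is under-argued. The claim that the boundary set $\{\cU_{i,j}(t_{ij})=\cU_{i,k}(t_{ik})\}$ has $D_i$-measure zero ``by the continuity/bounded-density assumption'' is not true in general: the $\cU_{i,j}$ are arbitrary non-decreasing functions and may be constant on intervals, so the equality set can have positive measure. What is needed (and what the paper proves as a separate geometric lemma) is that the probability of the \emph{discretization changing the region}, $\Pr[t_i\in R_{i,j},\, t_i^+\in R_{i,k}]$, is $O(\epsilon^2)$. The paper's argument projects to the $(t_{ij},t_{ik})$-plane, observes that by monotonicity the partition border is a monotone curve, and hence can intersect at most $O(H/\epsilon^2)$ grid cells of side $\epsilon^2$, each carrying mass $O(P\epsilon^4)$. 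This is where non-decreasingness of the preference functions is genuinely used. For item~(i) the paper does not rely on a bare coupling but on the revenue-continuity theorems of \cite{RW15,DW12}, sandwiching $\opt(D)$ between two shifted discretizations.
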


The crucial conceptual contribution of our work is to consider monotone preference partitions of the type space that are described in terms of the interim utilities $u_{ij}^{b^j}(t_{ij})$ of the bidders at some equilibrium $b^j$ for each (potentially non-truthful) item auction $A$ for item $j$. All prior works in the area considered partitions of the type space as a function of ex-post utilities and solely based on the outcomes of a truthful auction $A$ for each item.
In particular, our region definition will assign type $t_i$ to region $j\in \{1, \ldots, m\}$, if item $j$ achieves the highest non-zero interim utility $u_{ij}^{b^j}(t_{ij})$, among all items (and to region $0$ if all interim utilities are zero). By monotonicity of interim utilities of any equilibrium in any single-dimensional mechanism, such a partition of the type space is a monotone preference partition. Hence, we can apply \Cref{lem:contCai}.

Subsequently, we analyze the right-hand-side of Equation~\eqref{eqn:rev-bound-cont} via a decomposition into four terms: $\single$, $\under$, $\over$ and $\surplus$, similar to prior work \cite{BILW14,CDW16}. Our proof shows that this type of analysis can also be carried out even when the regions are defined in terms of interim utilities of non-truthful auctions, still yielding meaningful upper bounds in terms of the revenue of simple multi-item auctions.
The terms $\single$, $\under$ and $\over$ can all be shown to be upper bounded by the sum of the per-item optimal auction revenues; thereby reducing the problem to independent single-dimensional settings. The final term $\surplus$ is shown to be achievable as the revenue of the multi-item $A$ auction with a particular bundle entry fee.

In particular, $\single$ corresponds to the second summand in Equation~\eqref{eqn:rev-bound-cont}, which can be shown to be upper bounded by the sum across items, of the maximum ironed virtual values for each item; which in turn is the optimal per item revenue. 
The first summand on the right-hand-side of Equation~\eqref{eqn:rev-bound-cont} can be divided into the quantities $\under$, $\over$ and $\surplus$. $\under$ corresponds to the part of the event that $t_i$ is not in region $j$ because player $i$ did not bid high enough on item $j$ and hence was not allocated the item. This is exactly where we use the $c$-type-loss trade-off property to show that this quantity, which is roughly the type of the player that lost the item $j$ under equilibrium $b^j$, can be related to the revenue achievable by the optimal auction for item $j$. This property is a non-trivial property of auction $A$ and we will show that it is satisfied by many auctions of interest in the next few sections.

What remains from the first summand, is accounting for the type of the player in the event that player $i$ bids high enough to win auction $j$, but item $j$ is not player $i$'s favorite item as captured by the aforementioned interim utility score. Since in this case, the player received the item, she claims her type for item $j$ as a value, and hence we can relate her type to her utility plus the auctioneer revenue from player $i$ at item $j$. The revenue part is exactly the $\over$ term and it is easily shown to correspond to the revenue achieved by simultaneous $A$-item-auctions without any entry fee. The utility part of this decomposition is the $\surplus$ term, which is much harder to analyze and which roughly corresponds to sums of terms of the form:
\begin{equation}
\EE_{t_{i} \sim D_{i}}\ps{u^b_{ij}(t_{ij}) \cdot 1\ps{\exists k \ne j, u^b_{ik}(t_{ik}) \geq u^b_{ij}(t_{ij})}}
\end{equation}
This term can be shown to be related to the revenue achieved by an simultaneous $A$-item-auction with an entry fee $e_i$, that satisfies that the probability of entry for each player is at least $1/2$. More concretely it satisfies that:
\begin{equation}
    \Pr_{t_i}\ps{\sum_{j} u_{ij}^b(t_{ij}) > e_i} \geq \frac12 
\end{equation}
In fact, we show that it relates to the part of the revenue stemming solely from the collection of entry fees from entrant players. This is where the $\efrev(A,D,e)$ term appears in our upper bound.  The details of this part of the analysis are provided in \Cref{app:surplus}.

\section{Approximately Optimal Fixed Entry-Fee Truthful Auctions}

As a starting point we apply our main theorem to the case where the single item auction $A$ is the second-price (SP) auction; where the highest bidder for an item wins and pays the second highest bid. This will yield a simple and truthful auction that approximates the optimal revenue.  As mentioned previously and as we will demonstrate now, for $SP$ specifically, the unmodified $\ESP$ auction is sufficient for our revenue bounds.  The $\mathsf{rand}$ and $\mathsf{ghost}$ modifications to $\EA$ are unnecessary here as the truthful bidding of $\ESP$ simplifies interim utility analysis.

\SCedit{Recall \Cref{thm:main}, which says that we can bound the revenue of the optimal multi-item auction $\opt(D)$ by
$$\opt(D) \leq (c+5) \cdot \sum_{j=1}^m \opt(D_j) + 2\cdot \efrev^b(A,D,e)$$
as long as the single-item auction $A$ satisfies the $c$-type-loss trade-off property.  We can achieve revenue $\sum_{j=1}^m \opt(D_j)$ by selling items separately via Myerson auctions \cite{Myerson81}.  Thus, if we can
\begin{itemize}
    \item Demonstrate a simple mechanism that obtains revenue at least $\efrev^b(A,D,e)$
    \item Verify the $c$-type-loss trade-off property precondition
\end{itemize}
then we can conclude that simple mechanisms capture a constant factor of the optimal revenue.  We defined
\begin{equation*}
    \efrev^b(A,D,e) = \sum_i e_i\Pr_{t_i \sim D_{i}}\ps{\sum_j u^b_{ij}(t_{ij}) \geq e_i}
\end{equation*}
where $u^b_{ij}(t_{ij})$ is the interim expected utility bidder $i$ obtains in an auction for item $j$ with type $t_{ij}$ facing adversaries with types distributed according to $D_{-i,j}$ and under bidding equilibrium $b$.  We see $e_i\Pr_{t_i \sim D_{i}}\ps{\sum_j u^b_{ij}(t_{ij}) \geq e_i}$ is the revenue achievable by charging bidder $i$ an entry fee $e_i$ to participate in separate $A$ auctions on all the items $j$, in expectation over her type $t_i$.  This entry-fee revenue is achievable by the $\REA$ and $\GEA$ auctions.  In these auctions, $u^b_{ij}(t_{ij})$ will still be the exact interim utility bidder $i$ will receive in the subsequent item $j$ auction if she chooses to enter.  Since she is competing against the bids of even those bidders who choose not to enter, she faces competition of type distribution $D_{-i,j}$ regardless of entry decisions, and $b$ will still be a bidding equilibrium.
\begin{equation}
    \efrev^b(\REA(e))=\efrev^b(\GEA(e))=\efrev^b(A,D,e)
\end{equation}
However, under the unmodified $\EA$ auction, priors will shift based on entry fee decisions.  Recall the definition
\begin{equation}
    T_i^{+}(e_i) = \set{t_i \in T_i \middle | \sum_j u_{ij}(t_{ij}) \geq e_i}
\end{equation}
In the subsequent single-item auctions of this mechanism, the prior distribution on bidder $i$'s type will be $D_i^{+}(e_i) \sim D_i \cdot \mathbbm{1}[t_i \in T_i^{+}(e_i)]$.  That is, we can think of the sampling process of $D_i^{+}(e_i)$ as first drawing a sample $(t_{i1},\cdots,t_{im})$ from $D_i$, and then lowering all the types to $(0,\cdots,0)$ if the sample does not belong to $T_i^{+}(e_i)$.  Intuitively, bidders will have even greater interim utility facing opponents with types distributed according to $D^+$ instead of $D$, as they are facing less competition.  Indeed, it seems almost paradoxical to achieve the goal of high utilities by imposing artificial obstacles and injecting ghost bids.  Our intuition is therefore that unmodified $\EA$ gets strictly better entry-fee revenue than $\REA$ or $\GEA$.  For first-price and all-pay auctions, this intuition is not immediately verifiable as the shifted type distributions will lead to a shifted bidding equilibrium.  For second price though, as truthful bidding will remain the focal equilibrium of the auction, we can verify this seemingly obvious fact.}

\begin{lemma}[$\ESP$ achieves good entry fee revenue]\label{ESP-efrev}
\begin{equation}
    \efrev(\ESP(e)) \geq \efrev(\SP,D,e)
\end{equation}
\end{lemma}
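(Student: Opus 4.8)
The plan is to exploit the defining feature of the second-price auction: bidding one's true value is weakly dominant in each item auction regardless of which bidders participate and regardless of the beliefs the other participants hold about bidder $i$'s type. Because of this, it suffices to work with the equilibrium of $\ESP(e)$ in which every bidder bids truthfully in every item auction, and to compare its entry region to the set $T_i^{b,+}(e_i)=\{t_i:\sum_j u^b_{ij}(t_{ij})\geq e_i\}$ that appears in the definition of $\efrev(A,D,e)$. Since $\efrev(\ESP(e))=\sum_i e_i\Pr_{t_i\sim D_i}[\,i\text{ enters}\,]$ and each $e_i\geq 0$, the lemma reduces to showing that at this equilibrium every type $t_i$ with $\sum_j u^b_{ij}(t_{ij})\geq e_i$ chooses to enter, up to a measure-zero set of types.

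The first step is to record the monotonicity fact that drives the argument. In a second-price item auction with truthful bids, a participant with value $v$ on item $j$ obtains ex-post utility $\max\!\big(0,\ v-\max_{k\in C} b_{kj}\big)$, where $C$ is the set of competitors actually present (with the convention that the maximum over the empty set is $0$), and this quantity is pointwise non-increasing in $C$. Hence, for any type $t_i$, any realization of the other bidders' types, and any (possibly randomized) entry strategies of the other bidders, the gross utility bidder $i$ gets on item $j$ by entering $\ESP(e)$ and bidding truthfully is at least $\max\!\big(0,\ t_{ij}-\max_{k\neq i}t_{kj}\big)$. Taking expectations over $t_{-i}$ and using independence of the item components, this gives $\hat u_{ij}(t_{ij})\geq u^b_{ij}(t_{ij})$, where $\hat u_{ij}$ denotes bidder $i$'s interim utility on item $j$ inside $\ESP(e)$ and $u^b_{ij}$ is her interim utility in the isolated second-price auction for item $j$ with the full field (whose essentially unique equilibrium is truthful). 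Because the item auctions in $\ESP(e)$ run independently, the total gross utility from entering is $\sum_j\hat u_{ij}(t_{ij})\geq\sum_j u^b_{ij}(t_{ij})$.

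The second step closes the argument. If $t_i$ satisfies $\sum_j u^b_{ij}(t_{ij})>e_i$, then the total gross utility from entering and bidding truthfully strictly exceeds $e_i$, so entering is a strict best response and bidder $i$ enters in every such equilibrium; since each $D_i$ is continuous, the boundary set $\{t_i:\sum_j u^b_{ij}(t_{ij})=e_i\}$ has zero $D_i$-measure, so $\Pr_{t_i\sim D_i}[\,i\text{ enters}\,]\geq\Pr_{t_i\sim D_i}\big[\sum_j u^b_{ij}(t_{ij})\geq e_i\big]$. Multiplying by $e_i$ and summing over $i$ yields $\efrev(\ESP(e))\geq\sum_i e_i\Pr_{t_i\sim D_i}\big[\sum_j u^b_{ij}(t_{ij})\geq e_i\big]=\efrev(A,D,e)$.

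The only point requiring care is exactly the one the surrounding discussion flags as an obstacle for first-price and all-pay auctions but which disappears for second price: once entry decisions are fixed, the other bidders' posterior on $t_i$ becomes $D_i^{b,+}(e_i)$, which for general $A$ could perturb the item-auction equilibria and hence $\hat u_{ij}$. For second price this is a non-issue, since truthful bidding stays weakly dominant for every participant under any beliefs, so the truthful item-bidding strategies can be fixed throughout and only the entry game needs to be reasoned about; the monotonicity fact then handles the rest. (One may also note that $\ESP(e)$ admits such an equilibrium via a standard fixed-point argument on the entry subgame, but existence is not actually needed for the inequality, which holds at any truthful-item-bidding equilibrium.) I expect the monotonicity observation and its careful quantification over the other bidders' randomized entry behavior to be the main—though still mild—technical content.
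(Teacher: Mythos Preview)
Your proposal is correct and follows essentially the same line as the paper. Both arguments hinge on the single monotonicity observation that, in a second-price item auction with truthful bidding, removing competitors can only weakly raise a participant's utility; hence each bidder's interim utility inside $\ESP(e)$ is at least her full-field interim utility $u^b_{ij}(t_{ij})$, so every type in $T_i^{+}(e_i)$ enters. The paper packages this by introducing the ``zeroed-out'' distribution $D_i^+$ and noting $u^{D^+}_{ij}\ge u^{D}_{ij}$ via stochastic dominance of the highest opposing bid, whereas you argue the same inequality pointwise over realizations of opponents' types and entry decisions; this is a presentational difference, not a substantive one.
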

\begin{proof}
\SCedit{Slightly abusing notation, in the context of the truthful second-price auction, we replace $u^b_{ij}(t_{ij})$ representing interim utility under bidding equilibrium $b$ with $u^D_{ij}(t_{ij})$ representing interim utility under truthful bidding and type distributions $D$.}  The revenue that the $\ESP$ mechanism extracts from the entry fees $e$ alone is
\begin{equation}
    \efrev(\ESP(e)) = \sum_i e_i\Pr_{t_i \sim D_{i}^+}\ps{\sum_j u^{D^+}_{ij}(t_{ij}) \geq e_i}
\end{equation}
\SCedit{where $u_{ij}^{D^+}(t_{ij})$ denotes the bidder $i$, item $j$ expected interim utility with opponent types distributed according to the $j^{\text{th}}$ marginals of $D^+_{k}(e_k)$ for $k \ne i$.}  We see
\begin{align}\label{eqn:esp-interim-util}
    u^{D^+}_{ij}(t_{ij})&=\EE_{t_{-i}\sim D^+_{-i}}\left[\left[t_{ij} - \max_{k\neq i} t_{kj}\right]_+\right]\geq \EE_{t_{-i}\sim D_{-i}}\left[\left[t_{ij} - \max_{k\neq i} t_{kj}\right]_+\right]=u^{D}_{ij}(t_{ij})
\end{align}
as the $j^{\text{th}}$ marginals of $D$ stochastically dominate those of $D^+$.  Therefore,
\begin{align}
    \efrev(\ESP(e)) &= \sum_i e_i\Pr_{t_i \sim D_{i}^+}\ps{\sum_j u^{D^+}_{ij}(t_{ij}) \geq e_i}\\
    &\geq \sum_i e_i\Pr_{t_i \sim D_{i}}\ps{\sum_j u^{D}_{ij}(t_{ij}) \geq e_i}=\efrev(\SP,D,e)
\end{align}
as desired.
\end{proof}

\SCedit{Thus, all that remains is to show} that the single-item second price auction satisfies the $c$-type-loss trade-off property. Recall that a single-item auction satisfies $c$-type-loss trade-off if, for any type distributions of the bidders $D = \times_iD_i$,
\begin{equation}\label{eqn:tlto}
    \EE_{t \sim D}\ps{\max_i t_i \p{1-\pi_i(t_i)}} \leq c \cdot \opt(D)
\end{equation}
where $\pi_i(t_i)$ is the interim expected allocation of bidder $i$ and $\opt(D)$ is the revenue of the optimal auction (Myerson).  We will show that the second price auction satisfies this for $c=1$. In fact, we will demonstrate a slightly stronger bound, that the left hand side of \eqref{eqn:tlto} is upper bounded by the revenue of the best posted price single-item mechanism $\Rev(\PP)$ (abbreviated $\PP$), which announces some fixed price and allocates to any bidder willing to pay it.  That is,
\begin{equation}
\PP(D) = \max_{r} r\Pr_{t \sim D}\ps{r \leq \max_i t_i} \tag{Posted Price Mechanism Revenue}
\end{equation}

\begin{lemma}[$1$-type-loss trade off of SP]\label{lem:TLTOSP}
In a single-item second-price auction with type vector distribution $D=\times_i D_i$ and under the truthful equilibrium $b$, we have:
$$t_i (1-\pi^b_i(t_i)) \leq \PP(D) \leq \opt(D)$$
for all bidders $i$ and all possible types $t_i$ of bidder $i$.
\end{lemma}
\begin{proof}
    The lemma follows by the following simple set of inequalities:
    \begin{align*}
        t_i (1-\pi^b_i(t_i))
        = t_i \Pr_{t_{-i} \sim D_{-i}}\ps{t_i \leq \max_{j \ne i} t_j}
        \leq \max_r r\Pr_{t_{-i} \sim D_{-i}}\ps{r \leq \max_{j \ne i} t_j}
        \leq \max_{r} r\Pr_{t \sim D}\ps{r \leq \max_i t_i}
    \end{align*}
\end{proof}

Thus we can invoke \Cref{thm:main} to show the following result.

\begin{corollary}\label{thm:esp}
Consider a multi-item auction with additive bidders and independent types across bidders $i$ and items $j$, distributed according to $D_{ij}$ and supported in $[0, H]$. There exists a set of player-specific entry-fees $e=(e_1,\ldots, e_n)$, such that under the truthful equilibrium of the $\ESP(e)$ auction:
\begin{align}
\opt(D) \leq~& 6 \cdot \sum_{j=1}^m \opt(D_j) + 2\cdot \efrev(\ESP(e))
\end{align}
where $\efrev(\ESP(e))$ is the revenue of the $\ESP(e)$ auction solely due to collection of entry fees.
\end{corollary}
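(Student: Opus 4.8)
The plan is to obtain this statement as a direct instantiation of Theorem~\ref{thm:main} with the single-item auction $A$ taken to be the second-price auction, so essentially all that remains is to check that $A$ satisfies the two hypotheses of that theorem and then to reconcile the $\efrev$ term appearing there with the actual entry-fee revenue of the $\ESP$ mechanism.

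First I would verify the preconditions of Theorem~\ref{thm:main} for $A=$ second-price. For interim individual rationality: for each item $j$, truthful bidding is a weakly dominant strategy and hence a BNE of the second-price auction under $D_j = \times_i D_{ij}$, and every bidder can guarantee zero utility by bidding $0$, so this equilibrium is interim IR (and its interim utilities are monotone in type, as needed to instantiate the monotone preference partition). For the $c$-type-loss trade-off: Lemma~\ref{VCG} already gives the pointwise bound $t_i\p{1-\pi^b_i(t_i)} \leq \PP(D) \leq \opt(D)$ under the truthful equilibrium, and taking $\max_i$ and then $\EE_{t\sim D}$ preserves the inequality, so $A$ satisfies the $c$-type-loss trade-off with $c=1$.

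Next, I would invoke Theorem~\ref{thm:main} with $c=1$. It yields player-specific entry fees $e=(e_1,\dots,e_n)$ with
\[
\opt(D) \leq (1+5)\sum_{j=1}^m \opt(D_j) + 2\cdot\efrev^b(A,D,e) = 6\sum_{j=1}^m \opt(D_j) + 2\cdot\efrev(A,D,e),
\]
where we drop the superscript $b$ since $b$ is the truthful equilibrium.

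Finally, the term $\efrev(A,D,e)$ is the ``naive'' entry-fee revenue, computed as if opponents' beliefs about a participating bidder were still $D_i$ rather than the truncated $D_i^+(e_i)$ induced by the entry decision; the one point that needs care is that this truncation could in principle change interim utilities and hence entry probabilities. This is exactly resolved by Lemma~\ref{ESP-efrev}: truthful bidding remains dominant regardless of the truncation, and the truncated marginals are stochastically dominated by the originals, so each bidder's interim utility in $\ESP(e)$ only increases, giving $\efrev(\ESP(e)) \geq \efrev(A,D,e)$. Substituting this into the displayed bound completes the proof. I do not expect a genuine obstacle here---the heavy lifting is entirely in Theorem~\ref{thm:main} and Lemma~\ref{VCG}---the only subtlety being the prior-shift issue handled by Lemma~\ref{ESP-efrev}.
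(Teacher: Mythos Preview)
Your proposal is correct and follows essentially the same approach as the paper: verify the $1$-type-loss trade-off for second-price via Lemma~\ref{VCG}, invoke Theorem~\ref{thm:main} with $c=1$, and then upgrade $\efrev(A,D,e)$ to $\efrev(\ESP(e))$ using Lemma~\ref{ESP-efrev}. The paper presents these ingredients in the surrounding text and simply states the corollary as their consequence, exactly as you outline.
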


Finally, let $\spr(r)$ denote the simultaneous second price auction with reserve prices $r_{ij}$ for each bidder $i$ and item $j$.  In this auction, each item $j$ is sold separately via a second price auction.  If bidder $i$ has the highest bid, and her bid exceeds the player specific reserve $r_{ij}$, she is allocated the item and charged the maximum of $r_{ij}$ and the second highest bid.  Otherwise, the item is not allocated. The results of \cite{hartline2009simple} show that, for regular distributions, this mechanism is a $2$-approximation to $\opt(D_j)$. Thus we can also get the following simplifying corollary:
\begin{corollary}\label{cor:item-bundle-reserve}
Consider a multi-item auction with additive bidders and independent types across bidders $i$ and items $j$, distributed according to $D_{ij}$ and supported in $[0, H]$. Suppose that type distributions $D_{ij}$ are regular. Then, for appropriately chosen parameters $r$, $e$, the better of: i) running simultaneous second price auctions with item and bidder specific reserve prices $\spr(r)$, ii) running simultaneous second price auctions with bidder specific bundle entry fees $\ESP(e)$, achieves a $14$-factor approximation to the optimal revenue.
\end{corollary}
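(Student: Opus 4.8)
The plan is to treat this as a direct consequence of Corollary~\ref{thm:esp}, combined with the classical single-dimensional guarantee for second-price auctions with monopoly reserves due to Hartline and Roughgarden~\cite{hartline2009simple}. Corollary~\ref{thm:esp} already supplies entry fees $e$ with
$$\opt(D) \leq 6\cdot\sum_{j=1}^m \opt(D_j) + 2\cdot \efrev(\ESP(e)),$$
so it suffices to bound each of the two right-hand terms by a constant times the revenue of one of the two proposed mechanisms, namely $\spr(r)$ and $\ESP(e)$.

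First I would bound the $\sum_j \opt(D_j)$ term. Fix an item $j$. Since each $D_{ij}$ is regular, \cite{hartline2009simple} guarantees that the single-item second-price auction run with the per-bidder Myerson monopoly reserves $r_{ij}$, applied lazily (the item is withheld if the highest bid does not clear that bidder's reserve), collects revenue at least $\tfrac12\opt(D_j)$. Running these $m$ single-item auctions in parallel is exactly $\spr(r)$ with $r=(r_{ij})_{i,j}$, and its revenue decomposes additively across items, so $\Rev(\spr(r)) \geq \tfrac12\sum_{j=1}^m \opt(D_j)$, i.e.\ $\sum_j \opt(D_j) \leq 2\,\Rev(\spr(r))$. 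Bounding the second term is immediate: $\efrev(\ESP(e))$ is, by definition, only the part of $\ESP(e)$'s revenue collected from entry fees, while the remaining payments (the second-highest bids collected in each item auction from entrants) are nonnegative, so $\efrev(\ESP(e)) \leq \Rev(\ESP(e))$, the revenue at the truthful equilibrium. Substituting both bounds gives $\opt(D) \leq 12\,\Rev(\spr(r)) + 2\,\Rev(\ESP(e)) \leq 14\cdot\max\{\Rev(\spr(r)),\,\Rev(\ESP(e))\}$; since the auctioneer knows the distributions $D_{ij}$, she can compute both quantities and run whichever mechanism yields the larger revenue, which is therefore a $14$-approximation to $\opt(D)$.

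There is no genuinely hard step here — the corollary is a bookkeeping consequence of Theorem~\ref{thm:main}/Corollary~\ref{thm:esp} and a cited result. The only points requiring a little care are that the Hartline--Roughgarden guarantee is a per-item statement which must be summed using the additive decomposition of $\spr(r)$'s revenue across items, and that the entry fees $e$ invoked above must be precisely those produced by Corollary~\ref{thm:esp}, so that the $\efrev(\ESP(e))$ appearing there is the same object we lower bound by $\Rev(\ESP(e))$. A slightly sharper accounting (as the paper's footnote notes for the online-learning variant) would improve the constant, but $14$ suffices for the stated claim.
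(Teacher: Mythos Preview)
Your proposal is correct and matches the paper's own (implicit) argument: invoke Corollary~\ref{thm:esp}, bound $\sum_j \opt(D_j) \leq 2\,\Rev(\spr(r))$ via the per-item Hartline--Roughgarden guarantee for regular distributions, and bound $\efrev(\ESP(e)) \leq \Rev(\ESP(e))$ trivially. The paper does not spell out more than this, so your level of detail is already at least what the paper provides.
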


\subsection{Online learnability when prior is unknown to auctioneer}

We conclude this section with a remark on the fact that \Cref{cor:item-bundle-reserve} gives rise to an auction rule that is easy for an auctioneer to optimize in an online manner from historical data, even when the prior distribution of types $D$ is not known to her. We will operate under the assumptions of \Cref{cor:item-bundle-reserve}.  Consider the following online learning setting: at each period $\tau$
\begin{enumerate}
    \item For all $i$, player $i$ draws her type $t_i^{\tau}\sim D_i$ 

    \item The auctioneer posts bidder-specific entry fees $e_i^{\tau}$ and (item, bidder)-specific reserves $r_{ij}^{\tau}$.
    \item Players report bids on all items $b_{ij}^{\tau}$ and their decision $z_i^{\tau}$ to enter in the entry fee mechanism.
    \item The auctioneer flips a coin $\cC^\tau$ and chooses $\spr(r^{\tau})$ with probability $1/2$ and $\ESP(e^{\tau})$ otherwise.
    \item The auctioneer runs the chosen mechanism on the reported input and receives revenue $\cR^{\tau}$
\end{enumerate}
Assuming that players are myopic (or equivalently that each period corresponds to a fresh draw of players from a population), then at each period $\tau$, it is a weakly dominant strategy for all players to report their true types: $b_{ij}^{\tau}=t_{ij}^{\tau}$ and to enter the entry fee mechanism if their belief of their interim utility $u_{i}(t_i^{\tau})=\sum_{j} u_{ij}(t_{ij}^{\tau})$ exceeds the entry fee $e_i^{\tau}$.\footnote{Given that our mechanisms are BIC we still need the players to know $D$ so as to make their entry decision. This is a minimal oracle we need from our bidders to run our auction.}


Each round $\tau \in [1:T]$ and for all $i,j$, the auctioneer will select parameters $r_{ij}^\tau$ and $e_i^\tau$ using $nm+n$ separate instances of the bandit-hedge algorithm, one instance for each parameter \cite{Bubeck2012}.  Each $r_{ij}^\tau$ will be selected from the discrete type space $[0,H]^\epsilon$, representing all of the multiples of $\epsilon$ from $[0,H]$.  Likewise, $e_{i}^\tau$ will be selected from the discrete type space $[0,Hm]^\epsilon$.  At each round, the performance of the chosen arm reported back to the bandit-hedge algorithm for parameters $r_{ij}$ and $e_i$ respectively are
\begin{align*}
    g_{ij}(r^\tau_{ij},\cC^\tau,t^\tau_{*j}) &:= 1\pb{\cC^\tau = \spr}\cdot\max\pb{r^\tau_{ij}, \max_{k\neq i} t^\tau_{kj}}\cdot 1\pb{t^\tau_{ij} \geq \max\pb{r^\tau_{ij}, \max_{k\neq i} t^\tau_{kj}}}\\
    h_{i}(e^\tau_{i},\cC^\tau,e^\tau_{-i},t^\tau_{i*}) &:= 1\pb{\cC^\tau = \ESP}\cdot e^\tau_{i} \cdot 1\pb{\sum_{j=1}^m u^{D^+(e^\tau_{-i})}_{ij}(t^\tau_{ij}) > e^\tau_i} 
\end{align*}

These quantities are observed by the learning auctioneer, as they are in terms of the payments and entry fee decisions of the bidders.  From \cite{Bubeck2012}, the bandit-hedge algorithm guarantees that, for any $\cC^{1:T},t_{*j}^{1:T}$,
\begin{align}
     \max_{r_{ij}^*\in [0, H]^\epsilon} \frac{1}{T} \sum_{\tau=1}^{T} g_{ij}(r^*_{ij},\cC^\tau,t^\tau_{*j}) &\leq \EE_{r_{ij}^{1:T}}\ps{\frac{1}{T}\sum_{\tau=1}^{T} g_{ij}(r^\tau_{ij},\cC^\tau,t^\tau_{*j})}+O\left( H\, \sqrt{\frac{H\log(H/\epsilon)}{\epsilon T}}  + \epsilon\right) \label{eq:olG}
\end{align}
and for any $\cC^{1:T},t_{i*}^{1:T},e_{-i}^{1:T}$,
\begin{align}
     &\max_{e_{i}^*\in [0, Hm]^\epsilon} \frac{1}{T} \sum_{\tau=1}^{T} \nonumber h_{i}(e^*_{i},\cC^\tau,e^\tau_{-i},t^\tau_{i*})\\
     &\leq \EE_{e_{i}^{1:T}}\ps{\frac{1}{T}\sum_{\tau=1}^{T} h_{i}(e^\tau_{i},\cC^\tau,e^\tau_{-i},t^\tau_{i*})}+O\left( H\, \sqrt{\frac{Hm \log(H\,m /\epsilon)}{\epsilon T}}  + \epsilon\right) \label{eq:olH}
\end{align}
taking expectation over the randomness of bandit-hedge.  The reward quantities $g_{ij}$ and $h_i$ have been chosen to satisfy the following.  For any fixed parameters $r^*,e^*$, we have
\begin{align}
    &\sum_{i,j}\EE_{\cC^{1:T},t^{1:T}}\ps{\frac{1}{T}\sum_{\tau =1}^T g_{ij}(r^*_{ij},\cC^\tau,t^\tau_{*j})}=\sum_{i,j}\EE_{\cC,t}\ps{g_{ij}(r^*_{ij},\cC,t_{*j})} \nonumber\\
    &=\EE_\cC\ps{1\set{\cC=\spr}}\sum_{i,j}\EE_{t}\ps{ \max\pb{r^*_{ij}, \max_{k\neq i} t_{kj}}\cdot 1\ps{t_{ij} \geq \max\pb{r^*_{ij}, \max_{k\neq i} t_{kj}}}} \nonumber\\
    &=\frac{1}{2} \Rev(\spr(r^*)) \label{eq:expG}
\end{align}
and for any $e_{-i}^{1:T}$,
\begin{align}
    &\sum_{i}\EE_{\cC^{1:T},t^{1:T}}\ps{\frac{1}{T}\sum_{\tau=1}^T h_{i}(e^*_{i},\cC,e^\tau_{-i},t_{i*})}=\sum_{i}\EE_{\cC,t}\ps{\max_\tau h_{i}(e^*_{i},\cC,e^\tau_{-i},t_{i*})}\nonumber\\
    &=\EE_\cC\ps{1\pb{\cC = \ESP}} \sum_i \EE_t\ps{e^*_{i} \cdot \max_\tau 1\pb{\sum_{j=1}^m u^{D^+(e^\tau_{-i})}_{ij}(t_{ij}) > e^*_i}}\nonumber\\
    &\leq \frac12 \sum_i \EE_t\ps{e^*_{i} \cdot 1\pb{\sum_{j=1}^m u^{D}_{ij}(t_{ij}) > e^*_i}} \label{eqn:interim-learn}\\
    &=\frac{1}{2} \efrev(\SP,D,e^*)\label{eq:expH}
\end{align}
where \eqref{eqn:interim-learn} follows from \eqref{eqn:esp-interim-util}.

\SCedit{From \Cref{thm:main}, \Cref{lem:TLTOSP}, and \cite{hartline2009simple}, we know that there exists bidder-specific entry fees $e^*$ and (item, bidder)-specific reserves $r^*$ such that, for regular type distributions $D$,}
\begin{equation}
    \max_{\substack{r^*_{ij} \in [0,H] \, \forall i,j\\ e^*_i \in [0,Hm]\, \forall i}}\max \set{\Rev(\spr(r^*)),\efrev(\SP,D,e^*)} \geq \frac{1}{14} \opt(D)
\end{equation}
and so
\begin{equation}
    \max_{\substack{r^*_{ij} \in [0,H] \, \forall i,j\\ e^*_i \in [0,Hm]\, \forall i}}\p{\frac12 \p{\Rev(\spr(r^*))+\efrev(\SP,D,e^*)}} \geq \frac{1}{28} \opt(D)
\end{equation}
Even restricting our parameters to multiples of $\epsilon$, we can show 
\begin{equation}\label{eq:eprevbound}
    \max_{\substack{r^{*,\epsilon}_{ij} \in [0,H]^\epsilon \, \forall i,j\\ e^{*,\epsilon}_i \in [0,Hm]^\epsilon\, \forall i}}\p{\frac12 \p{\Rev(\spr(r^{*,\epsilon}))+\efrev(\SP,D,e^{*,\epsilon})}} \geq \frac{1}{28} \opt(D) - \frac{\epsilon}{2}(mn+n)
\end{equation}
Observe that, for any entry fee $e_i^* \in [0, H\cdot m]$, if we consider the largest entry fee $e_i^{*,\epsilon}$ below $e_i^*$ that is a multiple of $\epsilon$, then we have that
$$1\pb{\sum_{j=1}^m u^{D}_{ij}(t_{ij}) > e_i^*} \leq 1\pb{\sum_{j=1}^m u^{D}_{ij}(t_{ij}) > e_i^{*,\epsilon}}$$
Therefore,
$$e_i^*\cdot 1\pb{\sum_{j=1}^m u^{D}_{ij}(t_{ij}) > e_i^*}\leq e_i^{*,\epsilon}\cdot  1\pb{\sum_{j=1}^m u^{D}_{ij}(t_{ij}) > e_i^{*,\epsilon}} + \epsilon$$
and
\begin{align}\label{eq:epr1}
    \efrev(\SP,D,e^*) \leq \efrev(\SP,D,e^{*,\epsilon}) + \epsilon n
\end{align}
Similarly, for every $r_{ij}^* \in [0, H]$, the largest reserve price $r_{ij}^{*,\epsilon}$ below $r_{ij}$ that is a multiple of $\epsilon$ achieves
$$1\ps{t_{ij} \geq \max\pb{r^*_{ij}, \max_{k\neq i} t_{kj}}}\leq 1\ps{t_{ij} \geq \max\pb{r^{*,\epsilon}_{ij}, \max_{k\neq i} t_{kj}}}$$
and so
\begin{align}\label{eq:epr2}
    \Rev(\spr(r^{*})) \leq \Rev(\spr(r^{*,\epsilon})) + \epsilon mn
\end{align}
and \eqref{eq:eprevbound} follows from \eqref{eq:epr1} and \eqref{eq:epr2}.  Thus,
\begin{align}
    &\frac{1}{28} \opt(D) - \frac{\epsilon}{2}(mn+n) \nonumber\\
    &\leq \max_{\substack{r^{*,\epsilon}_{ij} \in [0,H]^\epsilon \, \forall i,j\\ e^{*,\epsilon}_i \in [0,Hm]^\epsilon\, \forall i}}\p{\frac12 \p{\Rev(\spr(r^{*,\epsilon}))+\efrev(\SP,D,e^{*,\epsilon})}} \label{eq:fb1}\\
    &\leq \sum_{i,j}\EE_{\cC^{1:T},t^{1:T}}\ps{\max_{r^{*,\epsilon}_{ij} \in [0,H]^\epsilon} \frac{1}{T}\sum_{\tau =1}^T g_{ij}(r^{*,\epsilon}_{ij},\cC^\tau,t^\tau_{*j})} \label{eq:fb2}\\
    &+ \sum_{i}\EE_{\cC^{1:T},t^{1:T}}\ps{\max_{e^{*,\epsilon}_i \in [0,Hm]^\epsilon} \frac{1}{T}\sum_{\tau=1}^T h_{i}(e^{*,\epsilon}_{i},\cC,e^\tau_{-i},t_{i*})} \label{eq:fb3}\\
    &\leq \sum_{i,j}\EE_{\cC^{1:T},t^{1:T},r_{ij}^{1:T}}\ps{\frac{1}{T}\sum_{\tau=1}^{T} g_{ij}(r^\tau_{ij},\cC^\tau,t^\tau_{*j})}+mn \, O\left( H\, \sqrt{\frac{H\log(H/\epsilon)}{\epsilon T}}  + \epsilon\right)\label{eq:fb4}\\
    &+\sum_{i}\EE_{\cC^{1:T},t^{1:T},e_{i}^{1:T}}\ps{\frac{1}{T}\sum_{\tau=1}^{T} h_{i}(e^\tau_{i},\cC^\tau,e^\tau_{-i},t^\tau_{i*})}+n \, O\left( H\, \sqrt{\frac{Hm \log(H\,m /\epsilon)}{\epsilon T}}  + \epsilon\right)\label{eq:fb5}
\end{align}
where \eqref{eq:fb1} follows from \eqref{eq:eprevbound}, \eqref{eq:fb2} and \eqref{eq:fb3} follow from \eqref{eq:expG} and \eqref{eq:expH} respectively, and \eqref{eq:fb4} and \eqref{eq:fb5} follow from \eqref{eq:olG} and \eqref{eq:olH} respectively.  Therefore, setting $\epsilon=HT^{-1/3}$, for $\delta(n,m,H,T)=O\p{\frac{mnH\log(mT)}{T^{1/3}}}$, we have
\begin{align}
    \EE_{\cC^{1:T},t^{1:T},r^{1:T},e^{1:T}}\ps{\frac{1}{T}\sum_{\tau=1}^{T} \cR^\tau}\geq \frac{1}{28} \opt(D) -\delta(n,m,H,T)
\end{align}
as the quantities $g$ and $h$ are components of the revenue $\cR$ obtained by the learner, as desired.

\section{Approximately Optimal First Price Auctions}\label{sec:FPA}

We now move to the case where the auction $A$ is a non-truthful First Price auction ($\FP$); the highest bidder wins and pays her bid. First price single item auctions are known to admit monotone equilibria in our setup with a continuous bounded type distribution with a twice differentiable density \cite{Maskin2000} and under some extra assumptions these equilibria are also unique \cite{Lebrun06} (e.g. if we add any non-zero reserve price). Thus as long as we can show the $c$-type-loss trade-off property for the $\FP$ auction, we can apply \Cref{thm:main}.

\begin{lemma}[Type-Loss Trade-Off for FP]\label{lem:fp-type-loss}
In a single-item first-price auction, with any independent continuous type distribution $D=\times_i D_i$ and under any bid equilibrium $b$, we have
$$\EE_{t \sim D}\ps{\max_i t_i \p{1-\pi^b_i(t_i)}} \leq 4 \PP(D) \leq 4 \opt(D)$$
\end{lemma}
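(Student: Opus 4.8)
The second inequality $\PP(D)\le\opt(D)$ is immediate: posting a single price and selling to any willing bidder is a feasible dominant-strategy auction, so its revenue is at most the optimal revenue. For the first inequality the plan is a layer-cake argument driven by a single equilibrium deviation. Fix an equilibrium $b$, a bidder $i$, and a type $t_i$. In a first-price auction with no reserve the monotone equilibria in scope never prescribe a bid above a bidder's value --- such a type would get strictly negative interim utility on the positive-probability event that it wins, contradicting interim individual rationality --- so the distribution of the highest competing bid is stochastically dominated by that of the highest competing value. Hence deviating to any fixed bid $\beta\in[0,t_i)$ wins the item against the other bidders with probability at least $1-\bar G_{-i}(\beta)$, where $\bar G_{-i}(\beta):=\Pr_{t_{-i}}[\max_{k\ne i}t_k>\beta]$. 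Combining the equilibrium inequality $u^b_i(t_i)\ge(t_i-\beta)(1-\bar G_{-i}(\beta))$ with $u^b_i(t_i)\le t_i\,\pi^b_i(t_i)$ (payments are non-negative) and rearranging gives, for every $\beta\in[0,t_i)$,
\[
t_i\bigl(1-\pi^b_i(t_i)\bigr)\ \le\ \beta+(t_i-\beta)\,\bar G_{-i}(\beta).
\]

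Next I would reduce this family of per-bidder interim bounds to a single statement about the survival function of $\max_i t_i$. Set $\bar G(\beta):=\Pr_t[\max_k t_k>\beta]\ge\bar G_{-i}(\beta)$, and for $\ell\ge 0$ define $\tau(\ell):=\sup_{0\le\beta<\ell}\bigl[\beta+\tfrac{\ell-\beta}{\bar G(\beta)}\bigr]$. Applying the displayed bound for all $\beta<\ell$ (and $\bar G_{-i}\le\bar G$) shows that $t_i(1-\pi^b_i(t_i))>\ell$ forces $t_i>\tau(\ell)$, so
\[
\Pr_t\Bigl[\max_i t_i\bigl(1-\pi^b_i(t_i)\bigr)>\ell\Bigr]\ \le\ \Pr_t\Bigl[\max_i t_i>\tau(\ell)\Bigr]\ =\ \bar G\bigl(\tau(\ell)\bigr),
\]
and integrating over $\ell$ bounds $\EE_t[\max_i t_i(1-\pi^b_i(t_i))]$ by $\int_0^\infty \bar G(\tau(\ell))\,d\ell$.

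To close, I would invoke the ``revenue-curve'' self-bounding inequality $\beta\,\bar G(\beta)\le\PP(D)$, valid for every $\beta$ by definition of $\PP(D)$. This yields both $\bar G(\tau(\ell))\le\PP(D)/\tau(\ell)$ and, substituting into $\tau$, the estimate $\tau(\ell)\ge\sup_{0\le\beta<\ell}[\beta+(\ell-\beta)\beta/\PP(D)]=\tfrac{(\ell+\PP(D))^2}{4\PP(D)}$ whenever $\ell\ge\PP(D)$. Using $\bar G(\tau(\ell))\le1$ on $[0,\PP(D))$ and the preceding bound on $[\PP(D),\infty)$, the integral is at most $\PP(D)+\int_{\PP(D)}^{\infty}\tfrac{4\PP(D)^2}{(\ell+\PP(D))^2}\,d\ell=3\,\PP(D)\le 4\,\PP(D)$, which is the claimed bound.

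\textbf{Main obstacle.} The conceptual crux is the second step: one must resist taking expectations too early (a naive bound of the form $r+\bar G(r)\EE[(\max_i t_i-r)_+]$ loses a $\log(H/r)$ factor) and instead pass to the layer-cake form, where the deviating bid $\beta$ may be tuned to the level $\ell$; this is precisely what turns the per-bidder interim inequalities into a single inequality about $\bar G$, after which $\beta\bar G(\beta)\le\PP(D)$ makes the integral collapse. The routine but mildly delicate points are justifying ``bids never exceed values'' for the relevant equilibria (interim individual rationality plus monotonicity) and noting that ties and point masses have probability zero under the stated continuity assumptions, so the deviation win-probabilities are as claimed.
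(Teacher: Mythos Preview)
Your proof is correct and in fact yields the slightly sharper bound $3\,\PP(D)$ rather than the paper's $4\,\PP(D)$, but it proceeds along a genuinely different line than the paper.

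The paper first passes to $t_i(1-\pi^b_i(t_i))\le t_i-u^b_i(t_i)$ and then proves a \emph{Box Lemma}: writing $u(t)=\max_{r\le t}(t-r)F(r)$ and $a(t)=\max_{r\le t}r(1-F(r))$ for the CDF $F$ of the highest competing \emph{bid}, one has $\sqrt{u(t)}+\sqrt{a(t)}\ge\sqrt{t}$. Squaring and bounding $a_i(t_i)\le\PP(D)$ (via no-overbidding) gives $t_i-u^b_i(t_i)\le 2\sqrt{\PP(D)\cdot t_i}$. A separate \emph{Root Lemma}, $\EE\bigl[\sqrt{\max_i t_i}\bigr]\le 2\sqrt{\PP(D)}$, then finishes. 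By contrast, you apply no-overbidding immediately to replace the competing-bid distribution with the competing-\emph{type} distribution, obtain the pointwise inequality $t_i(1-\pi^b_i(t_i))\le\beta+(t_i-\beta)\bar G(\beta)$, and then run a layer-cake argument in which the deviating bid $\beta$ is tuned to each level $\ell$. Your $\tau(\ell)$ and the self-bounding $\beta\bar G(\beta)\le\PP(D)$ together make the tail integral collapse to $3\,\PP(D)$.

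What each buys: your argument is more elementary (no square-root algebra, no auxiliary geometric lemma) and tighter for this specific statement. The paper's decomposition is more modular: the Box Lemma is stated for an arbitrary $F$ and is reused verbatim for the all-pay case (Lemma~\ref{lem:ap-type-loss}), where the equilibrium deviation only gives $u^b_i(t_i)\ge t_iF_i(r)-r$ rather than $(t_i-r)F_i(r)$, producing an extra $-a_i(t_i)$ term that the Box Lemma absorbs. Your layer-cake route would need a nontrivial modification to handle that case.
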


Then by \Cref{cor:rand-EA-u}:

\begin{cor}\label{FPcor}
    Consider a multi-item auction with additive bidders and independent types across bidders $i$ and items $j$, distributed according to $D_{ij}$ and supported in $[0, H]$. For each item $j$, assume that type profile distributions $D_j=\times_i D_{ij}$ admit a unique equilibrium $b^j$ in a single-item first price auction for item $j$. Then there exists a set of player-specific entry-fees $e=(e_1,\ldots, e_n)$, such that:
    \begin{align}
    \opt(D) \leq~& 9 \cdot \sum_{j=1}^m \opt(D_j) + 2\cdot \efrev(\RFP(e))
    \end{align}
    where $\efrev(\RFP(e))$ is the revenue of the $\RFP(e)$ auction solely due to collection of entry fees.
\end{cor}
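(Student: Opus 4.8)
The plan is to obtain Corollary~\ref{FPcor} as a direct instantiation of Corollary~\ref{cor:rand-EA-u} with the single-item auction $A$ taken to be the first-price auction $\FP$, so the whole task reduces to checking the three hypotheses of that corollary and then reading off the constants. First, $\FP$ satisfies the $c$-type-loss trade-off with $c=4$ --- this is precisely Lemma~\ref{lem:fp-type-loss}, established in this section. Second, for each item $j$ the independent type profile $D_j=\times_i D_{ij}$ must admit a \emph{unique} monotone pure equilibrium $b^j$ of the single-item first-price auction: existence of a monotone pure equilibrium under the continuity assumptions in place is guaranteed by \cite{Maskin2000}, and uniqueness is the standing assumption of Corollary~\ref{FPcor} (it holds, e.g., whenever a positive reserve is present \cite{Lebrun06}, which is exactly the regime of the companion $\SFP(r)$ bound in Corollary~\ref{cor:first-price}). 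Third, $b^j$ is interim individually rational: in $\FP$ the action ``bid $0$'' wins only on the tie event, which has probability zero under a continuous $D_{ij}$, hence gives utility $0\ge 0$, so every equilibrium has $u^{b^j}_i(t_{ij})\ge 0$ for all types. With these in place, Corollary~\ref{cor:rand-EA-u} with $c=4$ yields entry fees $e=(e_1,\dots,e_n)$ with
\[
\opt(D) \le (c+5)\sum_{j=1}^m\opt(D_j)+\frac{2}{1-\delta}\,\efrev(\REA(e)) = 9\sum_{j=1}^m\opt(D_j)+\frac{2}{1-\delta}\,\efrev(\RFP(e)),
\]
since $\REA$ instantiated with $A=\FP$ is by definition $\RFP$. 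The coin-bias $\delta>0$ of $\RFP$ is a free parameter that may be taken arbitrarily small, so letting $\delta\to0$ (or keeping a fixed small $\delta$ and absorbing the $(1-\delta)^{-1}$ factor as a vanishing loss) gives the stated bound with coefficient $2$. Note that if $\FP$ admitted multiple equilibria on some item we would instead route through Corollary~\ref{cor:rand-EA-es} under an entry-fee-oblivious selection assumption, but the uniqueness hypothesis makes this unnecessary here.

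Consequently, all the real content of Corollary~\ref{FPcor} sits in two prior results: the duality-based framework Theorem~\ref{thm:main}/Corollary~\ref{cor:rand-EA-u} (whose proof is outlined in the Proof Outline and deferred to the appendix), and Lemma~\ref{lem:fp-type-loss}. The latter is the genuine obstacle and is where a from-scratch argument would spend essentially all its effort: one must bound $\EE_{t\sim D}[\max_i t_i(1-\pi^b_i(t_i))]$ by $4\,\PP(D)$ for an \emph{arbitrary} equilibrium $b$ of a possibly asymmetric single-item first-price auction. In the second-price case the analogue (Lemma~\ref{VCG}) is pointwise and trivial because the winner is literally the highest type; under first price the winner is only the highest \emph{bidder}, and asymmetric bid shading lets the highest type lose with constant probability, so no pointwise comparison to a posted price survives.

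To attack Lemma~\ref{lem:fp-type-loss} I would use monotonicity of the equilibrium bid functions $b_k$ to rewrite bidder $i$'s losing event at type $t_i$ as $\{\exists k\ne i:\ t_k\ge b_k^{-1}(b_i(t_i))\}$, i.e.\ the event that some competitor's type clears the ``effective threshold'' $b_k^{-1}(b_i(t_i))\le t_i$ at which $k$ would tie $i$'s bid, and then split according to how aggressively competitors shade. If every effective threshold is within a constant factor of $t_i$ (little relative shading), the first-price allocation is nearly as monotone as the second-price one and the argument of Lemma~\ref{VCG} carries through up to that constant, charging the contribution to posted-price revenue at a price $\approx t_i$. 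If some competitor's effective threshold is far below $t_i$ (a ``steep'' bid function --- aggressive bidding by that competitor), the aggressive bid itself certifies that a posted price set at the lower threshold already raises enough revenue to cover $t_i(1-\pi^b_i(t_i))$. The crux --- the ``novel geometric technique'' advertised in the introduction --- is to show that these two regimes cannot both be bad in aggregate: first price can misallocate frequently only when $\EE_t[\max_i t_i]$ is itself attainable as posted-price revenue, and when first price is nearly welfare-optimal the residual $\EE_t[\max_i t_i(1-\pi^b_i(t_i))]$ collapses to roughly the expected second-highest type, which is again $O(\PP(D))$. Pinning down this dichotomy quantitatively, and extracting the constant $4$, is the only nontrivial step; for Corollary~\ref{FPcor} proper, Lemma~\ref{lem:fp-type-loss} is invoked as a black box.
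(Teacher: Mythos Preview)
Your proof is correct and matches the paper's approach exactly: the paper's entire argument for Corollary~\ref{FPcor} is the one-line ``Then by Corollary~\ref{cor:rand-EA-u}'' immediately following Lemma~\ref{lem:fp-type-loss}, and you have spelled out precisely this instantiation (including the handling of the $(1-\delta)^{-1}$ factor, which the paper silently absorbs). Your second-half sketch of how one might prove Lemma~\ref{lem:fp-type-loss} is extraneous to the corollary and, for what it's worth, differs from the paper's actual route --- the paper does not split into regimes based on competitors' shading thresholds but instead proves a pointwise ``Box Lemma'' (Lemma~\ref{boxes}) relating $u_i^b(t_i)$, $t_i$, and a posted-price-type quantity $a_i(t_i)$ via the geometric inequality $\sqrt{u(t)}+\sqrt{a(t)}\ge\sqrt{t}$, then closes with a ``Root Lemma'' (Lemma~\ref{sqrt}) bounding $\EE[\sqrt{\max_i t_i}]$ by $2\sqrt{\PP(D)}$ --- but you correctly flag that the lemma is a black box here.
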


Moreover, the results of \cite{HHT14}, show that in a single-item auction settings with independent types and regular distributions $D_j=\times_i D_{ij}$, a first price auction with bidder specific reserves (equal to the monopoly reserve price of each bidder), achieves revenue at least $\frac{e-1}{2e}\opt(D_j)$. Thus if we denote with $\SFP(r)$, the simultaneous version of this auction with item and bidder specific reserves, we have:

\begin{blankcor}
    Consider a multi-item auction with additive bidders and independent types across bidders $i$ and items $j$, distributed according to $D_{ij}$ and supported in $[0, H]$. Suppose that type distributions $D_{ij}$ are regular and induce unique equilibria in separate first-price auctions for each of the items. Then, for appropriately chosen parameters $r$, $e$, the better of: i) running simultaneous first-price auctions with item and bidder specific reserve prices $\SFP(r)$, ii) running simultaneous first price auctions with bidder specific bundle entry fees $\RFP(e)$, achieves a $\frac{20e-2}{e-1}$-factor approximation to the optimal revenue.\footnote{The constant $\frac{20e-2}{e-1}$ arises from substituting $\SFP$ for the separate Myerson auctions $\sum \opt(D_j)$ in \cref{FPcor}, giving a revenue bound of $9\cdot\frac{2e}{e-1}+2$.}
\end{blankcor}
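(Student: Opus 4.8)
The plan is to derive this as a short corollary of Corollary~\ref{FPcor}, by combining it with the single-item revenue guarantee for reserve-price first-price auctions from~\cite{HHT14} and the trivial fact that entry-fee revenue lower-bounds total revenue. There is essentially no new content beyond bookkeeping and an arithmetic identity.

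First I would invoke Corollary~\ref{FPcor}: since the distributions $D_j=\times_i D_{ij}$ induce unique equilibria in separate first-price auctions, there exist entry fees $e$ with $\opt(D)\le 9\sum_{j=1}^m \opt(D_j)+2\,\efrev(\RFP(e))$. Because first-price payments and entry fees are all nonnegative, $\efrev(\RFP(e))\le \Rev(\RFP(e))$, so the last term is at most $2\,\Rev(\RFP(e))$.

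Next I would bound the per-item term. Take $r_{ij}$ to be the monopoly reserve price of $D_{ij}$; regularity makes this well-defined, and since it is strictly positive, adding it as a reserve makes the single-item first-price auction for item $j$ have a unique equilibrium by~\cite{Lebrun06}, whose equilibrium revenue is at least $\tfrac{e-1}{2e}\opt(D_j)$ by~\cite{HHT14}. Since bidders are additive and types are independent across items, the simultaneous auction $\SFP(r)$ decomposes into these $m$ independent single-item auctions, so $\Rev(\SFP(r))$ equals the sum of the per-item revenues and hence $\sum_j \opt(D_j)\le \tfrac{2e}{e-1}\,\Rev(\SFP(r))$. Substituting gives $\opt(D)\le \tfrac{18e}{e-1}\,\Rev(\SFP(r))+2\,\Rev(\RFP(e))\le\bigl(\tfrac{18e}{e-1}+2\bigr)\max\{\Rev(\SFP(r)),\,\Rev(\RFP(e))\}$, and the identity $\tfrac{18e}{e-1}+2=\tfrac{20e-2}{e-1}$ yields the claim: the better of the two auctions — which the auctioneer can identify since she knows $D$ — is a $\tfrac{20e-2}{e-1}$-approximation to $\opt(D)$.

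I do not expect a genuine obstacle here. The only points that warrant a line of justification are that $\SFP(r)$ truly factorizes across items (so that revenue adds and the unique per-item equilibria assemble into the simultaneous equilibrium), which uses additivity and cross-item independence, and that the monopoly reserves are strictly positive so that the uniqueness result of~\cite{Lebrun06} and the revenue bound of~\cite{HHT14} both apply without caveats.
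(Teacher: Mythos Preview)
Your proposal is correct and follows essentially the same route as the paper: invoke Corollary~\ref{FPcor}, replace $\sum_j \opt(D_j)$ by $\tfrac{2e}{e-1}\Rev(\SFP(r))$ via the $\tfrac{e-1}{2e}$ bound from~\cite{HHT14}, bound $\efrev(\RFP(e))\le \Rev(\RFP(e))$, and combine using $\tfrac{18e}{e-1}+2=\tfrac{20e-2}{e-1}$. Your extra remarks on factorization of $\SFP(r)$ across items and on positivity of monopoly reserves (so that the uniqueness and revenue results apply) are details the paper leaves implicit.
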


This is the \emph{first multi-dimensional revenue approximation result in the literature that is based solely on winner-pays-bid mechanisms.} The use of first price auction based mechanisms is for instance desirable in settings with multiple competing auctioneers \cite{Renato2020} and many real-world systems rely on first price auction rules \cite{google}. Thus understanding their revenue guarantees is of practical importance.

Applying \Cref{cor:ghost-EA} instead of \Cref{cor:rand-EA-u} will give a revenue bound that replaces $\efrev(\RFP(e))$ with $\efrev(\GFP(e,b))$ (at the focal equilibrium).  This is a step towards achieving a multi-dimensional revenue bound that is credible.  However, $\GFP$ is still not credible in the formal sense defined in \cite{AL18}: whenever a ghost bidder wins, the auctioneer has incentive to deviate, without the bidders noticing, and allocate the item to an entrant bidder. In the next \Cref{sec:all-pay}, we show how this problem can be fixed by switching to all-pay auctions.

\subsection{Proof of \Cref{lem:fp-type-loss}: Type-Loss Trade-Off for FPA}\label{sec:fp-type-loss}

First we note that by the fact that utilities are quasi-linear:
\begin{align*}
    t_i\, (1-\pi^b_i(t_i))\leq t_i -\p{\pi^b_i(t_i)\, t_i-p^b_i(t_i)} = t_i-u^b_i(t_i)
\end{align*}
Thus it suffices to show that:
\begin{equation}
    \EE_{t \sim D}\ps{\max_i\p{t_i-u^b_i(t_i)}} \leq 4 \PP(D)
\end{equation}

Let $B_i^{b}(t_{-i}) = \max_{k\neq i} b_k(t_k)$ denote the highest other bid in the $\FP$ auction as a function of the type profile of player $i$'s opponents. Then by the rules of the $\FP$ auction and the BNE condition:
\begin{align*}
    u^b_i(t_i) \geq \max_{r\leq t_i}\, (t_i-r) \Pr_{t_{-i} \sim D_{-i}}\ps{B^b_i(t_{-i}) < r}
\end{align*}
As a first step we show a structural lemma that connects a player's interim equilibrium utility with her type and the distribution of the highest other bid. 
\begin{lemma}[Box Lemma]\label{boxes}
    Let $F: \RR>0 \to [0,1]$ be any function and let $t >0$.  Consider the quantities $u(t) = \max_{b} (t-b)\, F(b)$
    and $a(t) = \max_{r \leq t} r\, (1-F(r))$.
    Then, $\sqrt{u(t)}+\sqrt{a(t)} \geq \sqrt{t}$.
\end{lemma}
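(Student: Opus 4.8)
The plan is to collapse the statement to a single one-variable inequality that holds \emph{pointwise} in the parameter over which the two maxima are taken, and then to plug in the value of that parameter which minimizes the resulting expression. Write $u=u(t)$ and $a=a(t)$. Since $(t-b)F(b)\le 0$ for $b>t$, we may restrict the maximum defining $u$ to $b\in[0,t]$ (and the one for $a$ is already over $r\le t$); in particular $u,a\ge 0$. The key step I would carry out first: for \emph{every} $x\in(0,t)$ the two defining maxima give simultaneously $u\ge (t-x)F(x)$ and $a\ge x\bigl(1-F(x)\bigr)$, so dividing by $t-x$ and by $x$ respectively and adding,
\[
\frac{u}{t-x}+\frac{a}{x}\;\ge\;F(x)+\bigl(1-F(x)\bigr)\;=\;1\qquad\text{for all }x\in(0,t).
\]
Geometrically this is exactly the ``box'' picture: the point $(x,F(x))$ sits below the hyperbola $y=u/(t-x)$ and above the hyperbola $y=1-a/x$, so these curves cannot cross, and the two boxes of the name are the rectangle $[x,t]\times[0,F(x)]$ of area $(t-x)F(x)$ and the rectangle $[0,x]\times[F(x),1]$ of area $x(1-F(x))$.

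\textbf{Degenerate cases.} Before optimizing I would clear the two boundary cases. If $u=0$, then $(t-x)F(x)\le 0$ forces $F(x)=0$ for all $x\in(0,t)$, hence $a\ge x(1-F(x))=x$ for every $x<t$, so $a\ge t$ and already $\sqrt a\ge\sqrt t$. Symmetrically, if $a=0$, then $x(1-F(x))\le 0$ forces $F(x)=1$ for all $x\in(0,t)$, hence $u\ge(t-x)$ for every $x<t$, so $u\ge t$ and $\sqrt u\ge\sqrt t$. (Note this also shows monotonicity of $F$ is never used; only $F\in[0,1]$ matters.)

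\textbf{Main case.} With $u,a>0$, I would substitute into the displayed inequality the minimizer of its left-hand side, $x^\star=\dfrac{\sqrt a}{\sqrt u+\sqrt a}\,t\in(0,t)$, for which $t-x^\star=\dfrac{\sqrt u}{\sqrt u+\sqrt a}\,t$. This gives
\[
1\;\le\;\frac{u}{t-x^\star}+\frac{a}{x^\star}\;=\;\frac{\sqrt u(\sqrt u+\sqrt a)}{t}+\frac{\sqrt a(\sqrt u+\sqrt a)}{t}\;=\;\frac{(\sqrt u+\sqrt a)^2}{t},
\]
so $(\sqrt u+\sqrt a)^2\ge t$, i.e. $\sqrt{u(t)}+\sqrt{a(t)}\ge\sqrt t$, as claimed.

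\textbf{Expected obstacle.} The only genuine idea is the first step: realizing that one must probe the two separate maxima at a \emph{common} point $x$ (so that the $F(x)$-terms cancel to give the constant $1$) rather than choosing the best $b$ and the best $r$ independently, and then tuning $x$ to $x^\star$. Everything afterwards is a two-line computation and the degenerate cases are routine. One could phrase the last computation as Cauchy--Schwarz in Engel form, $\frac{u}{t-x}+\frac{a}{x}\ge\frac{(\sqrt u+\sqrt a)^2}{t}$, but since we need an $x$ at which the left side is \emph{small}, evaluating directly at the minimizer $x^\star$ is cleaner.
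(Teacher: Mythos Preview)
Your proof is correct and essentially identical to the paper's: both derive the pointwise inequality $\frac{u}{t-x}+\frac{a}{x}\ge 1$ for $x\in(0,t)$ (the paper phrases it as $\overline{F}(x)\ge\underline{F}(x)$ for the two hyperbolas $\overline{F}(x)=\frac{u}{t-x}$ and $\underline{F}(x)=1-\frac{a}{x}$), and both then evaluate at the minimizer $x^\star=\frac{\sqrt{a}}{\sqrt{u}+\sqrt{a}}\,t$ to obtain $(\sqrt{u}+\sqrt{a})^2\ge t$. Your treatment of the degenerate cases $u=0$ and $a=0$ is in fact slightly more careful than the paper's, which implicitly assumes both quantities are positive when plugging in $x^\star$.
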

\begin{proof}
By definition of $u(t), a(t)$, we must have $F(x) \leq \overline{F}(x):=\frac{u(t)}{t-x}$ and $F(x) \geq \underline{F}(x):= 1-\frac{a(t)}{x}$ for all $x \in [0,t]$ (see \Cref{fig:boxes}). Thus, we must have $\overline{F}(x)\geq \underline{F}(x)$ for all $x\in [0, t]$, i.e.:
\begin{equation}\label{eqn:curve-gap}
    \min_{x\in [0, t]} \p{\overline{F}(x) - \underline{F}(x)} = \min_{x\in [0, t]} \p{\frac{u(t)}{t-x} + \frac{a(t)}{x}} - 1 \geq 0
\end{equation}
Observe that the function $\frac{u(t)}{t-x} + \frac{a(t)}{x}$ is convex in $x$, when $x\in [0, t]$. Hence, by writing down the first order condition and solving for $x$, we find that it is minimized at: $x = \frac{t\sqrt{a(t)}}{\sqrt{u(t)}+\sqrt{a(t)}}$,
yielding:
\begin{align*}
    \min_{x\in [0, t]} \p{\overline{F}(x) - \underline{F}(x)}
    =~& \frac{u(t)}{t}\cdot\frac{\sqrt{u(t)}+\sqrt{a(t)}}{\sqrt{u(t)}}+\frac{a(t)}{t}\cdot\frac{\sqrt{u(t)}+\sqrt{a(t)}}{\sqrt{a(t)}}-1
    = \frac{\p{\sqrt{u(t)}+\sqrt{a(t)}}^2}{t}-1
\end{align*}
Thus for Equation~\eqref{eqn:curve-gap} to hold it must be that $\sqrt{u(t)}+\sqrt{a(t)} \geq \sqrt{t}$, as desired.
\end{proof}
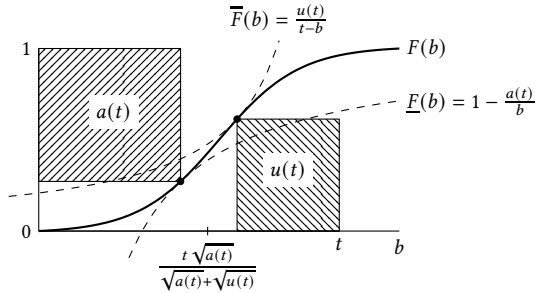
\begin{figure}[htpb]
\begin{center}
\begin{tikzpicture}[scale=0.8, every node/.style={scale=0.8}]
\draw[-] (0,0) -- (5,0) node[below] {$t$};
\draw[-] (0,0) -- (6,0) node[below] {$b$};
\draw[-] (0,0) node[left] {$0$} -- (0,3) node[left] {$1$};
\draw[domain=0:6,smooth,variable=\x,thick] plot ({\x}, { 3*(1/(1+exp(-1.5*(\x-3))) - .011)/.978}) node[right] {$F(b)$};
\draw[domain=-.5:4,smooth,variable=\x,dashed] plot ({\x}, {3.13/(5-\x)}) node[above] {$\overline{F}(b) = \frac{u(t)}{t-b}$};
\draw[domain=1.5:6,smooth,variable=\x,dashed] plot ({\x}, {3 - 5.16/\x}) node[right] {$\underline{F}(b) = 1 - \frac{a(t)}{b}$};
\draw[pattern=north west lines] (3.3,0) rectangle (5,1.84);
\draw[pattern=north east lines] (0,0.816) rectangle (2.36,3);
\node[fill=white] at (4.15,1) {$u(t)$};
\node[fill=white] at (1.25,1.95) {$a(t)$};
\draw[] (2.81, -.07) node[below]  {$\frac{t\, \sqrt{a(t)}}{\sqrt{a(t)} + \sqrt{u(t)}}$} -- (2.81, .07) ;
\filldraw (2.36, 0.816) circle (1.5pt);
\filldraw (3.3, 1.84) circle (1.5pt);
\end{tikzpicture}
\end{center}
\caption{Pictorial representation of quantities in Box Lemma \Cref{boxes} and its proof.}\label{fig:boxes}
\end{figure}

Applying this lemma with $F(r) := \Pr_{t_{-i}\sim D_{-i}}[B^b_i(t_{-i}) < r]$, i.e. the CDF of $B^b_i(t_{-i})$, gives:
$$\sqrt{u_i^{b}(t_i)} \geq \sqrt{t_i} - \sqrt{a_i(t_i)}$$
where $a_i(t_i) = \max_{r \leq t_i} r\Pr_{t_{-i} \sim D_{-i}}[r \leq B^b_i(t_{-i})]$. Since by definition $a_i(t_i) \leq t_i$, we can square the last inequality to obtain:
$$u^b_i(t_i) \geq \p{\sqrt{t_i}-\sqrt{a_i(t_i)}}^2 \geq t_i - 2\sqrt{a_i(t_i) \cdot t_i}$$
Moreover,
\begin{align*}
    a_i(t_i) &= \max_{r \leq t_i} r\Pr_{t_{-i} \sim D_{-i}}[r \leq B^b_i(t_{-i})]
    \leq \max_{r \leq t_i} r\Pr_{t_{-i} \sim D_{-i}}[r \leq \max_{j \ne i} t_{j}]\leq \max_{r} r\Pr_{t \sim D}[r \leq \max_i t_i] = \PP(D)
\end{align*}
Thus,
\begin{align*}
    \EE_{t \sim D}\ps{\max_i\p{t_i-u^b_i(t_i)}}
    \leq \EE_{t \sim D}\ps{\max_i\p{2\sqrt{a_i(t_i) \cdot t_i}}}
    \leq 2\sqrt{\PP(D)}\cdot \EE_{t \sim D}\ps{\sqrt{\max_i t_i}}
\end{align*}
We finish with the following lemma that shows that the expected root of the highest type can be achieved to within a constant factor as the root of the revenue of the best posted price mechanism, i.e. $\EE_{t \sim D}\ps{\sqrt{\max_i t_i}}\leq 2\sqrt{\PP(D)}$. Combined with the above inequality, this would conclude the overall proof of the lemma.

\begin{lemma}[Root Lemma]\label{sqrt} For a single item auction setting with any type profile distribution $D$:
    $$\EE_{t \sim D}\ps{\sqrt{\max_i t_i}} \leq 2\sqrt{\PP(D)}$$
\end{lemma}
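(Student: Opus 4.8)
The plan is to express $\EE_{t\sim D}\ps{\sqrt{\max_i t_i}}$ through the layer-cake (tail-integral) identity and then bound the tail of $\max_i t_i$ directly by the definition of $\PP(D)$. Write $M := \max_i t_i$, a nonnegative random variable. The definition $\PP(D) = \max_r r\Pr_{t\sim D}\ps{r \le \max_i t_i}$ immediately yields, simultaneously for every $r > 0$, the tail bound $\Pr[M \ge r] \le \PP(D)/r$; combined with the trivial bound $\Pr[M\ge r]\le 1$ this gives $\Pr[M\ge r] \le \min\pb{1,\ \PP(D)/r}$.

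Next I would apply $\EE[\sqrt{M}] = \int_0^\infty \Pr\ps{\sqrt{M} \ge s}\,ds = \int_0^\infty \Pr\ps{M \ge s^2}\,ds$ and substitute the tail bound to get $\EE[\sqrt{M}] \le \int_0^\infty \min\pb{1,\ \PP(D)/s^2}\,ds$. The integrand equals $1$ for $s \le \sqrt{\PP(D)}$ and equals $\PP(D)/s^2$ for $s > \sqrt{\PP(D)}$, so splitting at the crossover point $s = \sqrt{\PP(D)}$ gives $\int_0^{\sqrt{\PP(D)}} 1\,ds + \int_{\sqrt{\PP(D)}}^{\infty} \PP(D)/s^2\,ds = \sqrt{\PP(D)} + \sqrt{\PP(D)} = 2\sqrt{\PP(D)}$, which is exactly the claimed inequality.

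I do not expect a genuine obstacle here: the argument is a short and standard truncated-tail computation. The only points warranting a line of care are (i) that the ``$\max_r$'' in the definition of $\PP(D)$ legitimately produces a bound that holds uniformly over all $r$, so the tail estimate on $\Pr[M\ge r]$ is valid everywhere; (ii) that the split of the integral at $s=\sqrt{\PP(D)}$ is exactly where $\min\pb{1,\PP(D)/s^2}$ switches branches; and (iii) the degenerate case $\PP(D)=0$, which forces $M=0$ almost surely so the inequality is trivial. Finally, chaining this Root Lemma with the already-established inequality $\EE_{t\sim D}\ps{\max_i\p{t_i-u^b_i(t_i)}} \le 2\sqrt{\PP(D)}\cdot\EE_{t\sim D}\ps{\sqrt{\max_i t_i}}$ produces the bound $4\PP(D) \le 4\opt(D)$ and completes the proof of Lemma~\ref{lem:fp-type-loss}.
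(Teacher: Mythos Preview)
Your proof is correct and essentially identical to the paper's: both use the tail-integral representation $\EE[\sqrt{M}]=\int_0^\infty \Pr[\sqrt{M}\ge s]\,ds$, bound the tail via $\Pr[M\ge s^2]\le \min\{1,\PP(D)/s^2\}$ from the definition of $\PP(D)$, and split the resulting integral at $s=\sqrt{\PP(D)}$ to obtain $2\sqrt{\PP(D)}$. The only cosmetic difference is that the paper phrases the tail bound via the CDF (i.e., $\Pr[\sqrt{M}\le x]\ge \max\{1-\PP(D)/x^2,0\}$) rather than the survival function, which is the same inequality.
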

\begin{proof}
By the definition of $\PP(D)$:
$$\Pr\ps{\sqrt{\max_i t_i} \leq x} = \Pr\ps{\max_i t_i \leq x^2} \geq \max\pb{1-\frac{\PP(D)}{x^2},0}$$
since $\PP(D) \geq x^2\Pr_{t \sim D}\ps{x^2 \leq \max_i t_i} $.
Hence:
\begin{align*}
    \EE_{t \sim D}\ps{\sqrt{\max_i t_i}} =~& \int_0^\infty \p{1- \Pr\ps{\sqrt{\max_i t_i} \leq x}} dx
    \leq \int_0^\infty \p{1- \max\p{1-\frac{\PP(D)}{x^2},0}} dx\\
    =~& \int_0^{\sqrt{\PP(D)}} 1dx + \int_{\sqrt{\PP(D)}}^\infty \frac{\PP(D)}{x^2} dx
    = 2\sqrt{\PP(D)}
\end{align*}
\end{proof}

\section{Approximately Optimal Credible Auction}\label{sec:all-pay}
We finally discuss the case where the auction $A$ is a non-truthful All-Pay auction (AP); the highest bidder wins and every bidder pays her bid. In our setting, with continuous type distributions and a common interval support of $[0, H]$, all-pay auctions admit pure monotone equilibria (see e.g. \cite{amann1996asymmetric,Lebrun06,luo2014optimal}). 

Crucially we show that all-pay auctions also satisfy the $c$-type-loss trade-off property. In fact we show a much more general statement: all sealed high-bid-wins auctions, where players do not overbid, and the auctioneer charges at most the player's bid (irrespective of allocation), satisfy that property. All-pay auctions certainly meet these criteria.

\begin{lemma}[Type-Loss Trade-Off for General Auctions]\label{lem:ap-type-loss}
Consider any sealed-bid single-item auction, where the highest bidder wins and irrespective of allocation is charged at most her bid. Moreover, suppose that bidders do not bid more than their type at equilibrium. Then for any independent type distribution $D=\times_i D_i$ and under any no-overbidding bid equilibrium $b$, we have
$$\EE_{t \sim D}\ps{\max_i t_i \p{1-\pi^b_i(t_i)}} \leq 4 \PP(D) \leq 4 \opt(D)$$
\end{lemma}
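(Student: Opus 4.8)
The approach mirrors the proof of Lemma~\ref{lem:fp-type-loss} for first-price auctions, with the essential change confined to the deviation argument. As there, quasi-linearity gives $u^b_i(t_i) = \pi^b_i(t_i)\, t_i - p^b_i(t_i) \le \pi^b_i(t_i)\, t_i$, so $t_i(1-\pi^b_i(t_i)) \le t_i - u^b_i(t_i)$, and it suffices to prove $\EE_{t \sim D}[\max_i (t_i - u^b_i(t_i))] \le 4\,\PP(D)$; the bound $\PP(D)\le\opt(D)$ is immediate since a posted-price mechanism is feasible.

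First I would establish the deviation bound. Let $B^b_i(t_{-i}) = \max_{k\ne i} b_k(t_k)$ be the highest competing bid. For any $r \in [0, t_i]$, type $t_i$ of bidder $i$ can deviate to the bid $r$: since the highest bidder wins, she is allocated the item whenever $r > B^b_i(t_{-i})$ (ties occur with probability zero for continuous type distributions, or one takes $r+\epsilon$ and lets $\epsilon\to 0$), and since the auction charges at most her bid irrespective of allocation, her payment is at most $r$. The BNE condition then yields $u^b_i(t_i) \ge t_i \Pr_{t_{-i}\sim D_{-i}}[B^b_i(t_{-i}) < r] - r$, equivalently $t_i - u^b_i(t_i) \le t_i \Pr_{t_{-i}\sim D_{-i}}[B^b_i(t_{-i}) \ge r] + r$ for every $r \in [0, t_i]$. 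This is weaker than the first-price estimate (which had the product form $(t_i-r)\Pr[B^b_i<r]$), so the Box Lemma does not apply directly; instead a direct optimization over $r$ will do.

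Next, set $\tilde a_i(t_i) = \max_{r \le t_i} r\,\Pr_{t_{-i}\sim D_{-i}}[B^b_i(t_{-i}) \ge r]$, so that $\Pr[B^b_i(t_{-i})\ge r] \le \tilde a_i(t_i)/r$ for $r\le t_i$, hence $t_i - u^b_i(t_i) \le t_i\,\tilde a_i(t_i)/r + r$ for all $r \in (0, t_i]$. If $\tilde a_i(t_i) \le t_i$, the choice $r = \sqrt{t_i\,\tilde a_i(t_i)}\le t_i$ gives $t_i - u^b_i(t_i) \le 2\sqrt{t_i\,\tilde a_i(t_i)}$; if instead $\tilde a_i(t_i) > t_i$, interim individual rationality ($u^b_i(t_i)\ge 0$) already gives $t_i - u^b_i(t_i) \le t_i \le \sqrt{t_i\,\tilde a_i(t_i)} \le 2\sqrt{t_i\,\tilde a_i(t_i)}$. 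Either way $t_i - u^b_i(t_i) \le 2\sqrt{t_i\,\tilde a_i(t_i)}$. Now invoke no-overbidding: $B^b_i(t_{-i}) = \max_{k\ne i} b_k(t_k) \le \max_{k\ne i} t_k$, so $\Pr[B^b_i(t_{-i})\ge r] \le \Pr[\max_{k\ne i} t_k \ge r]$ and therefore $\tilde a_i(t_i) \le \max_r r\,\Pr[\max_{k\ne i} t_k \ge r] \le \max_r r\,\Pr[\max_i t_i \ge r] = \PP(D)$.

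Finally, combining the above and pulling the constant out of the maximum, $\EE_{t\sim D}[\max_i(t_i - u^b_i(t_i))] \le 2\sqrt{\PP(D)}\; \EE_{t\sim D}[\sqrt{\max_i t_i}] \le 4\,\PP(D)$, where the last inequality is the Root Lemma (Lemma~\ref{sqrt}). I expect the only subtleties to be (i) the corner case $\tilde a_i(t_i) > t_i$ in the optimization over $r$, dispatched by individual rationality, and (ii) the observation that although the deviation bound available for a general high-bid-wins, charge-at-most-bid auction is strictly weaker than the first-price one, it still produces exactly the same $2\sqrt{t_i\,\PP(D)}$ per-bidder estimate, so the constant $4$ is preserved.
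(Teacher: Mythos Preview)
Your proof is correct and reaches the same bound with the same constant, but the route differs from the paper's in one key step. The paper, after deriving $u_i^b(t_i)\ge \max_{r\le t_i}\bigl(t_i F_i(r)-r\bigr)$, rewrites $t_i F_i(r)-r=(t_i-r)F_i(r)-r(1-F_i(r))$ and then lower-bounds by $u_i(t_i)-a_i(t_i)$ with $u_i(t_i)=\max_{r\le t_i}(t_i-r)F_i(r)$ and $a_i(t_i)=\max_{r\le t_i} r(1-F_i(r))$; it applies the Box Lemma to $u_i(t_i)$, squares to get $u_i(t_i)\ge t_i-2\sqrt{a_i(t_i)\,t_i}+a_i(t_i)$, and the trailing $+a_i(t_i)$ exactly cancels the $-a_i(t_i)$ from the decomposition. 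You instead bound $t_i-u_i^b(t_i)\le t_i\,\tilde a_i(t_i)/r+r$ and optimize directly via AM--GM, which is more elementary and bypasses the Box Lemma altogether. The paper's detour buys reuse of the first-price machinery and makes the cancellation of the extra ``pay-when-losing'' term explicit; your approach is shorter and shows that the Box Lemma is not actually needed for this more general class. One minor remark: your corner case $\tilde a_i(t_i)>t_i$ never occurs, since $\tilde a_i(t_i)=\max_{r\le t_i} r\,\Pr[B_i^b\ge r]\le t_i$ by construction, so that branch can be dropped.
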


We defer the proof of \Cref{lem:ap-type-loss} to Section~\Cref{sec:all-pay-type-loss}. To prove the lemma, we observe that the equilibrium interim utility in any such auction is at least the largest box below the highest-other-bidder CDF curve, minus the largest box above that curve. Then we can follow similar analysis as in the first price auction to handle the first part of this decomposition and carry over an extra ``largest box above curve'' term; which subsequently is upper bounded by the best posted price revenue.

To state our main corollary we will define the instantiation of the entry fee simultaneous all-pay auction with ghost bidders as $\GAP(e, b)$, parameterized by a set of entry fees $e_i$ and a set equilibrium strategies $b=(b^1, \ldots, b^m)$, each $b^j$ corresponding to an equilibrium of the single-item all-pay auction for item $j$. Then the ghost bidders submit a bid on each item drawn based on the equilibrium strategies $b$, conditional on the event that the player decides not to enter (i.e. that the interim utility under $b$ is smaller than the entry fee). Such a mechanism admits the following \emph{focal equilibrium}: player $i$ with type $t_i$ submits bid  $b_i^j(t_{ij})$ on each auction $j$ and decides to enter if the interim utility, i.e. $u_i^b(t_i)=\sum_{j} t_{ij} \cdot \Pr[b_i^j(t_{ij}) \geq \max_{k\neq i} b_k^j(t_{kj})] - b_i^j(t_{ij}) $, is greater then $e_i$. Then by \Cref{cor:ghost-EA}:

\begin{cor}
Consider a multi-item auction with additive bidders and independent types across bidders $i$ and items $j$, distributed according to $D_{ij}$ and supported in $[0, H]$. For each item $j$, let $b^j$ denote an equilibrium of the single-item all-pay auction with type profile distribution $D_j=\times_i D_{ij}$. Then there exists a set of player-specific entry-fees $e=(e_1,\ldots, e_n)$, such that in the focal equilibrium $b$ of the $\GAP(e, b)$:
    \begin{align}
    \opt(D) \leq~& 9 \cdot \sum_{j=1}^m \opt(D_j) + 2\cdot \efrev^b(\GAP(e, b))
    \end{align}
    where $\efrev(\GAP(e, b))$ is the revenue of the $\GAP(e, b)$ auction solely due to collection of entry fees.
\end{cor}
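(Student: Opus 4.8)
The plan is to obtain this corollary as a direct instantiation of the general framework, taking the single-item auction $A$ to be the all-pay auction. Concretely, I would invoke Theorem~\ref{thm:main} (equivalently Corollary~\ref{cor:ghost-EA}) with the type-loss constant $c=4$, and then convert the $\efrev^b(A,D,e)$ term into the entry-fee revenue of $\GAP(e,b)$ at its focal equilibrium via Equation~\eqref{ghost-efrev}. The two facts I need from earlier in the paper are: (i) the all-pay auction satisfies the $4$-type-loss trade-off, which is exactly Lemma~\ref{lem:ap-type-loss}; and (ii) $\efrev^b(\GAP(e,b)) \geq \efrev^b(A,D,e)$ at the focal equilibrium, which is Equation~\eqref{ghost-efrev}.

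First I would verify that the all-pay auction meets the structural hypotheses required to apply these results. For continuous type distributions supported on a common interval $[0,H]$, single-item all-pay auctions admit pure monotone equilibria $b^j$ (citing \cite{amann1996asymmetric,Lebrun06,luo2014optimal}); I fix one such equilibrium for each item $j$. Each $b^j$ is interim individually rational, since a bid of $0$ is always available and yields zero utility, so $u^{b^j}_{ij}(t_{ij}) \geq 0$ for every type. To invoke Lemma~\ref{lem:ap-type-loss} I would note that all-pay is a sealed-bid highest-bid-wins auction in which every bidder is charged exactly (hence at most) her bid regardless of allocation, and that no equilibrium involves overbidding: bidding $r > t_{ij}$ yields expected utility $p\cdot t_{ij} - r < 0$, where $p$ is the winning probability, which is strictly dominated by bidding $0$. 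Thus Lemma~\ref{lem:ap-type-loss} applies and gives, for each item $j$, $\EE_{t\sim D_j}\ps{\max_i t_{ij}\p{1-\pi^{b^j}_i(t_{ij})}} \leq 4\,\PP(D_j) \leq 4\,\opt(D_j)$.

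With $c=4$ in hand, Theorem~\ref{thm:main} produces player-specific entry fees $e=(e_1,\dots,e_n)$ such that $\opt(D) \leq (4+5)\sum_{j=1}^m \opt(D_j) + 2\,\efrev^b(A,D,e) = 9\sum_{j=1}^m \opt(D_j) + 2\,\efrev^b(A,D,e)$. The final step is to identify $\efrev^b(A,D,e)$ with the entry-fee revenue of $\GAP(e,b)$ at the focal equilibrium: there, each entrant of type $t_i$ bids $b^j_i(t_{ij})$ on item $j$ and chooses to enter exactly when $\sum_j u^b_{ij}(t_{ij}) \geq e_i$, and because the ghost bidders replacing non-entrants bid according to $b$ on a type redrawn from $D_i^{b,-}(e_i)$, the distribution of competing bids each entrant faces on each item is identical to that in the isolated single-item all-pay auction. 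Hence the interim utilities, the set of entering types, and the collected entry fees all coincide with those in the definition of $\efrev^b(A,D,e)$, which is precisely the content of Equation~\eqref{ghost-efrev}. Chaining the two bounds yields $\opt(D) \leq 9\sum_{j=1}^m \opt(D_j) + 2\,\efrev^b(\GAP(e,b))$.

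I expect no serious obstacle within this corollary itself — the genuinely hard work lies in Lemma~\ref{lem:ap-type-loss} (deferred to Section~\ref{sec:all-pay-type-loss}), which rests on the geometric ``largest box below minus largest box above the highest-other-bid CDF'' decomposition together with the Root Lemma. The only points in the present argument requiring a sentence of justification are the no-overbidding property at equilibrium and the invariance of interim utilities under the ghost-bidder construction, and both follow immediately from quasi-linearity and the ``single bidding entity'' observation already established for $\GEA$.
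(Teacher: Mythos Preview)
Your proposal is correct and matches the paper's approach exactly: the paper derives this corollary in one line by citing Corollary~\ref{cor:ghost-EA} (which packages Theorem~\ref{thm:main} together with Equation~\eqref{ghost-efrev}) after having established the $4$-type-loss trade-off for all-pay in Lemma~\ref{lem:ap-type-loss}. Your write-up actually supplies more detail than the paper does, including the explicit checks of interim individual rationality and no-overbidding, which the paper leaves implicit.
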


Combining the latter with the results of \cite{HHT14}, we have:

\begin{cor}\label{allpay}
 Consider a multi-item auction with additive bidders and independent types across bidders $i$ and items $j$, distributed according to $D_{ij}$ and supported in $[0, H]$. Suppose that type distributions $D_{ij}$ are regular. Then, for appropriately chosen parameters $r$, $e$, the better of: i) running simultaneous first-price auctions with item and bidder specific reserve prices $\SFP(r)$, ii) running simultaneous all-pay auctions with bidder specific bundle entry fees $\GAP(e, b)$ (at the focal equilibrium), achieves a $\frac{20e-2}{e-1}$-factor approximation to the optimal revenue.
\end{cor}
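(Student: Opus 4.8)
The plan is to combine the revenue bound for $\GAP$ at its focal equilibrium, proved in the immediately preceding corollary, with the single-dimensional guarantee of reserve-price first-price auctions from \cite{HHT14}, and then repackage the resulting additive inequality into a ``better of two auctions'' statement. The preceding corollary gives, for an appropriate vector of bidder-specific entry fees $e$ and with $b=(b^1,\dots,b^m)$ the per-item all-pay equilibria used to define the ghost bids,
\[
\opt(D)\;\leq\; 9\sum_{j=1}^m \opt(D_j)\;+\;2\,\efrev^b(\GAP(e,b)).
\]

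Next I would bound the two terms on the right separately. For the first, since each $D_j=\times_i D_{ij}$ is a product of regular distributions, \cite{HHT14} shows that the single-item first-price auction on item $j$ in which each bidder is placed at her monopoly reserve price collects revenue at least $\tfrac{e-1}{2e}\,\opt(D_j)$, where here $e$ denotes Euler's constant. Summing over items and letting $r$ collect these per-bidder, per-item monopoly reserves, $\Rev(\SFP(r))\geq \tfrac{e-1}{2e}\sum_{j}\opt(D_j)$, equivalently $\sum_j \opt(D_j)\leq \tfrac{2e}{e-1}\Rev(\SFP(r))$. For the second term, since every payment in an all-pay auction is a nonnegative bid, the revenue collected from entry fees alone is at most the total revenue of the mechanism, so $\efrev^b(\GAP(e,b))\leq \Rev^b(\GAP(e,b))$.

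Plugging both bounds into the displayed inequality and replacing each revenue by the larger of the two yields
\[
\opt(D)\;\leq\;\frac{18e}{e-1}\,\Rev(\SFP(r)) + 2\,\Rev^b(\GAP(e,b))\;\leq\;\Big(\frac{18e}{e-1}+2\Big)\max\big\{\Rev(\SFP(r)),\,\Rev^b(\GAP(e,b))\big\}.
\]
Since $\tfrac{18e}{e-1}+2=\tfrac{20e-2}{e-1}$, and the auctioneer can compute both expected revenues from the common-knowledge prior and simply run whichever is larger, the better of $\SFP(r)$ and $\GAP(e,b)$ (the latter at its focal equilibrium) earns at least $\tfrac{e-1}{20e-2}\,\opt(D)$ in expectation, the claimed $\tfrac{20e-2}{e-1}$-approximation.

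There is no genuinely difficult step here --- the statement is a bookkeeping combination of results already in hand, and indeed the arithmetic is identical to that producing Corollary~\ref{cor:first-price} from Corollary~\ref{FPcor}. The two places that warrant a line of justification are (i) the inequality $\efrev^b(\GAP(e,b))\leq \Rev^b(\GAP(e,b))$, which uses nonnegativity of all-pay payments and the fact that we evaluate revenue at the focal equilibrium that the preceding corollary controls, and (ii) that the parameter choices do not interact: the reserves $r$ come from the per-item \cite{HHT14} construction while the entry fees $e$ come from Theorem~\ref{thm:main} and its corollary, and the two candidate mechanisms are run independently, so there is nothing to reconcile between them.
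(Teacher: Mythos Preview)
Your proposal is correct and matches the paper's approach: the paper presents this corollary as an immediate consequence of the preceding all-pay corollary combined with the \cite{HHT14} bound, and the arithmetic you spell out (yielding $\tfrac{18e}{e-1}+2=\tfrac{20e-2}{e-1}$) is exactly the computation underlying Corollary~\ref{cor:first-price} from Corollary~\ref{FPcor}. Your explicit justification that $\efrev^b(\GAP(e,b))\le \Rev^b(\GAP(e,b))$ via nonnegativity of all-pay payments is a detail the paper leaves implicit.
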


This is the \textit{first multi-dimensional revenue approximation result in the literature with a credible mechanism}. The results of \cite{AL18} show that $\SFP(r)$, for any setting of the parameter $r$, is a credible mechanism. In the subsequent section, we prove that $\GAP(e,b)$ is also credible, for any setting of the parameter $e$ and under any bid equilibrium $b$.

\subsection{Credibility of entry-fee all pay auction} \label{credibility}

In this section, we formally define the criteria for a mechanism to be credible, as in \cite{AL18}, and then prove that $\GAP$ is a credible mechanism. We view a mechanism as a communication game between the auctioneer and the bidders. Let $n$ denote the number of bidders, $X$ denote the set of all possible outcomes of the mechanism and $T=\times_i T_i$ denote the type-space of the bidders. The auctioneer is viewed as a player $0$ in the auction with utility (revenue) denoted by
\[u_0:X\times T\mapsto \mathbb{R}.
\]
The bidders are viewed as players indexed by the set $[n]$. At each step, player $0$ contacts a player $i\in [n]$ privately. It sends a message and receives a reply. At any step, player $0$ can choose an outcome $x\in X$ and end the game. Each player $i$ may have access to a part of the outcome. Let $S_i$ denote the strategy of player $i$. Let $o_j(S_0,S_1,...,S_n,t)$ denote the observation of player $j$ when the auctioneer plays $S_0$, bidder $i$ plays $S_i$ and the type profile is $t=(t_1,t_2,\dots,t_n)$. The observation $o_j$ includes the set of all messages received by player $j$ along with the part of the outcome it observes.

\begin{mydef}
Given a promised strategy profile $(S_0,S_1,\dots,S_n)$, we define an auctioneer strategy $\hat{S_0}$ to be \textbf{\emph{safe}} if for every player $i\in [n]$ and type profile $t=(t_1,t_2,\dots,t_n)$, there exists $\hat{t}_{-i}$ such that
\[
o_i(\hat{S_0},S_1,\dots,S_n,t) = o_i(S_0,S_1,\dots,S_n,(t_i,\hat{t}_{-i})),
\]i.e., even if the auctioneer deviates from the promised strategy, there is an equivalent \emph{innocent} explanation for each bidder's observation.
\end{mydef}

Let $S_0^*(S_0,S_1,\dots,S_n)$ denote the set of all safe strategies for the auctioneer. The auctioneer is restricted to play only a strategy $S\in S_0^*$ the messaging game. This is a reasonable constraint because if the auctioneer plays a strategy that is not "safe", the deviation can be easily detected by some bidder $i$.

\begin{mydef}
A mechanism with strategy profile $(S_0^G,S_1,\dots,S_n)$ is credible if
\[
S_0^G \in {\arg\max}_{S_0\in S_0^*(S_0^G,S_1,\dots,S_n)} \mathbb{E}_t\left[u_0 (S_0,S_1,\dots,S_n,t)\right].
\]
\end{mydef}

\begin{theorem}[Credibility of $\GAP$]
The ghost entry-fee all pay ($\GAP$) auction is a credible mechanism.
\end{theorem}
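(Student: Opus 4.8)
The plan is to show that the only way the auctioneer in $\GAP$ can deviate ``safely'' — i.e., in a way that admits an innocent explanation for every bidder's observation — is to replay the prescribed mechanism for some alternative type profile that agrees with each bidder on their own type. Once this is established, credibility follows immediately: since each bidder's observation and payment under any safe strategy is consistent with an honest run on some profile $(t_i, \hat t_{-i})$, and since the prescribed strategy already selects outcomes to the auctioneer's benefit given the reports, no safe deviation can strictly increase expected revenue. The key structural feature we exploit is that in an all-pay auction every bidder pays exactly her own submitted bid regardless of who wins; so the auctioneer cannot extract more from an entrant bidder than that bidder's bid plus her (fixed, pre-posted) entry fee, and crucially a bidder who submitted her bid \emph{already knows} the payment she will be charged. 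This is in contrast to first-price or second-price, where a bidder's payment depends on whether she won (and, for second price, on others' bids), giving the auctioneer room to misreport the outcome. In all-pay, the payment is pinned down by the bidder's own message.

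The steps I would carry out, in order: (1) Describe the communication protocol of $\GAP$ explicitly as a messaging game — the auctioneer contacts each bidder $i$, collects $(z_i, a_i)$, and each bidder's observation $o_i$ consists of her own messages plus the part of the outcome she sees, namely her allocation $x_i$ and her total charge $e_i \mathbbm{1}[i \in S] + \sum_j p^*_{A,i}$. (2) Fix an arbitrary safe auctioneer deviation $\hat S_0$ and a type profile $t$. Safety requires that for each $i$ there is a profile $(t_i, \hat t_{-i})$ under which an honest run produces exactly $o_i$. Since each bidder plays the prescribed strategy $S_i$ (a fixed map from $t_i$ to $(z_i,a_i)$ and then to the ghost sampling), the honest run on $(t_i,\hat t_{-i})$ fixes bidder $i$'s messages to be exactly what she sent, so the only freedom is in $\hat t_{-i}$ and the internal ghost randomness. (3) Argue that consistency of the \emph{charge} component of $o_i$ forces the auctioneer to charge bidder $i$ at most her submitted bid on each item plus her entry fee — because in any honest run, that is precisely what an all-pay entrant is charged, and this quantity is already determined by $a_i$ itself. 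Hence the auctioneer gains nothing on the payment side by deviating. (4) Argue that consistency of the \emph{allocation} component means that whenever the auctioneer withholds an item from an entrant (claiming a ghost or another entrant won), this must be explainable by some $\hat t_{-i}$; but since the auctioneer must simultaneously satisfy this for every bidder with a \emph{single} global outcome, and since honest $\GAP$ already never allocates to ghosts in a revenue-producing way, the auctioneer cannot convert a withheld item into extra revenue. (5) Conclude that $\mathbb{E}_t[u_0(\hat S_0, S_1,\dots,S_n,t)] \le \mathbb{E}_t[u_0(S_0^G, S_1, \dots, S_n, t)]$, so $S_0^G$ is a best response within $S_0^*$.

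The main obstacle will be step (4): making the ``single global outcome must be jointly consistent'' argument airtight. The subtlety is that safety is a \emph{per-bidder} condition — different bidders may be fooled by \emph{different} fictitious profiles $\hat t_{-i}$ — so one must be careful not to assume the auctioneer is pretending a single coherent profile happened. The right way around this, I expect, is to observe that in $\GAP$ the only thing that determines bidder $i$'s outcome is (a) her own bid, (b) the bids of the other entrants, and (c) the ghost bids, and that from $i$'s vantage point the latter two are \emph{aggregated} only through the event ``did $i$ win item $j$,'' which in an all-pay auction does not affect $i$'s payment at all. So even if the auctioneer lies about $i$ winning or losing, it changes only $i$'s allocation, not her payment, and a lie that strips $i$ of an item she should have won strictly \emph{decreases} revenue (the item is then either discarded or, if reallocated, the beneficiary already pays her own fixed bid regardless). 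Thus the payment-is-bid property of all-pay collapses the per-bidder-vs-global tension: each bidder's payment is individually pinned down, the sum of payments is therefore pinned down, and the allocation freedom the auctioneer has cannot be monetized. I would also note the one genuine place the mechanism design matters — this is why the paper uses $\GAP$ rather than $\GFP$: in $\GFP$ a ghost ``winning'' lets the auctioneer re-route the item to an entrant and collect that entrant's bid, which is a profitable undetectable deviation; replacing first-price with all-pay removes exactly this degree of freedom.
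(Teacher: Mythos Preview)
Your core insight is exactly the paper's: in $\GAP$ each entrant's payment equals her entry fee plus the sum of her submitted bids, a quantity determined entirely by her own message, so safety forces the auctioneer to charge precisely that amount and hence all safe strategies yield identical revenue. That is the whole proof.

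Where you diverge is in treating step~(4) as ``the main obstacle.'' It is not an obstacle at all; it is irrelevant. Once step~(3) establishes that every bidder's payment is pinned down (and it should read ``exactly,'' not ``at most'' --- the bidder \emph{observes} her payment, and in every honest run it equals $e_i + \sum_j a_{ij}$), the auctioneer's revenue is fully determined, because revenue is simply the sum of those payments. The paper notes in a footnote that items have zero residual value to the seller, so the auctioneer's utility \emph{is} her revenue; consequently the allocation component of the outcome has no bearing on $u_0$. Your entire third paragraph wrestling with per-bidder versus global consistency of allocations, withholding items, and rerouting is unnecessary: even if the auctioneer could safely misallocate arbitrarily, she cannot change anyone's payment and hence cannot change her own payoff. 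The paper's proof is two sentences for exactly this reason. Drop step~(4), sharpen step~(3) to ``exactly,'' and you are done.
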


\begin{proof}
The communication protocol for the $\GAP$ mechanism proceeds as follows.  Recall that the individual entry fees are fixed and known in advance to all players.  Each bidder first sends a message to the auctioneer stating whether they will pay the entry fee.  Those that do then provide bids to the auctioneer for each of the separate all-pay auctions.  The auctioneer then stops the game and returns an outcome.  In the auctioneer's promised strategy, this is done by simulating the bids of any bidders who chose not to pay their entry fees and then choosing an outcome consistent with the all-pay auction evaluated on each item separately.

Let $(S_0, S_1, \dotsc, S_n)$ be the promised (non-deviating) strategy profile for the $\GAP$ mechanism.  As the auctioneer's only decision point is the selection of the outcome, strategies differ only in this choice of outcome.  However, note that each bidder's payment under the promised strategy $S_0$ is a deterministic function of their action, since entry fees are fixed and each bidder's payment in an all-pay auction is determined by their bid.  Thus all safe strategies $\hat{S}_0 \in S^*_0$ must agree on each agent's payment, and therefore generate the same revenue for the auctioneer.\footnote{Note that since we assume the goods are consumed immediately and have $0$ value for the seller, the seller's payoff is entirely determined by their revenue.}  We conclude that $S_0$ weakly maximizes revenue over all safe strategies, and hence the $\GAP$ mechanism is credible.
%
\end{proof}

\subsection{Proof of \Cref{lem:ap-type-loss}: Type Loss Trade-Off for General Auctions}\label{sec:all-pay-type-loss}

Consider any sealed high-bid-wins auction $A$ that charges each player at most their bid (irrespective of winning or losing), i.e. $p_{A, i}^*(b)\leq b_i$, and suppose the players do not overbid at equilibrium, i.e. $b_i(t_i)\leq t_i$. 
We first note that by the fact that utilities are quasi-linear:
$$t_i\, (1-\pi^b_i(t_i))\leq t_i -\p{\pi^b_i(t_i)\, t_i-p^b_i(t_i)} = t_i-u^b_i(t_i)$$
Therefore, it suffices to show that
$$\EE_{t \sim D}\ps{\max_i\p{t_i-u^b_i(t_i)}} \leq 4\, \PP(D)$$

Let $B^b_i(t_{-i})=\max_{k\neq i} b_k(t_k)$, denote the highest other equilibrium bid and let $F_i$, denote the CDF of this random variable over the randomness of $t_{-i} \sim D_{-i}$, i.e. $F_i(r) = \Pr_{t_{-i}\sim D_{-i}}\ps{B^b_i(t_{-i}) < r}$. 
By the assumptions on the allocation and payment rule of $A$, the best-response equilibrium condition and the fact that bidders do not overbid, we have:
\begin{align}
    u_i^b(t_i) \geq~& \max_{r\leq t_i} \p{t_i\cdot F_i(r) - \EE_{t_{-i}\sim D_{-i}}\ps{p_{A,i}^*\p{r, b_{-i}(t_{-i})}}} \nonumber\\
    \geq~& \max_{r\leq t_i} \p{t_i\cdot F_i(r) - r} \tag{by assumption that payment is at most bid} \\
    =~& \max_{r\leq t_i} \p{(t_i - r)\cdot F_i(r) - r\cdot (1-F_i(r))} \nonumber\\
    \geq~& \underbrace{\max_{r\leq t_i} (t_i - r)\cdot F_i(r)}_{u_i(t_i)} - \underbrace{\max_{r\leq t_i} r\cdot (1-F_i(r))}_{a_i(t_i)} \label{eqn:general-A-util-bound}
\end{align}

Applying \Cref{boxes} with $F$ being the CDF of $B^b_i(t_{-i})$, gives:
\begin{equation*}
    \sqrt{u_i(t_i)} \geq \sqrt{t_i} - \sqrt{a_i(t_i)}
\end{equation*}
where we note that $a_i(t_i) = \max_{r \leq t_i} r\Pr_{t_{-i} \sim D_{-i}}[r \leq B^b_i(t_{-i})]$. Since by definition $a_i(t_i) \leq t_i$, we can square the last inequality to obtain:
\begin{equation*}
u_i(t_i) \geq \p{\sqrt{t_i}-\sqrt{a_i(t_i)}}^2 = t_i - 2\sqrt{a_i(t_i) \cdot t_i} + a_i(t_i)
\end{equation*}
Combining with Equation~\eqref{eqn:general-A-util-bound}:
\begin{equation}
    u_i^b(t_i) \geq u_i(t_i) - a_i(t_i) \geq t_i - 2\sqrt{a_i(t_i) \cdot t_i}
\end{equation}
Moreover, since by assumption players do not overbid at equilibrium:
\begin{align*}
    a_i(t_i) &= \max_{r \leq t_i} r\Pr_{t_{-i} \sim D_{-i}}[r \leq B^b_i(t_{-i})]
    \leq \max_{r \leq t_i} r\Pr_{t_{-i} \sim D_{-i}}[r \leq \max_{j \ne i} t_{j}]\leq \max_{r} r\Pr_{t \sim D}[r \leq \max_i t_i] = \PP(D)
\end{align*}
Thus,
\begin{align*}
    \EE_{t \sim D}\ps{\max_i\p{t_i-u^b_i(t_i)}}
    \leq \EE_{t \sim D}\ps{\max_i\p{2\sqrt{a_i(t_i) \cdot t_i}}}
    \leq~& 2\sqrt{\PP(D)}\cdot \EE_{t \sim D}\ps{\sqrt{\max_i t_i}}\\
    \leq~& 4\, \PP(D) \tag{by \Cref{sqrt}}\\
\end{align*}




\section{Proof of Theorem~\ref{thm:main}}\label{app:proof-main}

\begin{blankthm}
Let $A$ be any single-item auction, satisfying the $c$-type-loss trade-off and which admits an equilibrium strategy $b_j$ for type vector distribution $D_j=\times_i D_{ij}$ that is interim individually rational. Then there exists a set of player-specific entry-fees $e_i$, such that:
\begin{align}
\opt(D) \leq~& (c+5) \cdot \sum_{j=1}^m \opt(D_j) + 2\cdot \efrev^b(A,D,e) 
\end{align}
where $\opt(D)$ denotes the optimal revenue in the multi-dimensional multi-item auction setting with type distributions $D=\times_i D_i$, $\opt(D_j)$ is optimal revenue in a single item auction setting with type vector distribution $D_j=\times_i D_{ij}$, and
\begin{equation}
    \efrev^b(A,D,e) = \sum_i e_i\Pr_{t_i \sim D_{i}}\ps{\sum_j u^b_{ij}(t_{ij}) \geq e_i}
\end{equation}
\end{blankthm}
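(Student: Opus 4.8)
The plan is to instantiate the revenue upper bound of Lemma~\ref{lem:contCai} with the monotone preference partition induced by the \emph{interim} utilities of the stand-alone item auctions, and then bound the resulting pieces separately. Concretely, for each bidder $i$ and item $j$ define the preference function $\cU_{i,j}(t_{ij}) := u^{b^j}_{ij}(t_{ij})$, the interim utility of bidder $i$ for item $j$ at the equilibrium $b^j$ of the single-item $A$-auction for item $j$. At any Bayes--Nash equilibrium of a single-dimensional mechanism the interim utility is non-decreasing in the true type: a type $t'_{ij}\ge t_{ij}$ can play the action of type $t_{ij}$, getting $\pi^{b^j}_{ij}(t_{ij})\,t'_{ij}-p^{b^j}_{ij}(t_{ij})\ge\pi^{b^j}_{ij}(t_{ij})\,t_{ij}-p^{b^j}_{ij}(t_{ij})=u^{b^j}_{ij}(t_{ij})$ since $\pi^{b^j}_{ij}\ge 0$. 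Hence the $\cU_{i,j}$ are valid monotone preference functions and $\{R_{i,j}\}$ is a monotone preference partition in the sense of Definition~\ref{def:MPP}, so Lemma~\ref{lem:contCai} bounds $\opt(D)$ by $\sup_{\pi\in\cF}\sum_i\EE_{t_i}\big[\sum_j\pi_{ij}(t_i)\big(t_{ij}\,\mathbbm{1}[t_i\notin R_{i,j}]+\tilde{\varphi}_{ij}^{*}(t_{ij})\,\mathbbm{1}[t_i\in R_{i,j}]\big)\big]$.

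Next I would split this supremum into $\single+\under+\over+\surplus$, following the decompositions of \cite{BILW14,CDW16}. The term $\single$ collects the ironed-virtual-value summand; since $R_{i,1},\dots,R_{i,m}$ are disjoint for each fixed $i$, replacing each interim allocation $\pi_{ij}(t_i)=\EE_{t_{-i}}[x_{ij}(t)]$ by its ex-post realization $x$, dropping the region indicator, and using $\sum_i x_{ij}(t)\le 1$ leaves at most $\sum_j\EE_{t_j}[\max_i\tilde{\varphi}_{ij}^{*}(t_{ij})]=\sum_j\opt(D_j)$ by Myerson. For the other summand, quasi-linearity gives $t_{ij}=t_{ij}\big(1-\pi^{b^j}_{ij}(t_{ij})\big)+u^{b^j}_{ij}(t_{ij})+p^{b^j}_{ij}(t_{ij})$, splitting the ``$t_i\notin R_{i,j}$'' contribution into $\under$ (loss term), $\over$ (payment term), and $\surplus$ (utility term). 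For $\under$, the same interim-to-ex-post substitution together with $\sum_i x_{ij}(t)\le 1$ yields $\under\le\sum_j\EE_{t_j\sim D_j}\big[\max_i t_{ij}\big(1-\pi^{b^j}_{ij}(t_{ij})\big)\big]\le c\sum_j\opt(D_j)$ — this is precisely where the $c$-type-loss trade-off is invoked. For $\over$, bounding $\pi_{ij}\le 1$ and the region indicator by $1$ leaves the sum over items of the equilibrium revenue of the stand-alone $A$-auction; since the interim allocation of any BNE is monotone and the equilibrium is interim individually rational, pairing it with the Myerson payment rule gives a Bayesian incentive compatible mechanism of weakly larger revenue, so $\over\le\sum_j\opt(D_j)$.

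The remaining term, $\surplus=\sum_i\sum_j\EE_{t_i}\big[u^{b^j}_{ij}(t_{ij})\,\mathbbm{1}[t_i\notin R_{i,j}]\big]$, is the heart of the proof. By the definition of the partition, on $\{t_i\notin R_{i,j}\}$ either $u^{b^j}_{ij}(t_{ij})=0$ or some other item $k$ has an at-least-as-large interim utility, so $\surplus\le\sum_i\EE_{t_i}\big[\sum_j U_{ij}\,\mathbbm{1}[\exists k\neq j:\,U_{ik}\ge U_{ij}]\big]$, where $U_{ij}:=u^{b^j}_{ij}(t_{ij})$ are independent across $j$ and satisfy $U_{ij}\le t_{ij}$; the inner sum is at most $\sum_j U_{ij}-\max_j U_{ij}$. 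For each bidder $i$ I would set the entry fee $e_i$ to be a median of $W_i:=\sum_j U_{ij}$, so that $\Pr_{t_i}\big[\sum_j u^{b^j}_{ij}(t_{ij})\ge e_i\big]\ge\tfrac12$ and hence $e_i\le 2e_i\Pr_{t_i}[W_i\ge e_i]$. A core--tail argument in the style of \cite{BILW14} then finishes the bound: splitting each item's contribution according to whether $U_{ij}$ is below or above its share of $e_i$, the ``core'' (all items below threshold) forces $\sum_j U_{ij}-\max_j U_{ij}$ below $e_i$ and is absorbed into $2\,\efrev^b(A,D,e)$, while the ``tail'' (some item above threshold) is handled item by item using $U_{ij}\le t_{ij}$, monotonicity of $u^{b^j}_{ij}$, interim feasibility $\sum_i x_{ij}(t)\le 1$, and single-dimensional optimality; a careful accounting makes the tail contribute at most $3\sum_j\opt(D_j)$. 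Summing the four bounds gives $\opt(D)\le(c+5)\sum_j\opt(D_j)+2\,\efrev^b(A,D,e)$.

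The step I expect to be the main obstacle is the $\surplus$ bound. In the truthful (second-price) special case of \cite{CDW16} the regions are defined through ex-post utilities and the ``type of a loser'' is controlled directly; here, defining the regions through interim utilities is exactly what makes the framework applicable to non-truthful $A$, but it forces us to argue that a single median-based entry fee simultaneously (i) captures a constant fraction of the ``core'' mass and (ii) leaves a ``tail'' that telescopes across all $n$ bidders into the per-item optimal revenues rather than blowing up by a factor of $n$. Retaining the interim allocations $\pi_{ij}$ together with the feasibility constraint — instead of bounding $\pi_{ij}\le 1$ — in the tail portion is what keeps that term under control.
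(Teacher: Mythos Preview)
Your high-level architecture matches the paper exactly: instantiate Lemma~\ref{lem:contCai} with $\cU_{i,j}(t_{ij})=u^{b^j}_{ij}(t_{ij})$, decompose into $\single+\under+\over+\surplus$, and bound the first three by $1$, $c$, and $1$ copies of $\sum_j\opt(D_j)$ using precisely the arguments you outline. The divergence --- and the gap --- is entirely in your treatment of $\surplus$.

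Two concrete issues. First, your proposed control on the tail via ``retaining the interim allocations $\pi_{ij}$ together with the feasibility constraint instead of bounding $\pi_{ij}\le 1$'' is not how the term is handled, and cannot be: the paper bounds $\pi_{ij}(t_i)\le 1$ as the \emph{first} step of the $\surplus$ analysis, after which the allocation variable is gone and feasibility across bidders plays no further role. So whatever keeps the tail from blowing up by a factor of $n$, it is not that. Second, taking $e_i$ to be the median of $W_i=\sum_j U_{ij}$ gives $\Pr[W_i\ge e_i]\ge\tfrac12$, but this alone does not relate $e_i$ to the \emph{expectation} of the core; the median of a sum of independent nonnegative variables can be an arbitrarily small fraction of its mean, so an additional concentration step is required, and the sketch ``split according to whether $U_{ij}$ is below or above its share of $e_i$'' does not supply one (nor does it specify the per-item thresholds, on which the tail bound depends entirely).

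What the paper actually does: define the per-item ``monopoly revenue on the interim-utility distribution'' $r^b_{ij}=\max_x x\,\Pr[u^b_{ij}(t_{ij})\ge x]$ and the threshold $r^b_i=\sum_j r^b_{ij}$. Split $\surplus$ into $\tail$ (terms with $u^b_{ij}(t_{ij})\ge r^b_i$) and $\core$ (terms with $u^b_{ij}(t_{ij})<r^b_i$). The $\tail$ is bounded by a union bound over $k\neq j$ together with the defining inequality $x\,\Pr[u^b_{ik}\ge x]\le r^b_{ik}$, giving $\tail\le r^b:=\sum_i r^b_i$. For $\core$, set $e_i=\big[\sum_j\EE[c^b_{ij}]-2r^b_i\big]_+$ with $c^b_{ij}=u^b_{ij}\,\mathbbm{1}[u^b_{ij}<r^b_i]$ the \emph{truncated} utility; the second-moment bound of \cite{BILW14} yields $\mathrm{Var}\big(\sum_j c^b_{ij}\big)\le 2(r^b_i)^2$, and \emph{Chebyshev} (not a median argument) then gives $\Pr\big[\sum_j u^b_{ij}\ge e_i\big]\ge\tfrac12$, whence $\core\le 2r^b+2\,\efrev$. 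Finally $r^b\le\sum_j\opt(D_j)$, because $r^b_{ij}$ is exactly the optimal per-item entry-fee revenue for the item-$j$ $A$-auction and hence is dominated by $\opt(D_j)$ after summing over bidders. Assembling, $\surplus\le 3\sum_j\opt(D_j)+2\,\efrev$. The ingredients your sketch is missing are the $r^b_{ij}$ thresholds (which are what make the tail a per-item quantity independent of $n$) and the truncation-plus-Chebyshev step that ties the entry fee to the expected core.
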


Our starting point is \Cref{lem:contCai}, whose proof is provided in \Cref{app:proof-disc}. The lemma provides an upper bound on the revenue of the optimal multi-item mechanism that holds generally for all valid ``monotone preference partitions'' $\{R_{i, j}\}_{i\in [n], j\in [m]}$.  Recall that monotone preference partitions are partitions of the type spaces of the bidders defined in terms of non-decreasing preference functions $\cU_{i,j}: T_{ij} \to \RR \geq 0$

\begin{blankdef}[Monotone Preference Partition of Type Space]
For all $i$, we say that $R_{i, 0},R_{i, 1},\cdots,R_{i, m}$ is a \emph{preference partition} of the type space $T_i$ if it is defined as follows: for each item $j$, there exists \emph{non-decreasing preference functions} $\cU_{i,j}: T_{ij} \to \RR \geq 0$ such that, for all $j \ne 0$
\begin{align*}
    t_i \in R_{i,j} \Leftrightarrow \:  & \cU_{i, j}(t_{ij}) \geq \cU_{i, k}(t_{ik}), \forall k \ne j  ~~~\text{ and } &  \cU_{i, j}(t_{ij}) >& \cU_{i, k}(t_{ik}), \forall k < j ~~~\text{ and } 
    &\cU_{i, j}(t_{ij}) >& 0
\end{align*}
and
$$t_i \in R_{i, 0} \Leftrightarrow \cU_{i, j}(t_{ij}) =0 \quad \forall j$$
i.e. the preference function assigns an index to each item that is a monotone function of that item's type $t_{ij}$ and then the type vector $t_i$ belongs to the region $j$ with the highest positive index, breaking ties lexicographically, or to region $0$ if all indices are zero.

\end{blankdef}

Then, \Cref{lem:contCai} states that, for any monotone preference partition $\{R_{i, j}\}_{i\in [n], j\in [m]}$ of the type spaces, letting $\cF$ denote the set of interim feasible allocations, we have:
\begin{equation}\label{eqn:upper-bound}
\opt(D) \leq \sup_{\pi \in \cF} \underbrace{\sum_i \EE_{t_i \sim D_i}\ps{\sum_j \pi_{ij}(t_i)\p{t_{ij}\cdot 1\pb{ t_i \not \in R_{i, j}}
+\tilde{\varphi}_{ij}^*(t_{ij})\cdot 1\pb{ t_i \in R_{i, j}}}}}_{\vw(\pi)}
\end{equation}
where $\tilde{\varphi}_{ij}^*(t_{ij})=\max(\tilde{\varphi}_{ij}(t_{ij}),0)$ and $\tilde{\varphi}_{ij}(t_{ij})$ represents Myerson's ironed virtual value function \cite{Myerson81} for the distribution $D_{ij}$. We will refer to the latter bound as the multi-dimensional virtual welfare ($\vw$). 

To apply this bound it suffices to define the monotone preference functions $\cU_{i,j}: T_{ij} \to \RR \geq 0$. We define this preference function in terms of the interim utility $u_{ij}^{b^j}(t_{ij})$ that bidder $i$ receives in auction $j$ with type vector distribution $D_j=\times_i D_{ij}$ under equilibrium $b^j$. For simplicity of notation we will denote $u_{ij}^{b^j}(t_{ij})$ with $u_{ij}^{b}(t_{ij})$ moving forward. We will define the preference function as:
\begin{equation}
    \cU_{i, j}(t_{ij}) = u^b_{ij}(t_{ij})
\end{equation}

Observe that this preference function is non-decreasing in $t_{ij}$, since interim utility at any equilibrium of any mechanism in a single dimensional environment is non-decreasing in type, by standard results in single-dimensional mechanism design (see e.g. Theorem 2.2 of \cite{hartline2013mechanism}).


\subsection{Decomposition of Upper Bound $\vw$}
Applying \Cref{lem:contCai} on the aforementioned monotone preference partition regions, we now further decompose the right hand side of Equation~\eqref{eqn:upper-bound} into four terms: $\under$, $\single$, $\over$, $\surplus$ that we will subsequently bound separately.

First, observe that by the characterization of regions
\begin{align}\label{eqn:bound-1}
    1\pb{ t_i \not \in R_{i,j}} 
    &\leq 1\pb{\exists k \ne j, \text{ s.t. } u^b_{ik}(t_{ik}) \geq u^b_{ij}(t_{ij})}
\end{align}
disregarding the lexicographic tie-breaking.  Let $\cZ_{ij}^b(t_i)$ denote the event that \emph{item $j$ is a strictly favorite item for player $i$ in terms of interim utility under equilibrium $b$}, i.e.:
\begin{equation}
    \cZ_{ij}^b(t_i) := \{\forall k \ne j: u^b_{ij}(t_{ij}) > u^b_{ik}(t_{ik})\}
\end{equation}
and let $\bar{\cZ}_{ij}^b(t_i)$ denote its complement. Thus:
\begin{align*}
    1\pb{ t_i \not \in R_{i,j}} \leq~& 1\pb{\bar{\cZ}_{ij}^b(t_i)} \leq 
    \underbrace{1\pb{\bar{\cZ}_{ij}^b(t_i)}\cdot \pi_{ij}^b(t_{ij})}_{\text{Prob allocated $j$, but $j$ not strict favorite}} + 
    \underbrace{(1 - \pi_{ij}^b(t_{ij}))}_{\text{Prob not allocated item $j$ in auction $A$}}
\end{align*}
where we remind that $\pi_{ij}^b(t_{ij})$ is the interim allocation of player $i$ in a single item auction $A$ for item $j$ under equilibrium $b^j$. So, we can upper bound and decompose the virtual welfare $\vw(\pi)$ as
\begin{align*}
\vw(\pi) \leq~& \sum_i \EE_{t_i \sim D_i}\ps{\sum_j\pi_{ij}(t_i) \cdot t_{ij} \cdot 1\pb{\bar{\cZ}_{ij}^b(t_i)} \cdot \pi_{ij}^b(t_{ij})}
\tag{$\nonfav(\pi)$}\\
&+ \sum_i \EE_{t_i \sim D_i}\ps{\sum_j\pi_{ij}(t_i)\cdot t_{ij} \cdot (1 - \pi_{ij}^b(t_{ij})} \tag{$\under(\pi)$} \\
&+\sum_i \EE_{t_i \sim D_i}\ps{\sum_j\pi_{ij}(t_i) \cdot \tilde{\varphi}^*_{ij}(t_{ij}) \cdot 1\pb{ t_i \in R_{i, j}}} \tag{$\single(\pi)$}
\end{align*}
We further decompose $\nonfav$ by invoking the quasi-linearity of player utilities in each item auction, i.e. $t_{ij} \cdot \pi_{ij}^b(t_{ij}) = u_{ij}^b(t_{ij}) + p_{ij}^b(t_{ij})$:
\begin{align*}
    \nonfav(\pi) =& \sum_i \EE_{t_i \sim D_i}\ps{\sum_j\pi_{ij}(t_i)\cdot 1\pb{\bar{\cZ}_{ij}^b(t_i)}\cdot\p{u^b_{ij}(t_{ij})+p^b_{ij}(t_{ij})}}\\
    \leq & \underbrace{\sum_i \EE_{t_i \sim D_i}\ps{\sum_j\pi_{ij}(t_i)\cdot p^b_{ij}(t_{ij})}}_{\over(\pi)}
    + \underbrace{\sum_i \EE_{t_i \sim D_i}\ps{\sum_j\pi_{ij}(t_i)\cdot u^b_{ij}(t_{ij})\cdot 1\pb{\bar{\cZ}_{ij}^b(t_i)}}}_{\surplus(\pi)}
\end{align*}
Which completes our final upper bound decomposition as:
\begin{equation}
    \vw(\pi) \leq \over(\pi) + \surplus(\pi) + \under(\pi) + \single(\pi)
\end{equation}
In the next sections we will prove the following bounds, which complete the proof of our theorem.
\begin{align*}
    \single(\pi) &\leq \sum_{j=1}^m \opt(D_j) &
    \under(\pi) &\leq c\cdot \sum_{j=1}^m \opt(D_j)\\
    \over(\pi) &\leq \sum_{j=1}^m \opt(D_j) &
    \surplus(\pi) &\leq 3\sum_{j=1}^m \opt(D_j) + 2\cdot \efrev^b(A,D,e)
\end{align*}

\subsection{Upper Bounding $\single$}\label{app:single}

Since $\pi_{ij}$ is an interim feasible allocation, we have that there exists an ex-post feasible allocation $x_{ij}$, such that $\pi_{ij}(t_i)=\EE_{t_{-i}\sim D_{-i}}\left[x_{ij}(t)\right]$. Invoking this fact and the fact that $\varphi_{ij}^*(t_{ij})\geq 0$, we have:
\begin{align*}
    \single(\pi) =~& \sum_i \EE_{t_i \sim D_i}\ps{\sum_j\pi_{ij}(t_i) \cdot \tilde{\varphi}^*_{ij}(t_{ij})\cdot 1\pb{ t_i \in R_{i,j}}}\\
    \leq~& \sum_i \EE_{t_i \sim D_i}\ps{\sum_j\pi_{ij}(t_i)\cdot \tilde{\varphi}^*_{ij}(t_{ij})} \tag{since $\tilde{\varphi}^*_{ij}(t_{ij})\geq 0$}\\
    =~& \sum_i \EE_{t_i \sim D_i}\ps{\sum_j \EE_{t_{-i}\sim D_{-i}}\ps{x_{ij}(t_i,t_{-i})}\cdot \tilde{\varphi}^*_{ij}(t_{ij})} \tag{by interim feasibility of $\pi_{ij}$}\\
    =~& \sum_j \EE_{t \sim D} \ps{\sum_i x_{ij}(t)\cdot \tilde{\varphi}^*_{ij}(t_{ij})}\\
    \leq~& \sum_j \EE_{t \sim D} \ps{\max_i \tilde{\varphi}^*_{ij}(t_{ij})} \tag{by ex-post feasibility of $x_{ij}$}\\
    =~& \sum_{j} \opt(D_j) \tag{by Myerson's \cite{Myerson81} theorem}
\end{align*}

\subsection{Upper Bounding $\under$}\label{app:under}

We rearrange $\under(\pi)$ to be in terms of the ex-post feasible allocation $x$ that gives rise to interim allocation $\pi$.
\begin{align*}
    \under(\pi) =~& \sum_i \EE_{t_i \sim D_i}\ps{\sum_j\pi_{ij}(t_i)\cdot t_{ij}\cdot \p{1-\pi^b_{ij}(t_{ij})}}\\
    =~&\sum_i \EE_{t_i \sim D_i}\ps{\sum_j\EE_{t_{-i}\sim D_{-i}}\ps{x_{ij}(t_i,t_{-i})}t_{ij} \p{1-\pi^b_{ij}(t_{ij})}}\\
    =~& \sum_j \EE_{t \sim D} \ps{\sum_i x_{ij}(t)\cdot t_{ij}\cdot \p{1-\pi^b_{ij}(t_{ij})}}\\
    \leq~& \sum_j \EE_{t \sim D} \ps{\max_i t_{ij} \p{1-\pi^b_{ij}(t_{ij})}} \tag{by ex-post feasibility of $x_{ij}$}\\
    \leq~& \sum_j c \cdot \opt(D_j) \tag{by $c$-type-loss trade off property of $A$}
\end{align*}

\subsection{Upper Bounding $\over$}\label{app:over}

\begin{align*}
\over(\pi) =~& \sum_i \EE_{t_i \sim D_i}\ps{\sum_j\pi_{ij}(t_i)\cdot p^b_{ij}(t_{ij})}\\
\leq& \sum_i \EE_{t_i \sim D_i}\ps{\sum_j p^b_{ij}(t_{ij})} \tag{by interim feasibility: $\pi_{ij}(t_{ij})\leq 1$}\\
=& \sum_j \Rev^{b^j}(A)\\
\leq& \sum_{j} \opt(D_j)
\end{align*}
where $\Rev^{b_j}(A)$ represents the revenue of the auction $A$ on item $j$ under equilibrium $b^j$ with type vector distribution $D_j$.

\subsection{Upper Bounding $\surplus$}\label{app:surplus}

By rearranging the terms in $\surplus$ and invoking the fact that types are independent across items, we have:
\begin{align*}
    \surplus(\pi) =~& \sum_i \EE_{t_i \sim D_i}\ps{\sum_j\pi_{ij}(t_i)\cdot u^b_{ij}(t_{ij})\cdot 1\pb{\bar{\cZ}_{ij}^b(t_i)}}\\
    \leq & \sum_i \EE_{t_i \sim D_i}\ps{\sum_ju^b_{ij}(t_{ij})\cdot 1\pb{\bar{\cZ}_{ij}^b(t_i)}} \tag{by interim feasibility: $\pi_{ij}(t_{ij})\leq 1$}\\
    = & \sum_i \sum_j\EE_{t_{ij} \sim D_{ij}}\ps{u^b_{ij}(t_{ij})\cdot \EE_{t_{i,-j} \sim D_{i,-j}} \ps{1\pb{\bar{\cZ}_{ij}^b(t_i)}}} \tag{ independence across items}\\
    = & \sum_{i, j} \EE_{t_{ij} \sim D_{ij}}\ps{u^b_{ij}(t_{ij})\Pr_{t_{i,-j} \sim D_{i,-j}} \ps{\exists k \ne j, u^b_{ik}(t_{ik}) \geq u^b_{ij}(t_{ij})}} \tag{definition of $\bar{\cZ}_{ij}^b(t_i)$}
\end{align*}

Analyzing the relative size of each $u^b_{ij}(t_{ij})$ will be fundamental to bounding the surplus term.  Intuitively, in the event that $u^b_{ij}(t_{ij})$ is not too large, its contribution to the $\surplus$ sum will be not too large and therefore boundable.  When $u^b_{ij}(t_{ij})$ is very large, bounding $\surplus$ will still be possible due to the fact that the probability there exists an even larger $u^b_{ik}(t_{ik})$ will be small. Thus, we will analyze $\surplus$ by splitting into an analysis of these two regimes, denoted as $\core$ and $\tail$.  The pivotal point that defines these two regimes is based on an interim utility threshold $r^b_{i}$ defined as follows.
\begin{align*}
    r^b_{ij}&=\max_{x}\p{x \Pr_{t_{ij} \sim D_{ij}}[u^b_{ij}(t_{ij}) \geq x]}\\
    r^b_i &= \sum_j r^b_{ij}
\end{align*}
We decompose $\surplus$ based on this interim utility threshold:
\begin{align*}
    \surplus(\pi) \leq~& \sum_{i, j}\EE_{t_{ij} \sim D_{ij}}\ps{u^b_{ij}(t_{ij})\Pr_{t_{i,-j} \sim D_{i,-j}} \ps{\exists k \ne j, u^b_{ik}(t_{ik}) \geq u^b_{ij}(t_{ij})} \cdot 1[u^b_{ij}(t_{ij}) \geq r^b_i]} \tag{$\tail$}\\
    &+ \sum_{i, j} \EE_{t_{ij} \sim D_{ij}}\ps{u^b_{ij}(t_{ij}) \cdot 1[u^b_{ij}(t_{ij}) < r^b_i]} \tag{$\core$}
\end{align*}

\paragraph{Upper bounding $\tail$} We upper bound this term by $r^b = \sum_i r^b_i$.  At the end of the analysis, we prove that $r^b \leq \sum_{j} \opt(D_j)$.  First, by union bound
$$\Pr_{t_{i,-j} \sim D_{i,-j}} \ps{\exists k \ne j, u^b_{ik}(t_{ik}) \geq u^b_{ij}(t_{ij})} \leq \sum_{k \ne j} \Pr_{t_{ik} \sim D_{ik}} \ps{u^b_{ik}(t_{ik}) \geq u^b_{ij}(t_{ij})}$$
By the definition of $r^b_{ik}$, we have that:
\begin{equation}
r^b_{ik} \geq u^b_{ij}(t_{ij}) \Pr_{t_{ik} \sim D_{ik}}\ps{u^b_{ik}(t_{ik}) \geq u^b_{ij}(t_{ij})}    
\end{equation}
and so we can bound $\tail$ as: 
\begin{align*}
    \tail \leq~& \sum_{i, j}\EE_{t_{ij} \sim D_{ij}}\ps{1\ps{u^b_{ij}(t_{ij}) \geq r^b_i} \sum_{k \ne j}u^b_{ij}(t_{ij})\Pr_{t_{ik} \sim D_{ik}} \ps{u^b_{ik}(t_{ik}) \geq u^b_{ij}(t_{ij})}}\\
    \leq &\sum_{i, j} \EE_{t_{ij} \sim D_{ij}}\ps{1\ps{u^b_{ij}(t_{ij}) \geq r^b_i} \sum_{k \ne j}r^b_{ik}}\\
    \leq &\sum_{i, j} r^b_i \cdot \EE_{t_{ij} \sim D_{ij}}\ps{1\ps{u^b_{ij}(t_{ij}) \geq r^b_i}}\\
    = &\sum_{i, j} r^b_i \cdot \Pr_{t_{ij} \sim D_{ij}}\ps{u^b_{ij}(t_{ij}) \geq r^b_i} \\
    \leq &\sum_{i, j} r^b_{ij} \tag{since $r^b_{ij} \geq x \Pr[u^b_{ij}\geq x]$ for all $x$}\\
    =~& \sum_{i} r_i^b = r^b
\end{align*}

\paragraph{Upper bounding $\core$}
For notational convenience, let
\begin{equation}
c^b_{ij}(t_{ij})=u^b_{ij}(t_{ij})\cdot 1\ps{u^b_{ij}(t_{ij})<r^b_{i}}
\end{equation}
so that:
\begin{equation}
\core = \sum_{i, j} \EE_{t_{ij} \sim D_{ij}}\ps{c^b_{ij}(t_{ij})}
\end{equation}
Now, we consider the $\efrev^b(A,D,e)$ term with entry fee $e_i$ for bidder $i$ defined as:
\begin{equation}
e^b_{i}= \left[\p{\sum_{j} \EE_{t_{ij}} \ps{c^b_{ij}(t_{ij})}} - 2r^b_{i}\right]_+
\end{equation}
where $[x]_+ :=\max\{x, 0\}$.
This will be a valid entry fee when used with the $\REA$ and $\GEA$ auctions as it is a non-negative constant that only depends on the type distributions $D_{ij}$, and is not specific to any $t_{ij}$.

We will show that each bidder $i$ accepts the entry fee with probability at least $1/2$.  Bidder $i$ accepts the entry fee iff her total interim utility over the auctions exceeds the fee.  Thus, if we can show 
\begin{equation}
    \Pr_{t_i \sim D_{i}}\ps{\sum_j u^b_{ij}(t_{ij}) \geq e^b_i} \geq 1/2
\end{equation}
then we know
\begin{align}
    \efrev^b(A,D,e) &= \sum_i e_i\Pr_{t_i \sim D_{i}}\ps{\sum_j u^b_{ij}(t_{ij}) \geq e_i}\\
    &\geq \frac12 \sum_i e^b_i \geq \frac12 \sum_i\p{\sum_{j} \EE_{t_{ij}} [c^b_{ij}(t_{ij})] - 2r^b_{i}} = \frac{\core}{2}-r^b
\end{align}
this would imply that:
\begin{equation}
\core \leq 2\,r^b + 2\,\efrev^b(A,D,e)  
\end{equation}
as desired.  We make use of the following lemma, originally proved in \cite{BILW14},

\begin{lemma}
    Let $x$ be a positive single dimensional random variable drawn from $F$ of finite support, such that for any number $a$, $a \cdot \Pr_{x \sim F} [x \geq a] \leq \cB$ where $\cB$ is an absolute constant. Then, for any positive number $s$, the second moment of the random variable $x_s = x \cdot 1[x \leq s]$ is upper bounded by $2\cdot \cB \cdot s$.
\end{lemma}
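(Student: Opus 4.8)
The plan is to bound the second moment via its tail (layer-cake) representation and then plug in the hypothesis directly. Concretely, I would start from the identity $x^2 = \int_0^x 2u\,du$, valid for every $x \ge 0$, which gives the pointwise bound
\[
x_s^2 = x^2\cdot\mathbbm{1}[x\le s] = \int_0^s 2u\cdot\mathbbm{1}[u\le x\le s]\,du \le \int_0^s 2u\cdot\mathbbm{1}[x\ge u]\,du.
\]

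Next I would take the expectation over $x\sim F$ and interchange it with the integral — legitimate since the integrand is nonnegative (Tonelli), and in any case trivial here because $F$ has finite support, so the ``integral'' is really a finite sum. This yields
\[
\EE_{x\sim F}\ps{x_s^2} \le \int_0^s 2u\cdot\Pr_{x\sim F}\ps{x\ge u}\,du.
\]
Now the hypothesis $u\cdot\Pr_{x\sim F}[x\ge u]\le\cB$ applies to every $u>0$, so the integrand is at most $2\cB$ for all $u\in(0,s]$, and therefore $\EE[x_s^2]\le\int_0^s 2\cB\,du = 2\cB s$, which is exactly the claim.

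I do not expect any genuine obstacle: the argument is a two-line computation, and the only subtlety worth a sentence is the exchange of expectation and integral, which is immediate by nonnegativity. If one prefers a fully discrete argument matching the ``finite support'' hypothesis, I would instead enumerate the atoms $x_1<\dots<x_k$ of $F$ with $x_k\le s$ (set $x_0=0$), telescope $x_\ell^2=\sum_{m\le\ell}(x_m^2-x_{m-1}^2)$ and use $x_m^2-x_{m-1}^2\le 2x_m(x_m-x_{m-1})$; summing against $\Pr[x=x_\ell]$ and swapping the order of summation makes the coefficient of each $2x_m(x_m-x_{m-1})$ equal to $\Pr[x\ge x_m]\le\cB/x_m$, so the whole sum is at most $\sum_m 2\cB(x_m-x_{m-1})\le 2\cB s$. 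Either route gives the bound with no additional input.
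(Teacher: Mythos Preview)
Your proof is correct. The paper does not actually include its own proof of this lemma --- it simply cites \cite{BILW14} and immediately applies the statement --- so there is nothing to compare against; your layer-cake argument (and the discrete variant you sketch) is exactly the standard way to establish this bound.
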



Applying this lemma with $x=u^b_{ij}(t_{ij})$, $\cB = r^b_{ij}$ and $s = r^b_i$, we obtain:
\begin{equation}
\EE\ps{\p{c^b_{ij}(t_{ij})}^2} \leq 2r^b_ir^b_{ij}
\end{equation}
Since $c^b_{ij}(t_{ij})$ are independent across items,
\begin{equation}
    \text{Var}\ps{\sum_j c^b_{ij}(t_{ij})} = \sum_j \text{Var}\ps{c^b_{ij}(t_{ij})} \leq \sum_j\EE\ps{\p{c^b_{ij}(t_{ij})}^2} \leq 2\, \p{r^b_i}^2
\end{equation}
By Chebyshev, we know:
\begin{equation}
\Pr_{t_i \sim D_i}\ps{\sum_j c^b_{ij}(t_{ij}) \leq \sum_j \EE[c^b_{ij}(t_{ij})] - 2r^b_i}\leq \frac{\text{Var}\ps{\sum_j c^b_{ij}(t_{ij})}}{4\, \p{r^b_i}^2} \leq \frac12
\end{equation}
Moreover, since $c_{ij}^b(t_{ij})$ are non-negative, we have that in the case where the $[\cdot]_+$ binds and $e_i=0$, then it is definitely true that:
\begin{equation}
\Pr_{t_i \sim D_i}\ps{\sum_j c^b_{ij}(t_{ij}) \leq \ps{\sum_j \EE[c^b_{ij}(t_{ij})] - 2r^b_i}_+} \leq \frac12
\end{equation}
Hence, $\Pr_{t_i \sim D_i}\ps{\sum_j c^b_{ij}(t_{ij}) \leq e_i} \leq \frac12$. Since, we also have that
$u^b_{ij}(t_{ij})\geq c^b_{ij}(t_{ij})$, we can conclude that $\Pr\ps{\sum_j u^b_{ij} > e^b_i} \geq 1/2$, as desired.

\paragraph{Upper bounding $r^b$} We conclude the proof of the bound on $\surplus$ by providing an upper bound on $r^b$. 
We can obtain revenue $r^b$ via selling the items separately, where each item is sold via an entry-fee $A$ auction solely for that item. More concretely, for each item $j$, each bidder $i$ can choose to pay an entry fee $e_{ij}$ to access an auction $A$ on item $j$.  Bidder $i$ can choose whether or not to buy into the item $j$ auction totally independently of her choice for the other auctions.  We will again be using ghost bidders for all bidders who do not pay the entry fee.  Thus, bidder $i$'s utility for entering the item $j$ auction is $u^b_{ij}(t_{ij})$, and she will pay the entry fee iff $u^b_{ij}(t_{ij}) \geq e_{ij}$. The maximum entry fee revenue we can obtain in such an auction is equal to:
\begin{equation*}
    \max_{e_{ij}} e_{ij} \Pr_{t_{ij} \sim D_{ij}} [u^b_{ij}(t_{ij}) \geq e_{ij}] = r^b_{ij}
\end{equation*}
Thus, setting entry fees optimally on all items for all bidders, we obtain entry fee revenue $\sum_i \sum_j r^b_{ij}=r^b$.  The revenue obtained from these separate entry-fee $A$ auctions on each item is upper bounded by the revenue obtained from separate optimal single item auctions on each item,  giving 
\begin{equation}
    r^b\leq \sum_{j} \opt(D_j)
\end{equation}
as desired.

\paragraph{Concluding} Combining all the above analysis, we have:
\begin{align*}
    \surplus(\pi) \leq~& \tail + \core \leq r^b + \p{2\, r^b + 2\, \efrev^b(A,D,e)}\\
    \leq~& 3 \sum_j \opt(D_j) + 2\, \efrev^b(A,D,e) 
\end{align*}

\section{Proof of Lemma~\ref{lem:contCai}}\label{app:proof-disc}

\begin{blanklemma}[Revenue Bound via Monotone Preference Partitions of Type Space]
Consider a multi-item auction setting with additive bidders and independent continuous type distributions $D_{ij}$ on a bounded support $[0, H]$. Let $\{R_{i, j}\}_{i\in [n], j\in [m]}$ be a monotone preference partition of the type space and let $\cF$ denote the space of all interim feasible allocations. Then:
\begin{equation}
    \opt(D) \leq \sup_{\pi\in \cF} \sum_i \EE_{t_i \sim D_i}\ps{\sum_j \pi_{ij}(t_i)\p{t_{ij}\cdot 1\pb{t_i \not \in R_{i, j}}
+\tilde{\varphi}_{ij}^*(t_{ij})\cdot 1\pb{ t_i \in R_{i, j}}}}
\end{equation}
where $\tilde{\varphi}_{ij}^*(t_{ij})=\max(\tilde{\varphi}_{ij}(t_{ij}),0)$ and $\tilde{\varphi}_{ij}(t_{ij})$ represents Myerson's ironed virtual value function \cite{Myerson81} for the distribution $D_{ij}$.
\end{blanklemma}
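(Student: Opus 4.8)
The plan is to reduce the continuous statement to the discrete duality bound of Cai et al.~\cite{CDW16} via discretization and a limiting argument (this is exactly the route taken in Appendix~\ref{app:proof-disc}). The starting point is the discrete analogue of Equation~\eqref{eqn:rev-bound-cont}: for a multi-item setting with \emph{finitely supported} independent type distributions and any monotone preference partition of the (discrete) type spaces, the flow/Lagrangian duality framework of \cite{CDW16} yields a dual solution whose value is precisely $\sup_{\pi \in \cF}\sum_i \EE_{t_i}\ps{\sum_j \pi_{ij}(t_i)\p{t_{ij}\cdot 1\pb{t_i\notin R_{ij}} + \tilde\varphi_{ij}^*(t_{ij})\cdot 1\pb{t_i\in R_{ij}}}}$, where $\tilde\varphi_{ij}^*$ is the (discrete, ironed) Myerson virtual value of the relevant marginal. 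Taking this discrete bound as given, it remains to (i) discretize the continuous instance, (ii) verify the monotone preference partition survives discretization, and (iii) pass to the limit.

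\textbf{Discretization.} For $\epsilon>0$, chop $[0,H]$ into $\lceil H/\epsilon\rceil$ buckets of width $\epsilon$ and let $D_{ij}^\epsilon$ be the law of $t_{ij}$ rounded down to its bucket's left endpoint, with $D^\epsilon=\times_{i,j}D_{ij}^\epsilon$. Since each preference function $\cU_{i,j}:[0,H]\to\RR_{\geq 0}$ is non-decreasing, its restriction to the finite support of $D_{ij}^\epsilon$ is still non-decreasing, so the induced partition $\{R_{ij}^\epsilon\}$ of $\times_j\mathrm{supp}(D_{ij}^\epsilon)$ is a valid monotone preference partition and the discrete bound applies:
\begin{equation*}
\opt(D^\epsilon)\leq \sup_{\pi\in\cF^\epsilon}\sum_i\EE_{t_i\sim D_i^\epsilon}\ps{\sum_j\pi_{ij}(t_i)\p{t_{ij}\cdot 1\pb{t_i\notin R_{ij}^\epsilon}+\tilde\varphi_{ij}^{\epsilon,*}(t_{ij})\cdot 1\pb{t_i\in R_{ij}^\epsilon}}},
\end{equation*}
where $\tilde\varphi_{ij}^{\epsilon,*}$ is the ironed virtual value for $D_{ij}^\epsilon$ and $\cF^\epsilon$ the interim-feasible polytope for $D^\epsilon$.

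\textbf{Passing to the limit.} Two convergences close the argument. First, $\opt(D^\epsilon)\to\opt(D)$: using a coupling with $t^\epsilon \leq t \leq t^\epsilon+\epsilon\mathbf 1$ together with the bounded support $[0,H]$ and the continuity of the $D_{ij}$, one shows the optimal revenue loses only $O(nm\,\epsilon)$ under the coarsening. Second, the right-hand side converges to the claimed expression: the objective is linear in $\pi$ over compact domains $\cF^\epsilon$ that converge to $\cF$, so it suffices that the coefficient functions converge, i.e.\ that the ironed virtual value $\tilde\varphi_{ij}^{\epsilon}$ converges (uniformly on $[0,H]$) to the continuous $\tilde\varphi_{ij}$ — which holds because the continuous, bounded density of $D_{ij}$ makes the discrete Myerson revenue curves converge to the continuous one, hence their concave hulls (the ironing operator) converge — and that $1\pb{t_i\in R_{ij}^\epsilon}\to 1\pb{t_i\in R_{ij}}$ for $D_i$-a.e.\ $t_i$. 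The latter uses that the rounded indicator can only fail to converge on the region boundaries $\pb{\cU_{ij}(t_{ij})=\cU_{ik}(t_{ik})}$, which carry zero $D_i$-mass under the continuity hypothesis; dominated convergence (dominating function $H$) then transfers the limit through the expectation.

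\textbf{Main obstacle.} The delicate point is the interchange of the supremum over $\cF^\epsilon$ with the $\epsilon\to 0$ limit, coupled with the convergence of the ironed virtual value functions through the concave-hull/ironing operator (this is where the bounded-density assumption is genuinely used) and the verification that the region-boundary ties form a null event. Establishing $\opt(D^\epsilon)\to\opt(D)$ and these coefficient convergences, so that the two linear-programming values converge, is the technical heart of the proof; everything else is a direct invocation of \cite{CDW16} plus routine continuity estimates.
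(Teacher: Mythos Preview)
Your approach is essentially the same as the paper's---discretize, invoke the discrete Cai--Devanur--Weinberg bound, and pass to the limit on both sides---but two of your ``routine'' steps actually carry the technical content of the proof and are not handled correctly as stated.

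First, the step ``compact domains $\cF^\epsilon$ that converge to $\cF$'' is not well-posed: the discrete and continuous interim feasibility polytopes live in different function spaces, and there is no off-the-shelf convergence theorem for them. The paper sidesteps this entirely. Rather than taking a supremum over $\cF^\epsilon$ and arguing the sups converge, it takes the \emph{specific} optimal discrete allocation $\pi^{\opt^\epsilon}$ and embeds it into the continuous setting via $\pi^M_{ij}(t_i):=\pi^{\opt^\epsilon}_{ij}(t_i^+)$, where $t_i^+$ is the rounding of $t_i$. Because sampling $t_i\sim D_i$ and rounding reproduces $D_i^\epsilon$, this $\pi^M$ is genuinely interim-feasible for $D$, so its virtual welfare is trivially $\leq \sup_{\pi\in\cF}\vw(\pi)$. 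One then only needs to show $\vw^\epsilon \leq \vw(\pi^M)+o_\epsilon(1)$, which is a pointwise comparison with a fixed allocation---much easier than comparing two suprema.

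Second, your assertion that the region boundaries $\{\cU_{i,j}(t_{ij})=\cU_{i,k}(t_{ik})\}$ carry zero $D_i$-mass ``under the continuity hypothesis'' is not justified by continuity of the densities alone. Non-decreasing preference functions can be flat on intervals, so the tie set can have positive two-dimensional Lebesgue measure. What saves you is the \emph{monotone structure} of the partition: projected onto any $(t_{ij},t_{ik})$-plane, the border between $R_{i,j}$ and $R_{i,k}$ is the graph of a monotone correspondence, and the paper's Lemma~\ref{lem:measure-zero} exploits this by showing such a border can hit at most one grid square along each anti-diagonal, hence at most $O(H/\epsilon^2)$ squares total, giving probability $O(\epsilon^2)$ with bounded density. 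Without this geometric argument you have not established the indicator convergence you need.

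The paper also uses the relaxation $\tilde\varphi_{ij}\mapsto\tilde\varphi_{ij}^*=\max(\tilde\varphi_{ij},0)$ precisely here: monotonicity of $\tilde\varphi_{ij}^*$ is what lets you control $\EE[\tilde\varphi_{ij}^*(t_{ij}^+)-\tilde\varphi_{ij}^*(t_{ij})]$ by a telescoping/level-set count, whereas the unrelaxed ironed virtual value need not cooperate as cleanly. Finally, for $\opt(D^\epsilon)\to\opt(D)$ the paper invokes a concrete coupling theorem of \cite{RW15,DW12} rather than an ad hoc $O(nm\epsilon)$ estimate.
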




Our starting point is the following lemma of \cite{CDW16} that applies to discrete types and discrete type distributions. 
We will subsequently provide a discretization argument that allows us to prove the continuous analogue of it presented in \Cref{lem:contCai}.

\begin{theorem}[Theorem 31 \cite{CDW16}]\label{CaiThm}
Consider a multi-item auction setting with discrete type space $T^+$ and discrete valuation distribution $D^+$.  For each $v_{-i} \in T_{-i}^+$, let $R_0^{v_{-i}},R_1^{v_{-i}},\cdots,R_m^{v_{-i}}$ be a partition of the type space $T_i^+$ into ``upwards-closed'' regions. That is, for all $j \ne 0$,
$$t_i = (t_{i1},\cdots,t_{ij},\cdots,t_{im}) \in R_{i,j}^{v_{-i}} \Rightarrow (t_{i1},\cdots,t_{ij}^*,\cdots,t_{im}) \in R_{i,j}^{v_{-i}} \text{ for all }t_{ij}^*>t_{ij}$$

Let $M$ be any BIC mechanism with values drawn from $D^+$ that has interim allocation and payment $\pi^M,p^M$ in the truthful equilibrium.  The expected revenue of $M$ in the truthful equilibrium is upper bounded by the expected virtual welfare of the same allocation rule with respect to the canonical virtual value function $\varphi_i$. In particular,
\begin{align}
    \Rev(M) \leq \sum_{i, j} \EE_{t_i \sim D_i^+}\ps{\pi^M_{ij}(t_i)\p{t_{ij}\cdot \Pr_{v_{-i} \sim D_{-i}^+} \ps{ t_i \not \in R_{i,j}^{v_{-i}}}
+\tilde{\varphi}_{ij}(t_{ij})\cdot \Pr_{v_{-i} \sim D_{-i}^+} \ps{ t_i \in R_{i,j}^{v_{-i}}}}}
\end{align}
where $\tilde{\varphi}_{ij}(t_{ij})$ represents Myerson's discrete ironed virtual value for the distribution $D_{ij}^+$.
\end{theorem}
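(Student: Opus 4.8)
I would reconstruct the Lagrangian/flow argument of \cite{CDW16}: first prove a generic ``virtual‑welfare'' upper bound valid for \emph{every} dual flow, and then exhibit the particular flow whose induced virtual value is exactly the indicator‑form expression in the theorem. Everything is in the discrete setting, so no discretization is needed.

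\textbf{Step 1 (the flow lemma).} Call a collection $\lambda_i=(\lambda_i(t_i,t_i'))_{t_i,t_i'\in T_i^+}\cup(\lambda_i(t_i,\emptyset))_{t_i\in T_i^+}$ of nonnegative numbers a \emph{valid flow} for bidder $i$ if it is conservative with the type probabilities as source and a single sink $\emptyset$: for every $t_i$, $f_i(t_i)+\sum_{t_i'}\lambda_i(t_i',t_i)=\sum_{t_i'}\lambda_i(t_i,t_i')+\lambda_i(t_i,\emptyset)$. Define $\Phi_i^{\lambda_i}(t_i)=t_i-\frac{1}{f_i(t_i)}\sum_{t_i'}\lambda_i(t_i',t_i)(t_i'-t_i)\in\RR^m$. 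I would prove the identity: for any BIC and IR mechanism $M$ with interim rule $(\pi^M_i,p^M_i)$ and utility $u^M_i(t_i)=\langle\pi^M_i(t_i),t_i\rangle-p^M_i(t_i)$,
$$\EE_{t_i}[\langle\pi^M_i(t_i),\Phi^{\lambda_i}_i(t_i)\rangle]-\EE_{t_i}[p^M_i(t_i)]=\sum_{t_i,t_i'}\lambda_i(t_i,t_i')\big(u^M_i(t_i)-u^M_i(t_i')-\langle\pi^M_i(t_i'),t_i-t_i'\rangle\big)+\sum_{t_i}\lambda_i(t_i,\emptyset)\,u^M_i(t_i).$$
This is pure algebra: expand both expectations, collect the coefficient of each $u^M_i(t_i)$, and use flow conservation to recognize that coefficient as $f_i(t_i)$; the cross terms reorganize into the allocation‑difference pieces shown. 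Every bracket on the right is the (nonnegative) slack of a BIC or IR constraint, and $\lambda_i\ge0$, so $\EE_{t_i}[p^M_i(t_i)]\le\EE_{t_i}[\langle\pi^M_i(t_i),\Phi^{\lambda_i}_i(t_i)\rangle]$ for every valid flow.

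\textbf{Step 2 ($v_{-i}$‑indexed flows and averaging).} The identity in Step 1 references only interim quantities of $M$, so it holds for each valid flow \emph{independently}. I would therefore pick, for every bidder $i$ and every $v_{-i}\in T_{-i}^+$, a valid flow $\lambda_i^{v_{-i}}$; applying Step 1 once per $v_{-i}$, averaging over $v_{-i}\sim D_{-i}^+$, and summing over $i$ gives $\Rev(M)\le\sum_i\EE_{t_i}[\langle\pi^M_i(t_i),\EE_{v_{-i}}[\Phi_i^{\lambda_i^{v_{-i}}}(t_i)]\rangle]$.

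\textbf{Step 3 (the flow realizing the partition).} Fix $i,v_{-i}$; the regions $R_{i,0}^{v_{-i}},\dots,R_{i,m}^{v_{-i}}$ partition $T_i^+$, and $R_{i,j}^{v_{-i}}$ is upwards‑closed in coordinate $j$. By independence of the $D_{ij}$, for fixed $t_{i,-j}$ the slice $\{t_{ij}:(t_{i,-j},t_{ij})\in R_{i,j}^{v_{-i}}\}$ is a \emph{top‑contiguous} subset $\{t_{ij}\ge\theta\}$ of $\operatorname{supp}(D_{ij})$. Take $\lambda_i^{v_{-i}}=\sum_{j=0}^m\lambda_i^{(j)}$, where $\lambda_i^{(0)}$ simply routes $f_i(t_i)$ to the sink for each $t_i\in R_{i,0}^{v_{-i}}$, and for $j\ge1$, $\lambda_i^{(j)}$ runs, inside each slice above, the column‑scaled single‑item Myerson flow for $D_{ij}$ downward in coordinate $j$ (weight $\Pr[T_{i,-j}=t_{i,-j}]\cdot\Pr[T_{ij}>w]$ on the edge lowering $t_{ij}$ to $w$ from the support point just above $w$), dumping the accumulated flow into the sink at the bottom point $\theta$ of the slice. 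I would check: (i) each type is the source of exactly one component (the $j$ with $t_i\in R_{i,j}^{v_{-i}}$, or the sink‑dump of $\lambda_i^{(0)}$); (ii) the components have disjoint supports, since $\lambda_i^{(j)}$ never touches a type outside $R_{i,j}^{v_{-i}}$; (iii) conservation holds columnwise — this telescopes precisely \emph{because} each region slice is top‑contiguous, which is also what makes ``dump at the lower boundary into the sink'' both legal and harmless (sink flow does not enter $\Phi$). Then $\lambda_i^{v_{-i}}$ is a valid flow, and computing $\Phi_i^{\lambda_i^{v_{-i}}}$: coordinate $k$ of $t_i$ is touched only by $\lambda_i^{(k)}$, which vanishes off $R_{i,k}^{v_{-i}}$, so $\Phi_{ik}=t_{ik}$ whenever $t_i\notin R_{i,k}^{v_{-i}}$, while for the unique $j$ with $t_i\in R_{i,j}^{v_{-i}}$ the Myerson inflow gives $\Phi_{ij}(t_i)=\varphi_{ij}(t_{ij})$, the (unironed) discrete virtual value of $D_{ij}$. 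That is, $\Phi_{ij}^{\lambda_i^{v_{-i}}}(t_i)=t_{ij}\,\mathbbm{1}\{t_i\notin R_{i,j}^{v_{-i}}\}+\varphi_{ij}(t_{ij})\,\mathbbm{1}\{t_i\in R_{i,j}^{v_{-i}}\}$.

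\textbf{Step 4 (ironing).} Inserting Step 3 into Step 2 yields the claimed bound but with $\varphi_{ij}$ in place of $\tilde\varphi_{ij}$. To upgrade, fix $i,j$ and $t_{i,-j}$ and set $h(t_{ij})=\pi^M_{ij}(t_i)\cdot\Pr_{v_{-i}}[t_i\in R_{i,j}^{v_{-i}}]$. BIC of $M$ for additive valuations forces $\pi^M_{ij}$ to be monotone nondecreasing in $t_{ij}$, and each indicator $\mathbbm{1}\{t_i\in R_{i,j}^{v_{-i}}\}$ is monotone in $t_{ij}$ (up‑set), so $h\ge0$ is monotone nondecreasing in $t_{ij}$; the standard Myerson ironing inequality then gives $\EE_{t_{ij}}[h(t_{ij})\varphi_{ij}(t_{ij})]\le\EE_{t_{ij}}[h(t_{ij})\tilde\varphi_{ij}(t_{ij})]$. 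Averaging over $t_{i,-j}$ and summing over $i,j$ replaces $\varphi_{ij}$ by $\tilde\varphi_{ij}$ and gives the theorem, since the $t_{ij}\cdot\Pr_{v_{-i}}[t_i\notin R_{i,j}^{v_{-i}}]$ part is untouched.

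\textbf{Main obstacle.} I expect Step 3 to be the delicate part: getting the composite‑flow bookkeeping exactly right (disjoint supports, each type sourced once, columnwise conservation) and recognizing that the single structural hypothesis — $R_{i,j}^{v_{-i}}$ upwards‑closed in coordinate $j$, hence top‑contiguous in every $j$‑column — is precisely what licenses running Myerson's downward flow inside the region and dumping the residual into the sink at its lower boundary. Step 1 is a short identity, Step 2 is an averaging trick, and Step 4 is an appeal to single‑dimensional Myerson theory.
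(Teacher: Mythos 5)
The paper does not reprove this theorem; it is cited as Theorem~31 of \cite{CDW16} and used as a black box in the discretization argument of Appendix~B. Your reconstruction is correct and follows the same Cai--Devanur--Weinberg duality route as the source: Step~1 is the standard Lagrangian/flow identity (the coefficient of each $u^M_i(t_i)$ collapses to $f_i(t_i)$ by conservation, the cross terms reorganize into BIC/IR slacks), Step~2 is their $v_{-i}$-indexed flow averaging, and Step~3 is the canonical flow whose validity hinges precisely on the hypothesis you isolate --- upwards-closure of $R^{v_{-i}}_{i,j}$ in coordinate $j$ gives top-contiguous columns, which makes the per-column downward Myerson flow conserve and lets you dump residual flow to the sink at the column's lower boundary without corrupting any incoming-flow coefficient in $\Phi$. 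The only place you deviate from CDW16 is Step~4: they iron by deforming the flow itself (an ``ironed canonical flow'' with circulations inside ironing intervals), whereas you run the unironed flow and then invoke Myerson's ironing inequality $\EE[h\varphi]\leq\EE[h\tilde\varphi]$ for monotone $h\geq 0$; this works, and your justification that $h(t_{ij})=\pi^M_{ij}(t_i)\cdot\Pr_{v_{-i}}[t_i\in R^{v_{-i}}_{i,j}]$ is nondecreasing in $t_{ij}$ (the first factor by summing the two IC constraints along a $j$-axis move, the second by upwards-closure) is exactly what the inequality requires. Both ironing routes are valid; yours is a clean and slightly more modular variant of the cited proof.
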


\paragraph{Proof outline.}
Our approach will be to consider a discretization of the continuous type distribution $D$: $D^{\epsilon}$.  We define $D^{\epsilon}_{ij}$ to first sample $t_{ij} \sim D_{ij}$ and then output $t^{\epsilon}_{ij} = \epsilon^2 \cdot \lceil t_{ij}/\epsilon^2\rceil$. We see that $D^{\epsilon}_{ij}$ will have finite support $T^{\epsilon}$, as the support of $D_{ij}$ is bounded $\in [0,H]$.  Our approach is as follows.  Due to the coupling of samples from $D$ and $D^{\epsilon}$, we will be able to show that the revenue-optimal mechanism $OPT$ for values drawn from $D$ achieves approximately the same revenue as the revenue optimal mechanism $OPT^{\epsilon}$ for values drawn from $D^{\epsilon}$: $\Rev^D(OPT) \approx \Rev^{D^{\epsilon}}(OPT^{\epsilon})$.  Since $D^{\epsilon}$ has finite support, we will be able to apply \Cref{CaiThm} bounding the revenue of $OPT^{\epsilon}$ by its virtual welfare.  Then, one last argument on the coupled distributions will give that the virtual welfare upper bound for the discrete distribution is related to the desired virtual welfare bound for the continuous distribution.

\paragraph{Preference partitions are upwards-closed} Let $\{R_{i,j}\}_{i\in [n], j\in [m]}$ be a preference partition of the continuous type space $T$, which will also be a preference partition on the discrete subset $T^{\epsilon}$. Observe that preference partitions are always upwards-closed partitions, which is true due to the following argument:
Let $t_{ij},t_{ij}' \in T_{ij}$ with $t_{ij}' > t_{ij}$.  Say for some type vector $t_i=(t_{i1},\cdots,t_{ij},\cdots,t_{im})$ we have $t_i \in R_{i,j}$.  We want to show $t_i' = (t_{i1},\cdots,t_{ij}',\cdots,t_{im}) \in R_{i,j}$.  We see
\begin{align*}
    t_i \in R_{i,j} \Rightarrow &
    \pb{\begin{aligned}
    &\cU_{i,j}(t_{ij}) \geq \cU_{i,k}(t_{ik}) \quad \forall k \ne j \\ 
    &\cU_{i,j}(t_{ij}) > \cU_{i,k}(t_{ik}) \quad \forall k < j\\
    &\cU_{i,j}(t_{ij}) > 0 
    \end{aligned}}
    \Rightarrow 
    \pb{\begin{aligned}
    &\cU_{i,j}(t_{ij}') \geq \cU_{i,k}(t_{ik}) \quad \forall k \ne j\\
    &\cU_{i,j}(t_{ij}') > \cU_{i,k}(t_{ik}) \quad \forall k < j\\
    &\cU_{i,j}(t_{ij}') > 0
    \end{aligned}}
    \Rightarrow t_i' \in R_{i,j}
\end{align*}
since $\cU_{i,j}$ is non-decreasing, as desired. Thus we can apply \Cref{CaiThm} on the discretized type space and bound $\Rev^{D^{\epsilon}}(OPT^{\epsilon})$
\begin{align*}
    \Rev^{D^{\epsilon}}(OPT^{\epsilon}) \leq \underbrace{\sum_{i, j} \EE_{t_i \sim D_i^{\epsilon}}\ps{\pi^{OPT^{\epsilon}}_{ij}(t_i)\p{t_{ij}\cdot 1\pb{ t_i \not \in R_{i,j}}
+\tilde{\varphi}_{ij}^+(t_{ij})\cdot 1\pb{ t_i \in R_{i,j}}}}}_{\vw^{\epsilon}}
\end{align*}
where $\tilde{\varphi}_{ij}^+(t_{ij})$ is Myerson's discrete ironed virtual value for the distribution $D_{ij}^{\epsilon}$. In the latter we also used the fact that the preference partition of a player's type space is independent of the types of other players.

We conclude by separately relating the left-hand-side $\Rev^{D^{\epsilon}}(OPT^{\epsilon})$ to $\Rev^D(OPT)$ (\Cref{sec:discrete-lhs}), and the right-hand-side $\vw^{\epsilon}$ to its continuous counter-part $\vw$ (\Cref{sec:discrete-rhs}). In both cases, we show that the two quantities converge to each other as $\epsilon \rightarrow 0$, which implies the desired continuous upper bound.

\subsection{Relating $\Rev^D(OPT)$ to $\Rev^{D^{\epsilon}}(OPT^{\epsilon})$}\label{sec:discrete-lhs}

We make use of the following theorem.

\begin{theorem}\label{revDisc}[\cite{rubinstein2018simple}, \cite{DW12}]
Let $M^\downarrow$ be any BIC mechanism for additive bidders with values drawn from distribution $D^\downarrow$.  For all $i$, let $D_i^\downarrow$ and $D_i^\uparrow$ be any two distributions with coupled samples $t_i^\downarrow$ and $t_i^\uparrow$ such that $t_i^\uparrow \cdot x_i \geq t_i^\downarrow \cdot x_i$ for all feasible allocations $x \in F$. If $\delta_i = t_i^\uparrow - t_i^\downarrow$, then for any $\epsilon > 0$, there exists a BIC mechanism $M^\uparrow$ such that
$$\Rev^{D^\uparrow}(M^\uparrow) \geq (1 - \epsilon)\p{\Rev^{D^\downarrow}(M^\downarrow) - \frac{VAL(\delta)}{\epsilon}}$$
where $VAL(\delta)$ denotes the expected welfare of the VCG allocation when bidder $i$'s type is drawn according to the random variable $\delta_i$.
\end{theorem}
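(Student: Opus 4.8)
The plan is to \emph{simulate} $M^\downarrow$ in the ``up'' world through the given coupling, obtaining a mechanism that preserves revenue exactly but is only \emph{approximately} incentive compatible, and then repair incentive compatibility via the $\epsilon$-BIC--to--BIC reduction; the $(1-\epsilon)$ factor and the $VAL(\delta)/\epsilon$ term in the statement are precisely the footprint of that reduction.

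First I would use the given coupling $(t^\downarrow,t^\uparrow)$ of $D^\downarrow$ and $D^\uparrow$ (a product coupling across bidders, since all distributions are independent) and write $D^\downarrow_i\mid t^\uparrow_i$ for the conditional law of $t^\downarrow_i$. Define a mechanism $M'$ for $D^\uparrow$: on a reported profile $s=(s_1,\dots,s_n)$, independently sample $\hat t_i\sim D^\downarrow_i\mid (t^\uparrow_i=s_i)$ for each $i$, run $M^\downarrow$ on $\hat t$, and output its allocation $x(\hat t)$ and its payments $p^\downarrow(\hat t)$ verbatim, with no payment adjustment. This is ex-post feasible because $M^\downarrow$ is, and when all bidders report truthfully $\hat t$ is distributed exactly as $D^\downarrow$, so $\Rev^{D^\uparrow}(M')=\Rev^{D^\downarrow}(M^\downarrow)$: the simulation is revenue-neutral.

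Next I would quantify how far $M'$ is from BIC. Let $u^\downarrow_i,\pi^\downarrow_i$ be the interim utility and allocation of $M^\downarrow$ with respect to $D^\downarrow_{-i}$; since $M^\downarrow$ is BIC, $u^\downarrow_i$ is convex with subgradient $\pi^\downarrow_i$, so $u^\downarrow_i(y)\ge u^\downarrow_i(z)+(y-z)\cdot\pi^\downarrow_i(z)$ for all $y,z$ (extending $u^\downarrow_i$ by convexity outside the support if needed). A bidder of true type $v_i$ who reports $s_i$ in $M'$ obtains interim utility $\EE_{\hat t_i\sim D^\downarrow_i\mid s_i}\big[u^\downarrow_i(\hat t_i)+(v_i-\hat t_i)\cdot\pi^\downarrow_i(\hat t_i)\big]$, which by the subgradient inequality with $y=v_i,\ z=\hat t_i$ is at most $u^\downarrow_i(v_i)$ for \emph{every} report. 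Truthful reporting gives at least $\EE_{\hat t_i\sim D^\downarrow_i\mid v_i}[u^\downarrow_i(\hat t_i)]$, because the hypothesis $t^\uparrow_i\cdot x_i\ge t^\downarrow_i\cdot x_i$ forces $t^\uparrow_i\ge t^\downarrow_i$ coordinatewise, hence $\hat t_i\le v_i$ and the extra term is nonnegative. Therefore the gain from any deviation is at most $u^\downarrow_i(v_i)-\EE_{\hat t_i\mid v_i}[u^\downarrow_i(\hat t_i)]\le\EE_{\hat t_i\mid v_i}\big[(v_i-\hat t_i)\cdot\pi^\downarrow_i(v_i)\big]$, a quantity governed by the displacement $\delta_i=t^\uparrow_i-t^\downarrow_i$; aggregated over bidders and items this is of order $VAL(\delta)$. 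Thus $M'$ is $\eta$-BIC for $D^\uparrow$ with an incentive-violation budget $\eta$ chargeable to $VAL(\delta)$.

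Finally I would invoke the $\epsilon$-BIC--to--BIC reduction (the replica--surrogate matching construction of Daskalakis--Weinberg used in \cite{DW12,RW15}), which converts any $\eta$-BIC mechanism for $D^\uparrow$ into a genuinely BIC mechanism $M^\uparrow$ losing only a multiplicative $(1-\epsilon)$ factor and an additive $\eta/\epsilon$ in revenue, for any $\epsilon>0$. Chaining the three steps yields $\Rev^{D^\uparrow}(M^\uparrow)\ge(1-\epsilon)\big(\Rev^{D^\downarrow}(M^\downarrow)-VAL(\delta)/\epsilon\big)$. I expect the main obstacle to be the bookkeeping in this last step: verifying that $M'$ meets the feasibility and individual-rationality hypotheses of the reduction and, more delicately, that the violation quantity the reduction consumes can be charged to $VAL(\delta)$ rather than to a cruder bound such as $\sum_i\EE[\sum_j\delta_{ij}]$ — this sharpening is exactly the technical content of the cited reductions. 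The convexity/monotonicity computation in the middle step is elementary, and the simulation step is immediate.
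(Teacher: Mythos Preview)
The paper does not prove this theorem at all: it is quoted verbatim as an imported result from \cite{RW15,DW12} and then applied as a black box in Section~\ref{sec:discrete-lhs}. So there is no ``paper's own proof'' to compare against; your proposal is a reconstruction of the argument in those cited works.

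That said, your outline is essentially the correct one and matches the approach of the cited references: simulate $M^\downarrow$ on $D^\uparrow$ via the coupling to get a revenue-preserving but only approximately BIC mechanism, bound the incentive violation through the subgradient inequality for the convex interim utility of $M^\downarrow$, and then repair incentives with the replica--surrogate $\epsilon$-BIC--to--BIC reduction. Your observation that $t_i^\uparrow\cdot x_i\ge t_i^\downarrow\cdot x_i$ for all feasible $x$ forces coordinatewise dominance (by testing on unit-vector allocations) is right and is what makes the truthful-report lower bound go through. You are also right that the delicate step is the last one: the precise revenue loss in the reduction is governed, bidder by bidder, by the \emph{expected} gain from the best deviation, and summing these across bidders and charging the total to the VCG welfare on the displacement types $\delta$ is exactly the content of the cited reductions. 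Since the present paper only uses this theorem, not proves it, your sketch is more than sufficient for the role it plays here.
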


Using this theorem, we will be able to bound the gap between $\Rev^D(OPT)$ and $\Rev^{D^{\epsilon}}(OPT^{\epsilon})$.  We introduce $D^{-,\epsilon}_{ij}$, defined similarly to $D^{\epsilon}_{ij}$, that first samples $t_{ij} \sim D_{ij}$ and then outputs $t^{-,\epsilon}_{ij}= \epsilon^2 \cdot \p{\lceil t_{ij}/\epsilon^2\rceil-1}$.  Also, define $OPT^{-,\epsilon}$ to be the revenue-optimal mechanism for values drawn from $D^{-,\epsilon}$.  We will apply \Cref{revDisc} with $D$ as $D^\downarrow$ and $D^{\epsilon}$ as $D^\uparrow$ and $OPT$ as $M^\downarrow$ as well as with $D^{-,\epsilon}$ as $D^\downarrow$ and $D$ as $D^\uparrow$ and $OPT^{-,\epsilon}$ as $M^\downarrow$.  In both cases, due to the coupling, we will have the necessary $t_i^\uparrow \cdot x_i \geq t_i^\downarrow \cdot x_i$ for all $x$.  Additionally, we will have $\delta_{ij} \leq \epsilon^2$ for all $i,j$.  Thus, $VAL(\delta) \leq m\epsilon^2$ as the welfare contribution of any one item is at most $\epsilon^2$ for types $\delta_i$.  So, applying this theorem in these two settings gives, for some mechanisms $M^\uparrow_1$,$M^\uparrow_2$,
$$\Rev^{D^{\epsilon}}(OPT^{\epsilon}) \geq \Rev^{D^{\epsilon}}(M^\uparrow_1) \geq (1-\epsilon)(\Rev^D(OPT)-m\epsilon)$$
$$\Rev^{D}(OPT) \geq \Rev^{D}(M^\uparrow_2) \geq (1-\epsilon)(\Rev^{D^{-,\epsilon}}(OPT^{-,\epsilon})-m\epsilon)$$
Lastly, note that
$$\Rev^{D^{\epsilon}}(OPT^{\epsilon}) = \Rev^{D^{-,\epsilon}}(OPT^{-,\epsilon}) + m\epsilon$$
as every buyer values every item at exactly $\epsilon$ more in $D^{\epsilon}$ versus $D^{-,\epsilon}$.  For every BIC mechanism with values drawn from $D^{-,\epsilon}$, there is an analogous mechanism for values $D^{\epsilon}$ in which every bidders payment increases by exactly $\epsilon$ times the number of items they are expost allocated.  Thus, we have
$$\Rev^D(OPT) \in \ps{(1-\epsilon)(\Rev^{D^{\epsilon}}(OPT^{\epsilon})-2m\epsilon),\frac{\Rev^{D^{\epsilon}}(OPT^{\epsilon})}{1-\epsilon}+m\epsilon}$$
So, as $\epsilon \to 0$, we achieve discrete type distributions $D^{\epsilon}$ for which there exists mechanisms $OPT^{\epsilon}$ with revenue arbitrarily close to $\Rev^D(OPT)$.  

\subsection{Relating $\vw^{\epsilon}$ to $\vw$}\label{sec:discrete-rhs}
We can simulate a sample of the discrete distribution as follows: first sample $t_i \sim D_i$ from the continuous distribution and then let $t_i^+$ be the rounded discrete type in terms of $t_i$. That is, $t_{ij}^+ = \epsilon^2 \cdot \lceil t_{ij}/\epsilon^2\rceil$.  
We can then write the upper bound on $\Rev^{D^{\epsilon}}(OPT^{\epsilon})$ as:
\begin{align*}
    \vw^{\epsilon} = \sum_{i, j} \EE_{t_i \sim D_i}\ps{\pi^{OPT^{\epsilon}}_{ij}(t_i^+)\p{t_{ij}^+\cdot 1\pb{ t_i^+ \not \in R_{i,j}}
+\tilde{\varphi}_{ij}^+(t_{ij}^+)\cdot 1\pb{ t_i^+ \in R_{i,j}}}}
\end{align*}
Let $\pi_{ij}^M(t_i)=\pi^{OPT^{\epsilon}}_{ij}(t_i^+)$ and observe that $\pi^M$ is a feasible interim allocation as $OPT^{\epsilon}$ is a feasible mechanism and sampling $t_i$ from the continuous distribution and then rounding is identical to sampling from the discrete distribution.  We rewrite
\begin{align*}
    \vw^{\epsilon} =~& \sum_{i,j} \underbrace{\EE_{t_i \sim D_i}\ps{\pi^M_{ij}(t_i)\, t_{ij}^+\, 1\pb{ t_i^+ \not \in R_{i,j}}}}_{A^{\epsilon}} + \sum_{i, j} \underbrace{\EE_{t_i \sim D_i}\ps{\pi^M_{ij}(t_i)\, \tilde{\varphi}_{ij}^+(t_{ij}^+)\cdot 1\pb{ t_i^+ \in R_{i,j}}}}_{B^{\epsilon}}\\
\end{align*}
We denote with $A, B$ the corresponding continuous type terms where all plus signs are removed from the types. Moreover, in the $B$ term, the function $\tilde{\varphi}_{ij}$ is replaced by its non-negative version $\tilde{\varphi}_{ij}^*$.

{\setlength{\parindent}{0cm}\paragraph{Bounding $A^{\epsilon}$} We relate $A^{\epsilon}$ to $A$ as:}
\begin{align*}
    A^{\epsilon} - A =~& \EE_{t_i \sim D_i}\ps{\pi^M_{ij}(t_i)\, (t_{ij}^+-t_{ij})1\pb{ t_i^+ \not \in R_{i,j}}}
    +\EE_{t_i \sim D_i}\ps{\pi^M_{ij}(t_i)\, t_{ij}\, \p{1\pb{ t_i^+ \not \in R_{i,j}}-1\pb{ t_i \not \in R_{i,j}}}}
\end{align*}
We can bound $\EE_{t_i \sim D_i}\ps{\pi^M_{ij}(t_i)(t_{ij}^+-t_{ij})1\pb{ t_i^+ \not \in R_{i,j}}} \leq \epsilon^2$ as $\pi^M_{ij}(t_i) \leq 1$ and $t_{ij}^+-t_{ij} \leq \epsilon^2$.  Moreover:
\begin{align*}
\left| \EE_{t_i \sim D_i}\ps{\pi^M_{ij}(t_i)\, t_{ij}\, \p{1\pb{ t_i^+ \not \in R_{i,j}}-1\pb{ t_i \not \in R_{i,j}}}} \right| \leq H\cdot \sum_{k \ne j}\Pr_{t_i \sim D_i}\ps{ t_i \in R_{i,j} \wedge t_i^+ \in R_{i, k} }
\end{align*}
To upper bound this we prove the following lemma, whose proof we defer to \Cref{sec:prob-diff}.

\begin{lemma}\label{lem:measure-zero}
Let $D_i$ be an absolutely continuous distribution supported on a subset of $[0, H]$, with density upper bounded by $P$. Let $t_i^+$ denote the discrete type that corresponds to a rounded up version of each coordinate of $t_i$ to the closest multiple of $\epsilon^2$, i.e.: $t_{ij}^+ = \epsilon^2 \cdot \lceil t_{ij}/\epsilon^2\rceil$. If $\{R_{i,j}\}$ is a monotone preference partition of the continuous type space, then: 
\begin{equation}
\sum_{k \ne j}\Pr_{t_i \sim D_i}\ps{ t_i \in R_{i,j} \wedge t_i^+ \in R_{i, k}}\leq m^2\, H\, (2H+\epsilon^2)\, P\, \epsilon^2
\end{equation}
\end{lemma}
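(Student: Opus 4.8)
The plan is to reduce the event to the two relevant coordinates $j$ and $k$, use monotonicity of the preference functions together with the lexicographic tie-breaking rule to pin down a ``crossing'', and then integrate coordinate by coordinate against the bounded density.

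First fix a pair $j\neq k$ (both in $[m]$). Since rounding each coordinate up can only increase it, $t_{ij}^+\ge t_{ij}$, and monotonicity of each $\cU_{i,l}$ gives $\cU_{i,l}(t_{il}^+)\ge\cU_{i,l}(t_{il})$. On the event $\{t_i\in R_{i,j}\ \wedge\ t_i^+\in R_{i,k}\}$, comparing items $j$ and $k$ in the two region definitions gives $\cU_{i,k}(t_{ik})\le\cU_{i,j}(t_{ij})$ (from $t_i\in R_{i,j}$) and $\cU_{i,k}(t_{ik}^+)\ge\cU_{i,j}(t_{ij}^+)$ (from $t_i^+\in R_{i,k}$), so chaining with monotonicity,
\[
\cU_{i,k}(t_{ik})\ \le\ \cU_{i,j}(t_{ij})\ \le\ \cU_{i,j}(t_{ij}^+)\ \le\ \cU_{i,k}(t_{ik}^+).
\]
The crucial point is that the lexicographic tie-break makes one of the two outer inequalities strict: if $j<k$ then $t_i^+\in R_{i,k}$ forces $\cU_{i,k}(t_{ik}^+)>\cU_{i,j}(t_{ij}^+)$, and if $k<j$ then $t_i\in R_{i,j}$ forces $\cU_{i,j}(t_{ij})>\cU_{i,k}(t_{ik})$. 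Either way $\cU_{i,k}(t_{ik})<\cU_{i,k}(t_{ik}^+)$, and writing $u:=\cU_{i,j}(t_{ij})$ we have $\cU_{i,k}(t_{ik})\le u\le\cU_{i,k}(t_{ik}^+)$ with strictness on one side; i.e.\ $\cU_{i,k}$ must strictly cross level $u$ somewhere in the window $(t_{ik},t_{ik}^+]$, which has length at most $\epsilon^2$. (The index $k=0$ contributes nothing, since $t_i\in R_{i,j}$ forces $\cU_{i,j}(t_{ij})>0$, hence $\cU_{i,j}(t_{ij}^+)>0$, so $t_i^+\notin R_{i,0}$; thus the sum over $k\neq j$ is genuinely over $[m]\setminus\{j\}$.)

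Now condition on $t_{ij}$, which fixes $u$, and let $\theta$ be the crossing threshold of $\cU_{i,k}$ at level $u$ --- precisely $\theta:=\sup\{s:\cU_{i,k}(s)<u\}$ when $k<j$, and $\theta:=\inf\{s:\cU_{i,k}(s)>u\}$ when $j<k$ (both finite since the event is nonempty). Using the strict-crossing fact from the previous paragraph, monotonicity of $\cU_{i,k}$ forces $t_{ik}\le\theta\le t_{ik}^+\le t_{ik}+\epsilon^2$, so $t_{ik}$ lies in the interval $[\theta-\epsilon^2,\theta]\cap[0,H]$, of length at most $\epsilon^2$. Since $D_{ik}$ has density bounded by $P$, the conditional probability of this is at most $P\epsilon^2$, uniformly in $t_{ij}$; integrating out $t_{ij}$ (Fubini applies, as every function in sight is Borel --- monotone, or a rounding of a monotone function) yields $\Pr_{t_i\sim D_i}[t_i\in R_{i,j}\wedge t_i^+\in R_{i,k}]\le P\epsilon^2$. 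Summing over the at most $m-1$ values of $k\neq j$ gives $\sum_{k\neq j}\Pr[\cdots]\le (m-1)\,P\,\epsilon^2$, which is $O(\epsilon^2)$ --- this establishes the displayed bound (whose precise constant is irrelevant; only the $O(\epsilon^2)\to 0$ scaling is used in the proof of Lemma~\ref{lem:contCai}).

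The main obstacle is the careful handling of the lexicographic tie-breaks: the preference functions are only assumed non-decreasing, so a $\cU_{i,k}$ that is flat at value exactly $u$ over a long interval would put no useful constraint on $t_{ik}$ and the $\epsilon^2$ bound would fail --- it is precisely the strictness of one end of the chain (supplied by the tie-break) that excludes such flat pieces and confines $t_{ik}$ to the true crossing point. A minor technical wrinkle is that $t_{ij}^+$ can slightly exceed $H$ when $H$ is not a multiple of $\epsilon^2$; this is harmless, since one may extend each $\cU_{i,j}$ to be constant on $[H,H+\epsilon^2)$ without affecting monotonicity or any step above.
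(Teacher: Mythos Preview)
Your proof is correct and takes a genuinely different route from the paper's. The paper projects onto the $(t_{ij},t_{ik})$-plane, grids it into $\epsilon^2\times\epsilon^2$ squares, and argues that the border $\{\cU_{i,j}(t_{ij})=\cU_{i,k}(t_{ik})\}$ is a monotone curve which therefore meets at most one square on each anti-diagonal; with $O(H/\epsilon^2)$ anti-diagonals and $P\epsilon^4$ mass per square, this gives $O(HP\epsilon^2)$ per $k$. You instead slice: condition on $t_{ij}$ to fix the level $u$, use the lexicographic strictness to show $\cU_{i,k}$ genuinely crosses $u$ in $(t_{ik},t_{ik}^+]$, and conclude $t_{ik}$ lies in a single interval of length $\epsilon^2$ around the threshold $\theta(u)$, yielding $P\epsilon^2$ per $k$ directly.

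Your argument is more elementary---it avoids the combinatorial diagonal count entirely---and actually lands a constant that is tighter by a factor of roughly $2H$ (you get $(m-1)P\epsilon^2$ versus the paper's $(m-1)(2H+\epsilon^2)P\epsilon^2$, both then padded to the stated bound). The paper's geometric picture is pleasant and makes the monotone-border intuition vivid, but your slicing argument isolates the role of the tie-break more cleanly: the paper handles the $j<k$ and $j>k$ cases by a symmetric ``points on two diagonal squares'' contradiction, while you show directly that the tie-break is exactly what supplies the strict inequality needed to rule out $\cU_{i,k}$ being flat at level $u$ over a long interval. Your remarks on the $k=0$ case, measurability, and the $t_{ij}^+>H$ edge are all in order.
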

{\setlength{\parindent}{0cm}So, in total, we bound
$A^{\epsilon} \leq A + \epsilon^2\, (m^2\, H\, (2\, H + \epsilon^2)\, P+1) = A + o_{\epsilon}(1)$.
}

{\setlength{\parindent}{0cm}
\paragraph{Bounding $B^{\epsilon}$} Similarly, we decompose the term $B^{\epsilon}$:}
\begin{align*}
    B^{\epsilon} =~& \EE_{t_i \sim D_i}\ps{\pi^M_{ij}(t_i)\, (\tilde{\varphi}_{ij}^+(t_{ij}^+)-\tilde{\varphi}_{ij}(t_{ij}^+))\, 1\pb{ t_i^+ \in R_{i,j}}}\\
    & +\EE_{t_i \sim D_i}\ps{\pi^M_{ij}(t_i)\, \tilde{\varphi}_{ij}(t_{ij}^+)\p{1\pb{ t_i^+ \in R_{i,j}} - 1\pb{ t_i \in R_{i,j}}}}\\
    & +\EE_{t_i \sim D_i}\ps{\pi^M_{ij}(t_i)\, \tilde{\varphi}_{ij}(t_{ij}^+)\, 1\pb{ t_i \in R_{i,j}}}
\end{align*}
where $\tilde{\varphi}_{ij}$ represents Myerson's continuous ironed virtual value for the distribution $D_{ij}$.  In \cite{CDW16}, they prove that the discrete virtual value converges to the continuous virtual value for increasingly fine discretizations (Observation 9), $\lim_{\epsilon \to 0} \varphi_{ij}^+(v) = \varphi_{ij}(v)$ for all $v$.  We can easily extend this argument to ironed virtual values as $\varphi_{ij}^+$ converges to $\varphi_{ij}$ at all points. So,
$$\lim_{\epsilon \to 0} \tilde{\varphi}_{ij}^+(v) = \lim_{\epsilon \to 0} \max_{v \leq t} \varphi_{ij}^+(v) = \max_{v \leq t} \varphi_{ij}(v)= \tilde{\varphi}_{ij}(v)$$
Thus, 
\begin{equation*}
    \left|\EE_{t_i \sim D_i}\ps{\pi^M_{ij}(t_i)\, (\tilde{\varphi}_{ij}^+(t_{ij}^+)-\tilde{\varphi}_{ij}(t_{ij}^+))\, 1\pb{ t_i^+ \in R_{i,j}}}\right|=o_\epsilon(1)
\end{equation*}
and from \Cref{lem:measure-zero}:
\begin{equation*}
\left| \EE_{t_i \sim D_i}\ps{\pi^M_{ij}(t_i)\, \tilde{\varphi}_{ij}(t_{ij}^+)\p{1\pb{ t_i^+ \in R_{i,j}} - 1\pb{ t_i \in R_{i,j}}}} \right|\leq m^2\, H\, (2\, H + \epsilon^2)\, P\, \epsilon^2
\end{equation*}
as $\tilde{\varphi}_{ij}(t_{ij}^+) \leq t_{ij}^+$.  Thus, all that remains to show is that we can replace the $\tilde{\varphi}_{ij}(t_{ij}^+)$ term in \[\EE_{t_i \sim D_i}\ps{\pi^M_{ij}(t_i)\, \tilde{\varphi}_{ij}(t_{ij}^+)\, 1\pb{ t_i \in R_{i,j}}}\] with a $\tilde{\varphi}_{ij}(t_{ij})$.  Here, we make use of the relaxation of virtual value to positive virtual value: $\tilde{\varphi}_{ij}^*(t_{ij})=\max(\tilde{\varphi}_{ij}(t_{ij}),0)$.  Clearly, this upper bounds the virtual value.  It will give us a weaker result, but still a meaningful bound.  We have
\begin{align*}
    \EE_{t_i \sim D_i}\ps{\pi^M_{ij}(t_i)\, \tilde{\varphi}_{ij}(t_{ij}^+)\, 1\pb{ t_i \in R_{i,j}}}
    \leq~& \EE_{t_i \sim D_i}\ps{\pi^M_{ij}(t_i)\, \tilde{\varphi}^*_{ij}(t_{ij}^+)\, 1\pb{ t_i \in R_{i,j}}}\\
    =~& \EE_{t_i \sim D_i}\ps{\pi^M_{ij}(t_i)\, (\tilde{\varphi}^*_{ij}(t_{ij}^+)-\tilde{\varphi}^*_{ij}(t_{ij}))\, 1\pb{ t_i \in R_{i,j}}} + B\\
    \leq~& \EE_{t_i \sim D_i}\ps{\tilde{\varphi}^*_{ij}(t_{ij}^+)-\tilde{\varphi}^*_{ij}(t_{ij})} + B\\
    \leq~& \EE_{t_i \sim D_i}\ps{\tilde{\varphi}^*_{ij}(t_{ij}^+)-\tilde{\varphi}^*_{ij}(t_{ij}^-)} + B
\end{align*}
where $t_{ij}^- = \epsilon^2 \cdot \p{\lceil t_{ij}/\epsilon^2\rceil-1}$.  This is true since the \emph{ironed} virtual value function is non-decreasing.  We can view the discretization as a breaking up of $T_{ij}$ into segments of length $\epsilon^2$ and $\tilde{\varphi}^*_{ij}(t_{ij}^+)-\tilde{\varphi}^*_{ij}(t_{ij}^-)$ will be the difference in the endpoints of the interval containing $t_{ij}$. Moreover, making use of the fact that $\tilde{\varphi}^*_{ij}$ is a non-decreasing function with range $\subseteq [0,H]$ as $\tilde{\varphi}_{ij}(t_{ij}) \leq t_{ij}$:
\begin{align*}
    \EE_{t_i \sim D_i}\ps{\tilde{\varphi}^*_{ij}(t_{ij}^+)-\tilde{\varphi}^*_{ij}(t_{ij}^-)}
    \leq \epsilon + H \cdot \Pr_{t_i \sim D_i}\ps{\tilde{\varphi}^*_{ij}(t_{ij}^+)-\tilde{\varphi}^*_{ij}(t_{ij}^-) > \epsilon}
\end{align*}
Moreover, due to the monotonicity of $\tilde{\varphi}^*_{ij}$, there can only be at most $H/\epsilon$ segments of $T_{ij}$ with endpoints differing by at least $\epsilon$.  Again making use of the fact that $D_{ij}$ is atomless and there is some finite upper bound $P$ on its density function, we can argue that the probability of $t_{ij} \sim D_{ij}$ belonging to any specific interval is $\leq P\epsilon^2$.  Thus,
$$\Pr_{t_i \sim D_i}\ps{\tilde{\varphi}^*_{ij}(t_{ij}^+)-\tilde{\varphi}^*_{ij}(t_{ij}^-) > \epsilon} \leq (H/\epsilon)\cdot P\epsilon^2 = HP\epsilon$$
and so, $\EE_{t_i \sim D_i}\ps{\tilde{\varphi}^*_{ij}(t_{ij}^+)-\tilde{\varphi}^*_{ij}(t_{ij}^-)} \leq (H^2P+1)\, \epsilon$.
Putting all this together, we have
$B^{\epsilon} \leq B + o_\epsilon(1)$.

{\setlength{\parindent}{0cm}
\paragraph{Concluding} Combining the facts that $A^{\epsilon}\leq A + o_{\epsilon}(1)$ and $B^{\epsilon} \leq B + o_\epsilon(1)$, yields:
\begin{equation*}
    VW^{\epsilon}
    \leq \sum_i \EE_{t_i \sim D_i}\ps{\sum_j \pi_{ij}^M(t_i)\p{t_{ij}\cdot 1\pb{t_i \not \in R_{i, j}}
+\tilde{\varphi}_{ij}^*(t_{ij})\cdot 1\pb{ t_i \in R_{i, j}}}} + o_\epsilon(1)
\end{equation*}
giving the desired upper bound as $\epsilon \to 0$.
}

\subsection{Proof of \Cref{lem:measure-zero}}\label{sec:prob-diff}
\begin{proof}
In order to have $t_i \in R_{i,j}$ and $t_i^+ \in R_{i, k}$, we must have $\cU_{i,j}(t_{ij}) \geq \cU_{i,k}(t_{ik})$ and $\cU_{i,j}(t_{ij}^+) < \cU_{i,k}(t_{ik}^+)$ in the event $j<k$.  Similarly, we must have $\cU_{i,j}(t_{ij}) > \cU_{i,k}(t_{ik})$ and $\cU_{i,j}(t_{ij}^+) \leq \cU_{i,k}(t_{ik}^+)$ in the event $j>k$.  We assume,without loss of generality, that $j<k$ as the argumentation is symmetric in both cases.

We think about the two-dimensional plane $T_{ij} \times T_{ik}$ of possible values $(t_{ij},t_{ik})$.  We can view the discretization $(t_{ij}^+,t_{ik}^+)$ as a division of this plane into a grid of squares of side length $\epsilon^2$.  Here, $(t_{ij}^+,t_{ik}^+)$ represents the upper right corner of whichever square $(t_{ij},t_{ik})$ belongs to.  We also consider a partitioning of this plane into the set of points for which $\cU_{i,j}(t_{ij}) \geq \cU_{i,k}(t_{ik})$ and the set of points for which $\cU_{i,j}(t_{ij}) < \cU_{i,k}(t_{ik})$.  In order to have $\cU_{i,j}(t_{ij}) \geq \cU_{i,k}(t_{ik})$ and $\cU_{i,j}(t_{ij}^+) < \cU_{i,k}(t_{ik}^+)$, we must have the border of this partition pass through the square containing $(t_{ij},t_{ik})$.  However, we show that only a small number of squares will contain a piece of this border, enabling us to bound the probability of such an event as $\epsilon \rightarrow 0$.

\paragraph{Proof intuition}
The border of any monotone preference partition, when projected on the two dimensional plane $T_{ij}\times T_{ik}$ of the types $(t_{ij}, t_{ik})$ for two items, must be a curve that corresponds to a monotone non-decreasing function of $t_{ij}$. Thus any two squares that are in the $x+y=u$ diagonal (for some $u$), cannot contain points from both partitions as that would imply that there is a point of the border in both squares, which would subsequently imply that these two points violate the monotonicity of the border. Since there are at most $O(H/\epsilon^2)$ diagonals, there can be at most $O(H/\epsilon^2)$ squares that can be problematic, each with density at most $P\, \epsilon^4$. In total a probability mass of types of at most $O(H\, \epsilon^2)\rightarrow 0$, can be problematic (see \Cref{fig:low-prob}).

\begin{figure}[htpb]
  \centering
  \pgfdeclarelayer{level0}
    \pgfdeclarelayer{level1}
    \pgfdeclarelayer{level2}
    \pgfsetlayers{main,level0,level1,level2}
    \begin{tikzpicture}[scale=.7]
      \begin{pgfonlayer}{level0} 
        \draw[help lines] (0,0) grid[step=.5] (5,5); 
        \draw[very thick, scale=5] (0,0) grid (1,1); 
      \end{pgfonlayer}

      \begin{pgfonlayer}{level1} 
      
      \draw[fill=black!10] (4.5,4.5) rectangle (5,5);
      \draw[fill=black!10] (4,4.5) rectangle (4.5,5);
      \draw[fill=black!10] (3.5,4.5) rectangle (4,5);
      \draw[fill=black!10] (3.5,4) rectangle (4,4.5);
      \draw[fill=black!10] (3,4) rectangle (3.5,4.5);
      \draw[fill=black!10] (3,3.5) rectangle (3.5,4);
      \draw[fill=black!10] (3,3) rectangle (3.5,3.5);
      \draw[fill=black!10] (2.5,3) rectangle (3,3.5);
      \draw[fill=black!10] (2.5,2.5) rectangle (3,3);
      \draw[fill=black!10] (2.5,2) rectangle (3,2.5);
      \draw[fill=black!30] (2,2) rectangle (2.5,2.5);
      \draw[fill=black!10] (2,1.5) rectangle (2.5,2);
      \draw[fill=black!10] (2,1) rectangle (2.5,1.5);
      \draw[fill=black!10] (1.5,1) rectangle (2,1.5);
      \draw[fill=black!10] (1.5,.5) rectangle (2,1);
      \draw[fill=black!10] (1,.5) rectangle (1.5,1);
      \draw[fill=black!10] (1,0) rectangle (1.5,.5);
      \draw[fill=black!10] (.5,0) rectangle (1,.5);
      \draw[fill=black!10] (0,0) rectangle (.5,.5);

      \draw[fill=black!10] (7, 4.5) rectangle (7.5,5);
      \node[] at (8.5, 4.7) {\footnotesize{bad cells}};
      \draw[pattern=north west lines] (7, 3.5) rectangle (7.5,4);
      \node[] at (9.3, 3.7) {\footnotesize{example diagonal}};
      \draw[fill=black] (7, 2.8) rectangle (7.5,2.85);
      \node[] at (8.4, 2.9) {\footnotesize{border}};
      \end{pgfonlayer}
    
      \begin{pgfonlayer}{level1} 
      
      \draw[pattern=north west lines] (4,0) rectangle (4.5,.5);
      \draw[pattern=north west lines] (3.5,0.5) rectangle (4,1);
      \draw[pattern=north west lines] (3,1) rectangle (3.5,1.5);
      \draw[pattern=north west lines] (2.5,1.5) rectangle (3,2);
      \draw[pattern=north west lines] (2,2) rectangle (2.5,2.5);
      \draw[pattern=north west lines] (1.5,2.5) rectangle (2,3);
      \draw[pattern=north west lines] (1,3) rectangle (1.5,3.5);
      \draw[pattern=north west lines] (.5,3.5) rectangle (1,4);
      \draw[pattern=north west lines] (0,4) rectangle (.5,4.5); 
      
      \draw[thick] (0,0) to[out=0,in=190] (5,5);
      \node[] at (5.3, -0.1) {\footnotesize{$t_{ij}$}};
      \node[] at (-0.5, 4.9) {\footnotesize{$t_{ik}$}};
      \end{pgfonlayer}
    \end{tikzpicture}
    \caption{Pictorial representation of proof arguments.}\label{fig:low-prob}
\end{figure}
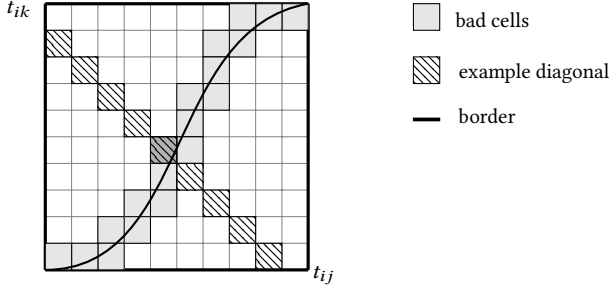

\paragraph{Formal argument} We can index the grid of squares as an ordered pair $(x,y)$ where $x$ and $y$ are integers in the range $[1,\lceil H/\epsilon^2 \rceil]$.  Square $(x,y)$ contains the points $((x-1)\epsilon^2,x\epsilon^2]\times ((y-1)\epsilon^2,y\epsilon^2]$.  In the edge cases, index 1 corresponds to $[0,\epsilon^2]$ inclusive and index $\lceil H/\epsilon^2 \rceil$ corresponds to $[\epsilon(\lceil H/\epsilon^2 \rceil-1),H]$.

We claim that, for any two squares $(x_1,y_1),(x_2,y_2)$ containing points from both sides of the partition, we must have $x_1+y_1 \ne x_2+y_2$.  Assume for the sake of contradiction that $x_1+y_1 = x_2+y_2$ and WLOG $x_1<x_2$, $y_1>y_2$.  Say we had $(t_{ij,1},t_{ik,1})$ in square $(x_1,y_1)$ with $\cU_{i,j}(t_{ij,1}) \geq \cU_{i,k}(t_{ik,1})$ and $(t_{ij,2},t_{ik,2})$ in square $(x_2,y_2)$ with $\cU_{i,j}(t_{ij,2}) < \cU_{i,k}(t_{ik,2})$.  We cannot have $\cU_{i,k}(t_{ik,1})<\cU_{i,k}(t_{ik,2})$.  We must have $t_{ik,1}>t_{ik,2}$ since $y_1>y_2$ and $\cU_{i,k}$ is non-decreasing.  However, we cannot have $\cU_{i,k}(t_{ik,1}) \geq \cU_{i,k}(t_{ik,2})$ as that would imply $\cU_{i,j}(t_{ij,1})>\cU_{i,j}(t_{ij,2})$.  We must have $t_{ij,1}<t_{ij,2}$ since $x_1<x_2$ and $\cU_{i,j}$ is non-decreasing, a contradiction.

Thus, the partition border can only pass through one square along the diagonal of squares $(x,y)$ satisfying $x+y=u$.  Since $x,y$ are integers in the range $[1,\lceil H/\epsilon^2 \rceil]$, we have $x+y \in [2, 2\lceil H/\epsilon^2 \rceil]$.  Therefore, the partition border passes through at most $2H/\epsilon^2 + 1$ squares.

Then, since $D_i$ is a bounded distribution and is absolutely continuous with respect to the Lebesgue measure, the probability density function of $D_{ij}$ is bounded for every $j$.  Thus, there is some finite $P$ that upper bounds the PDF of the joint distribution $D_{ij}\times D_{ik}$ for every pair $j,k$.  So, the probability of $(t_{ij},t_{ik})$ belonging to any specific square is at most $P(\epsilon^2)^2$.  Therefore, the probability that $(t_{ij},t_{ik})$ belongs to a square containing a piece of the partition border is $\leq 2HP\epsilon^2+P\epsilon^4$.  So,

\begin{align*}
   \sum_{k \ne j}\Pr_{t_i \sim D_i}\ps{ t_i \in R_{i,j} \wedge t_i^+ \in R_{i, k} }&\leq \sum_{k \ne j} (2HP\epsilon^2+ P\epsilon^4)
    \leq m^2\, H\, (2\, H + \epsilon^2)\, P\, \epsilon^2
\end{align*}

as desired.
\end{proof}

\section{Proof of Lemma~\ref{lem:rand-ea-eq}}\label{app:rand-ea-eq}

\begin{blanklemma2}
 Let $\bb$, be any mixed BNE of the $\mathsf{rand}-EA(e)$ auction, with entry fees $e=\{e_i\}$, for type vector distribution $D=\times_{j\in[m]} D_j$, where $D_j=\times_i D_{ij}$. Let $\tilde{\bb}_i^j: T_{ij} \to \Delta(A_{ij})$, denote the marginal action distribution of player $i$ on item $j$ conditional only on her type $t_{ij}$ for item $j$. Then, $\tilde{\bb}^j = \{\tilde{\bb}_i^j\}$ is a mixed BNE of the $A$ auction for item $j$, when run in isolation. Moreover:
\begin{equation}
    \efrev^{\bb}(\REA(e)) \geq (1-\delta)\efrev^{\tilde{\bb}}(A,D,e)
\end{equation}
where $\efrev^{\bb}(\REA(e))$ represents the revenue of the $\REA$ auction solely coming from the entry fees and
$$\efrev^{\tilde{\bb}}(A,D,e) = \sum_i e_i\Pr_{t_i \sim D_{i}}\ps{\sum_j u^{\tilde{\bb}}_{ij}(t_{ij}) \geq e_i}$$
\end{blanklemma2}

\begin{proof}[Proof of \Cref{lem:rand-ea-eq}]
Consider a mixed equilibrium strategy $\bb=\{\bb_i\}_{i\in [n]}$ of the $\mathsf{rand}-EA$ auction. Let $\bb_i^j: T_i \to \Delta(A_{ij})$, \footnote{Where $\Delta(A_{ij})$ denotes the set of distributions over bids submitted by player $i$ to the auction for item $j$.} denote the mapping from a type $t_i\in T_i$ to the marginal distribution of actions of bidder $i$ for item $j$, conditional on type $t_i$, under the mixed equilibrium $\bb$. More concretely, the (probability) density function of distribution $\bb_i^j(t_i)$ is given by
\[
p_{\bb_i^j (t_{i})}(b) = \mathbb{E}_{(z_i, a_i)\sim \bb_i(t_i)}\ps{1\pb{a_i^j = b}}
\]
Moreover, let $\tilde{\bb}_i^j: T_{ij} \rightarrow \Delta(A_{ij})$ denote the marginal distribution of actions on the auction for item $j$ conditional only on her type $t_{ij}$ for item $j$ and marginalizing her types for other items. More concretely, the (probability) density function of distribution $\tilde{\bb}_i^j(t_{ij})$ is given by
\[
p_{\tilde{\bb}_i^j (t_{ij})}(b) = \EE_{t_{i, -j} \sim D_{i, -j}}\ps{\EE_{(z_i, a_i)\sim \bb_i(t_i)}\ps{1\pb{a_i^j = b}}}
\]

The interim utility of bidder $i$ with type $t_i$, in the $\mathsf{rand}-EA$ auction, is given by
\[
u_{i}^\bb(t_i) = \sum_{j=1}^m \EE_{t_{-i}\sim D_{-i}}\ps{\EE_{(z, a)\sim \bb(t)}\ps{(z_i\,(1-\delta) + \delta) \cdot u^*_{ij} (t_i;a)}} - \EE_{(z_i, a_i)\sim \bb_i(t_i)}\ps{z_i\, (1-\delta) \, e_i},
\]where $u^*_{ij} (t_i;a)$ denotes the ex-post utility of bidder $i$ in the auction for item $j$ under bid profile $a$. 
Since the ex-post utility $u_{ij}^*$ depends only on $t_{ij}$ and the bid profile $a^j$ for item $j$, we can re-write the above expression for interim utility as
\begin{align*}
u_{i}^\bb(t_i) =~& \sum_{j=1}^m \EE_{t_{-i}\sim D_{-i}}\ps{\EE_{(z, a)\sim \bb(t)}\ps{(z_i\,(1-\delta) + \delta) \cdot u^*_{ij} (t_{ij};a^j)}} - \EE_{(z_i, a_i)\sim \bb_i(t_i)}\ps{z_i\, (1-\delta) \, e_i}
\end{align*}
For simplicity let $G_{-i}^j\in \Delta(A_{-i, j})$ denote the distribution of other player actions at the auction of item $j$ under the mixed BNE $\bb$ of the $\mathsf{rand}-EA$ auction. Moreover, observe that this is the same distribution as first drawing a random type $t_{i',j}$ of each opponent $i'$ for item $j$ and then drawing an action for that player from the marginal distribution $\tilde{b}_{i'}^j(t_{i', j})$. Then: 
\begin{align*}
u_{i}^\bb(t_i) =~& \sum_{j=1}^m \EE_{(z_i, a_i)\sim \bb_i(t_i)}\ps{(z_i\,(1-\delta) + \delta) \cdot \EE_{a_{-i}^j \sim G_{-i}^j}\ps{u^*_{ij} (t_{ij};a^j)}} - \EE_{(z_i, a_i)\sim \bb_i(t_i)}\ps{z_i\, (1-\delta) \, e_i}
\end{align*}

Let $U_{ij}(t_{ij}; a_i^j) = \EE_{a_{-i}^j \sim G_{-i}^j}\ps{u^*_{ij} (t_{ij};a^j)}$, then:
\begin{align*}
u_{i}^\bb(t_i) =~& \sum_{j=1}^m \underbrace{\EE_{(z_i, a_i)\sim \bb_i(t_i)}\ps{(z_i\,(1-\delta) + \delta) \cdot U_{ij}(t_{ij}; a_i^j)}}_{A_{ij}(t_i)} - \EE_{(z_i, a_i)\sim \bb_i(t_i)}\ps{z_i\, (1-\delta) \, e_i}
\end{align*}

Let $u_{ij}^{\bb}(t_i) = \max_{a_i^j\in A_{ij}} U_{ij}(t_{ij}; a_i^j)$. Now suppose that the distribution $b_i(t_i)$ submits with probability $\rho>0$ actions $a_i^j$ that achieve utility $U_{ij}(t_{ij}; a_i^j) \leq u_{ij}^{\bb}(t_i) - \epsilon$ for $\epsilon > 0$. Then observe that the player can deviate and strictly increase their utility by submitting action $\arg\max_{a_i^j\in A_{ij}} U_{ij}(t_{ij}; a_i^j)$, whenever they would have submitted any such sub-optimal action $\tilde{a}_i^j$. This is a strictly improving deviation since, it leads to an improvement of at least $\delta\, \epsilon\, \rho$. Thus we have that, when $a_i$ is drawn from distribution $b_i(t_i)$, then with probability $1$: $U_{ij}(t_{ij}; a_i^j)=u_{ij}^{\bb}(t_i)$. We can then re-write $A_{ij}(t_i)$:
\begin{align*}
    A_{ij}(t_i) =~& \EE_{(z_i, a_i)\sim \bb_i(t_i)}\ps{(z_i\,(1-\delta) + \delta) \cdot u_{ij}^{\bb}(t_i)}\\
    =~& \p{\EE_{(z_i, a_i)\sim \bb_i(t_i)}\ps{z_i}\,(1-\delta) + \delta} \cdot u_{ij}^{\bb}(t_i)\\
    =~& \p{\EE_{(z_i, a_i)\sim \bb_i(t_i)}\ps{z_i}\,(1-\delta) + \delta} \cdot \EE_{(z_i, a_i)\sim \bb_i(t_i)}\ps{U_{ij}(t_{ij}; a_i^j)}
\end{align*}
Now observe, that since $U_{ij}(t_{ij}; a_i^j)$ is independent of $t_{i, -j}$, we then have that:
\begin{equation}
    \EE_{(z_i, a_i)\sim \bb_i(t_i)}\ps{U_{ij}(t_{ij}; a_i^j)} = \EE_{a_i^j \sim \tilde{\bb}_i^j(t_{ij})}\ps{U_{ij}(t_{ij}; a_i^j)} 
\end{equation}
Now observe, that the latter is the interim utility of player $i$ in a single item auction for item $j$, where players use bid strategies $\tilde{\bb}^j=\pb{\tilde{\bb}_i^j}_{i\in [n]}$, denoted as $u_{ij}^{\tilde{\bb}^j}(t_{ij})$. Thus we can write a player's interim utility in the $\mathsf{rand}-EA$ auction in terms of the latter interim utility as:
\begin{align*}
u_{i}^\bb(t_i) =~& \sum_{j=1}^m \p{\EE_{(z_i, a_i)\sim \bb_i(t_i)}\ps{z_i}\,(1-\delta) + \delta} \cdot u_{ij}^{\tilde{\bb}^j}(t_{ij}) - \EE_{(z_i, a_i)\sim \bb_i(t_i)}\ps{z_i\, (1-\delta) \, e_i}
\end{align*}
If we denote with $q_i^{\bb_i}(t_i) = \EE_{(z_i, a_i)\sim \bb_i(t_i)}\ps{z_i}$, the marginal probability of entry with type $t_i$ under the mixed BNE $\bb$, then:
\begin{align*}
u_{i}^\bb(t_i) =~& \sum_{j=1}^m \p{q_i^{\bb_i}(t_i)\,(1-\delta) + \delta} \, u_{ij}^{\tilde{\bb}^j}(t_{ij}) - q_i^{\bb_i}(t_i)\, (1-\delta) \, e_i \\
=~& q_i^{\bb_i}(t_i)\, (1-\delta)\, \p{\sum_{j=1}^m u_{ij}^{\tilde{\bb}^j}(t_{ij}) - e_i} + \delta\, \sum_{j=1}^m u_{ij}^{\tilde{\bb}^j}(t_{ij})
\end{align*}

Now we argue that the marginal distribution mappings $\tilde{\bb}^j=\pb{\tilde{\bb}_i^j}_{i\in [n]}$, must constitute a mixed BNE of the single item auction $A$ for item $j$, if run in isolation. Suppose that this was not the case. This means that there is some player $i$ that has a profitable deviating strategy, i.e. that has some action $\tilde{a}_i^j$, such that for some $\epsilon > 0$:
\begin{equation}
    u_{ij}^{\tilde{\bb}^j}(t_{ij}) \leq U_{ij}(t_{ij}; \tilde{a}_i^j) - \epsilon 
\end{equation}
However, in that case there is a profitable deviation of player $i$ in the $\mathsf{rand}-EA$ auction, since if player $i$ was always submitting action $\tilde{a}_i^j$ on item $j$, instead of her prior bid, she could increase her interim utility by at least $\delta\, \epsilon$.

Finally, observe that a player enters the $\mathsf{rand}-EA$ auction deterministically whenever:
\begin{align*}
    \sum_{j=1}^m u_{ij}^{\tilde{\bb}^j}(t_{ij}) - e_i > 0
\end{align*}
otherwise there is a profitable deviation. Thus the probability of entry is at least:
\begin{align*}
    \Pr_{t_i\sim D_i}\left[\sum_{j=1}^m u_{ij}^{\tilde{\bb}^j}(t_{ij}) - e_i > 0\right]
\end{align*}

Thus the entry fee revenue collected by the $\mathsf{rand}-EA$ auction at \emph{any mixed BNE} equilibrium $\bb$ is at least:
\begin{align}
    \efrev^{\bb}(\mathsf{rand}-\EA(e)) &\geq (1-\delta)\sum_i e_i\Pr_{t_i\sim D_i}\left[\sum_{j=1}^m u_{ij}^{\tilde{\bb}^j}(t_{ij}) - e_i > 0\right]\\
    &=(1-\delta)\cdot \efrev^{\tilde{\bb}}(A,D,e)
\end{align}


\end{proof}

\section*{Acknowledgements}
The first author was supported by NSF Awards IIS-1741137, CCF-1617730 and CCF-1901292, by a Simons Investigator Award, by the DOE PhILMs project (No. DE-AC05-76RL01830), and by the DARPA award HR00111990021. The fifth author was supported in part by a Simons Investigator Award and NSF Award CCF 1715187. This work was done in part while the first author was visiting Microsoft Research-New England.

\bibliographystyle{alpha}
\bibliography{ec2020}

\end{document}